\documentclass[11pt]{article}

\usepackage{etex}
\usepackage[thinlines]{easytable}
\usepackage{algorithmic, algorithm}
\usepackage{amsmath,amsfonts,amssymb,tikz}
\usepackage[margin=1in]{geometry}
\usepackage{amsthm} 
\usepackage{csquotes}
\usepackage[all]{xy}
\usepackage[bottom]{footmisc}
\usepackage{enumitem}
\setlist{parsep = -0em, itemsep = 0.25em}
\usepackage{prettyref} 
\usepackage{hyperref}
\usepackage{bm, bbm}
\usepackage{romannum}

\newtheorem{theorem}{Theorem}[section]

\newtheorem{conj}[theorem]{Conjecture}
\newtheorem{lemma}[theorem]{Lemma}
\newtheorem{corollary}[theorem]{Corollary}

\newtheorem{claim}[theorem]{Claim}

\newtheorem{definition}[theorem]{Definition}

\newtheorem{fact}[theorem]{Fact}

\makeatletter
\newtheorem*{rep@theorem}{\rep@title}
\newcommand{\newreptheorem}[2]{%
\newenvironment{rep#1}[1]{%
\def\rep@title{#2 \ref{##1}}%
\begin{rep@theorem}}%
{\end{rep@theorem}}}
\makeatother

\newreptheorem{theorem}{Theorem}
\newreptheorem{lemma}{Lemma}
\newreptheorem{definition}{Definition}

\newenvironment{proofsketch}{\noindent{\bf Proof Sketch.}}%
{\hspace*{\fill}$\Box$\par}
\newenvironment{proofof}[1]{\smallskip\noindent{\bf Proof of #1.}}%
{\hspace*{\fill}$\Box$\par}

\newcommand{\pref}{\prettyref}
\newrefformat{lem}{Lemma \ref{#1}}
\newrefformat{cl}{Claim \ref{#1}}
\newrefformat{prop}{Proposition \ref{#1}}
\newrefformat{thm}{Theorem \ref{#1}}
\newrefformat{cor}{Corollary \ref{#1}}
\newrefformat{cha}{Chapter \ref{#1}}
\newrefformat{sec}{Section \ref{#1}}
\newrefformat{tab}{Table \ref{#1}}
\newrefformat{fig}{Figure \ref{#1}}
\newrefformat{conj}{Conjecture \ref{#1}}
\newrefformat{prot}{Protocol \ref{#1}}
\newrefformat{fact}{Fact \ref{#1}}
\newrefformat{def}{Definition \ref{#1}}


\newcommand{\E}{{\mathbb{E}}}
\newcommand{\eps}{\varepsilon}

\newcommand{\cB}{\mathcal{B}}

\newcommand{\cP}{\mathcal{P}}

\newcommand{\cM}{\mathcal{M}}

\newcommand{\bl}{\bm{\ell}}
\newcommand{\bj}{\bm{j}}

\newcommand{\Patrascu}{P\u{a}tra\c{s}cu}

\usepackage{xspace}

\newcommand{\poly}{{\operatorname{poly}\xspace}}

\newcommand{\SEL}[2]{\mathsf{SEL}^{#2}_{#1}\xspace}

\newcommand{\AND}{\mathsf{AND}}
\newcommand{\DISJ}{\mathsf{DISJ}}
\newcommand{\KL}[2]{\mathsf{D}_{KL} ( {#1} || {#2} )}

\newcommand{\EC}[1]{}


\newcommand{\Pat}{P\v{a}tra\c{s}cu}

\floatstyle{boxed}
\newfloat{Protocol}{H}{pro}

\title{An Adaptive Step Toward the Multiphase Conjecture}

\author{Young Kun Ko\thanks{NYU Courant, {\tt ykk254@cs.nyu.edu}} 
\and Omri Weinstein\thanks{Columbia University. Email: \texttt{omri@cs.columbia.edu}. 
Research supported by NSF CAREER award CCF-1844887.}}
\date{\today}

\begin{document}

\maketitle
\thispagestyle{empty}
\begin{abstract}

In 2010, \Pat~proposed the following three-phase dynamic problem, 
as a candidate for proving \emph{polynomial} lower bounds on the operational time of dynamic data structures:  

\begin{itemize}
\item  \bf \Romannum{1}\rm: Preprocess a collection of sets $\vec{S} = S_1, \ldots , S_k \subseteq [n]$, where $k=\poly(n)$.  
\item  \bf \Romannum{2}\rm: A set $T\subseteq [n]$ is revealed, and the data structure updates its memory. 
\item  \bf \Romannum{3}\rm: An index $i \in [k]$ is revealed, and the data structure must determine if $S_i\cap T=^? \emptyset$.  
\end{itemize}

\Pat~conjectured that any data structure for the Multiphase problem must make $n^\eps$ cell-probes 
in either Phase \Romannum{2} or \Romannum{3}, and showed that this would imply similar \emph{unconditional} lower bounds on 
many important dynamic data structure problems. Alas, there has been almost no progress on this conjecture 
in the past decade since its introduction. 
We show an $\tilde{\Omega}(\sqrt{n})$ cell-probe lower bound on the Multiphase problem 
for data structures with general (adaptive) updates, and queries with 
unbounded but ``layered" adaptivity. 
This result captures all known set-intersection data structures  and significantly strengthens previous 
Multiphase lower bounds, which only captured non-adaptive data structures.

Our main technical result is a communication lower bound on a 4-party variant of P\v{a}tra\c{s}cu's \emph{Number-On-Forehead}  Multiphase game, 
using information complexity techniques.  We also show that a lower bound on P\v{a}tra\c{s}cu's original NOF game would 
imply a polynomial ($n^{1+\eps}$) lower bound on the number of wires of any 
constant-depth circuit with \emph{arbitrary} gates computing a random $\tilde{O}(n)\times n$ \emph{linear} 
operator $x \mapsto Ax$, a long-standing open problem in circuit complexity. 
This suggests that the NOF conjecture is much stronger than its data structure counterpart. 

\end{abstract}

\newpage

\pagenumbering{arabic}
\section{Introduction}

Proving unconditional lower bounds on the operational time of dynamic data structures has 
been a challenge since the introduction of the \emph{cell-probe} model \cite{Yao79}. 
In this model, the data structure needs to support a sequence of $n$ online updates and queries, 
where the operational cost is measured only by the number of \emph{memory accesses} (``probes") 
the data structure makes to its memory, whereas all computations on probed cells are completely 
free of charge. A natural question to study is the tradeoff between the \emph{update} time $t_u$ 
and \emph{query} time $t_q$ of the data structure for supporting the underlying dynamic problem.  
Cell-probe lower bounds provide a compelling answer to this question, as they are 
purely information-theoretic and independent of implementation constraints, hence apply to 
any imaginable data structure. Unfortunately, the abstraction of the cell-probe model also comes 
at a price, and  the highest explicit lower bound known to date,  on \emph{any} dynamic 
problem, is merely polylogarithmic ($\max\{t_u, t_q\} \geq \tilde{\Omega}(\lg^2 n)$, see e.g., \cite{Lar12, LWY17} 
and references therein). In 2010, \Pat~\cite{patrascu:multiphase} proposed the following dynamic set-disjointness problem, 
known as the \emph{Multiphase problem}, 
as a candidate for proving \emph{polynomial} lower bounds on the operational time of dynamic 
data structures. 
The problem proceeds in 3 ``phases":

\begin{itemize}
\item \bf P\Romannum{1}\rm: Preprocess a collection of $k=\poly(n)$ sets $\vec{S} = S_1, \ldots , S_k \subseteq [n]$. 
\item \bf P\Romannum{2}\rm: A set $T\subseteq [n]$ is revealed, and the data structure can update its memory in $O( n t_u )$ time. 
\item \bf P\Romannum{3}\rm: An index $i \in [k]$ is revealed, and the data structure must determine if $S_i \cap T =^? \emptyset$ 
in $O(t_q)$-time.
\end{itemize}

\Pat~conjectured that any data structure solving  the Multiphase problem must make $\max\{t_u, t_q\}\geq n^{\eps}$ cell-probes, 
and showed that such a polynomial lower bound would imply similar polynomial lower bounds on many important dynamic 
data structure problems, including dynamic reachability in directed graphs and online matrix multiplication (for the broad implications 
and further context of the Multiphase conjecture within fine-grained complexity, see \cite{patrascu:multiphase, Henzinger2015}). 
In the same paper, \Pat~\cite{patrascu:multiphase} 
proposed an approach to prove an unconditional cell-probe lower bound on the Multiphase problem, 
by reduction to the following 3-party \emph{Number-On-Forehead} (NOF) communication game $\SEL{\DISJ_n}{k}$, henceforth 
called the \emph{Multiphase Game} : 
\begin{itemize}
\item Alice receives a collection of sets $\vec{S} = S_1, \ldots , S_k \subseteq [n]$ and a random index $i\in_R [k]$. 
\item Bob receives  a set $T\subseteq [n]$ and the index $i$.  
\item Charlie receives $\vec{S}$ and $T$ (but not $i$). 
\end{itemize}
Thus, one can think of $i$ as being on Charlie's forehead, $T$ being on Alice's forehead, and $\vec{S}$ being 
on Bob's forehead.  The goal of the players is to determine if $S_i\cap T =^? \emptyset$, where communication proceeds in the 
following way: First, Charlie sends a message (``advice") $U=U(\vec{S}, T)$ \emph{privately to Bob}. Thereafter, Alice 
and Bob continue to compute $\DISJ_n (S_i,T)$ in the standard 2-party model. Denoting by $\Pi$ the 
second stage protocol,  \Pat~made the following conjecture: 

\begin{conj}[Multiphase Game Conjecture, Conjecture 9 of \cite{patrascu:multiphase}] \label{MPH_NOF_conj}
Any 3-party NOF protocol for the Multiphase game with $|U| =o(k)$ bits of advice must have $|\Pi| > n^\eps$ communication. 
\end{conj}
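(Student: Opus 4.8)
\medskip

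\noindent The natural plan is to attack Conjecture~\ref{MPH_NOF_conj} by information complexity, reducing a short-advice protocol to an advice-free two-party \DISJ\ protocol. First I would fix a hard distribution $\mu$ on $(\vec{S},T)$ built from the classical set-disjointness hard distribution: sample a switch vector $D\in\{0,1\}^n$ and reveal it as public randomness; for each coordinate $j$ with $D_j=1$ set $T_j$ to a uniform bit and $S_{\ell,j}=0$ for all $\ell$, and for each $j$ with $D_j=0$ set $T_j=0$ and let each $S_{\ell,j}$ be an independent uniform bit, so that $\DISJ_n(S_\ell,T)\equiv 0$ on $\mu$. Two features make $\mu$ useful: (i) each coordinate pair $(S_\ell|_j,T|_j)$ is the hard single-coordinate instance for $\mathsf{AND}$, so any protocol correct on \emph{all} inputs already has information cost $\Omega(n)$ on $\mu$ even with $D$ public; and (ii) conditioned on $(D,T)$ the sets $S_1,\dots,S_k$ are mutually independent. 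Feature (ii) lets us cash in the advice bound: since $U=U(\vec{S},T)$ is independent of $i$ and $i$ is uniform in $[k]$,
\[
  \E_{i}\big[\,I(S_i;U\mid T,D)\,\big]
  \;=\;\tfrac1k\sum_{\ell=1}^{k} I(S_\ell;U\mid T,D)
  \;\le\;\tfrac1k\,I(\vec{S};U\mid T,D)
  \;\le\;\tfrac{|U|}{k}\;=\;o(1),
\]
by superadditivity of mutual information under (ii), so by Markov's inequality a $(1-o(1))$-fraction of indices $i^\star$ are \emph{good}, meaning $I(S_{i^\star};U\mid T,D)=o(1)$. Fixing a good $i^\star$ and identifying $(S_{i^\star},T)$ with a two-party \DISJ\ instance, the hope would be to run the information-cost argument on the combined object $(\Pi,U)$: one gets $I(S_{i^\star};\Pi,U\mid T,D)\ge\Omega(n)$ from (i), then peels off $I(S_{i^\star};U\mid T,D)=o(1)$ to conclude $I(S_{i^\star};\Pi\mid U,T,D)\ge\Omega(n)$, hence $|\Pi|\ge\Omega(n)$.

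\medskip

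\noindent That conclusion is far stronger than the conjectured $n^{\eps}$, which already signals that the step does not go through, and pinpointing why is the heart of the matter. The information-cost lower bound for two-party \DISJ\ (the ``cut-and-paste'' / Hellinger-distance machinery) applies only to genuine two-party transcripts, whose distribution factors across the players' inputs in a rectangular way; Charlie's message $U=U(\vec{S},T)$ is a \emph{two-input} (indeed $(k+1)$-input) function and does not obey this structure, so $(\Pi,U)$ cannot be treated as a transcript, and the bound $I(S_{i^\star};\Pi\mid U,T,D)\ge\Omega(n)$ is unjustified. The only obvious repair --- hand the remaining sets $\{S_{\ell}\}_{\ell\ne i^\star}$ to the two players as public randomness so that one of them can reproduce $U$, or at least condition the single-coordinate embedding on them --- backfires: conditioned on the other sets, a very short advice string can be extremely informative about $S_{i^\star}$. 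Concretely, the relevant quantity becomes $\sum_{\ell} I\big(S_\ell;U\mid \{S_{\ell'}\}_{\ell'\ne\ell},T,D\big)$, which, unlike the display above, is \emph{not} controlled by $|U|$: already a one-bit advice equal to the parity of all bits of $\vec{S}$ satisfies $I\big(S_\ell;U\mid \{S_{\ell'}\}_{\ell'\ne\ell},T,D\big)=1$ for every $\ell$, and a cleverer Charlie could make $U$ a short linear sketch of $\vec{S}$ that, query by query, hands over exactly the bit Alice and Bob would otherwise have to compute.

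\medskip

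\noindent This is the obstacle, and I do not expect it to be removable by elementary means, because it is precisely the circuit-complexity barrier isolated elsewhere in this paper. Reading $T\in\{0,1\}^n$ as the variable input and a family $\vec{S}$ as fixing a (near-)linear operator $x\mapsto Ax$ with $\Otilde(n)$ outputs whose $i$-th output equals $\DISJ_n(S_i,T)$, a short advice $U(\vec{S},\cdot)\colon\{0,1\}^n\to\{0,1\}^{o(k)}$ followed by a cheap Alice--Bob protocol recovering coordinate $i$ is exactly a constant-depth circuit with \emph{arbitrary} gates and near-linearly many wires computing $x\mapsto Ax$; ruling this out for a random $A$ is Valiant's long-standing open problem (closely tied to matrix rigidity), which is indeed what Conjecture~\ref{MPH_NOF_conj} is shown to imply in the abstract. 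Consequently the three-party conjecture in full strength appears to lie beyond current techniques, and this is why the present paper instead establishes the $\tilde\Omega(\sqrt n)$ bound for the \emph{four-party, ``layered''-adaptive} variant of the Multiphase game: there the advice naturally decomposes across layers, each layer too short to support the sketching attack above, so the decoupling that fails for the three-party game can be carried out layer by layer.
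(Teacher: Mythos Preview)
The statement is a \emph{conjecture} that the paper does not prove; indeed the paper argues (Theorem~\ref{thm:ciruitwire_informal}) that even its three-round special case would resolve a long-standing open problem on the wire complexity of constant-depth circuits with arbitrary gates computing linear operators. Your write-up is not a proof but a diagnosis of why the na\"{i}ve information-complexity attack fails, and on that count it is largely correct and consistent with the paper's own discussion in Section~2: you correctly observe that $(U,\Pi)$ is not a two-party transcript so the cut-and-paste machinery does not apply, that conditioning on $S_{-i}$ destroys the advice bound (your parity example is exactly the point), and that the obstruction is tied to Valiant's problem. None of this, of course, constitutes a proof of the conjecture, as you yourself acknowledge.

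Your final paragraph, however, mischaracterizes why the paper's restricted (four-party) lower bound goes through. It is not that ``the advice naturally decomposes across layers, each layer too short to support the sketching attack.'' Charlie's advice $U$ to Bob is still a single $o(k)$-bit message in the four-party model. The actual leverage is structural: the only message depending on \emph{all} of $\vec{S}$ is Megan's broadcast $\Pi^{\cM}$, and Megan has no access to $T$ or $U$; hence $I(T;\,S_{\cP}\,\Pi^{\cM}_{\cP})=0$ (Equation~\eqref{eq:non-adaptive}), so Megan cannot correlate $S_i$ with $T$. Once Megan has spoken, the residual Alice--Bob exchange is a genuine two-party protocol on inputs $(S_i,i)$ versus $(U,T,i)$, for which the Markov property \emph{does} hold round by round; this is precisely what drives Claim~\ref{cl:roundelimination} and yields the correlation bound $I(S_{i_\ell};T\mid Z^{\DISJ})\le O(|UT|/p)$. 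The sketching attack is blocked not by message length but by the sketching party being blind to $T$. Relatedly, the hard distribution the paper uses is the i.i.d.\ product distribution $\cB_\gamma$ with $\gamma=\Theta(1/\sqrt n)$, not the switch-variable distribution you propose; this choice is essentially forced by the $\tilde O(\sqrt n)$ upper bound of \cite{littleadvice}, and is why the bound obtained is $\Omega(\sqrt n)$ rather than $\Omega(n)$.
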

  
The (na\"{i}ve) intuition for this conjecture 
is that, since Charlie's advice is independent of $i$, it can only provide 
very little useful information about the interesting subproblem $\DISJ_n (S_i,T)$ (assuming $|U|=o(k)$), 
and hence Alice and Bob might as well solve the problem in the standard 2-party model. 
This intuition turns out to be misleading, and in fact when $S_i$'s and $T$ are \emph{correlated},  
it is simply false -- Chattopadhyay, Edmonds, Ellen and Pitassi~\cite{littleadvice} showed a deterministic (2-round) 
NOF protocol for the Multiphase game with a total of $O(\sqrt{n}\log k) = \tilde{O}(\sqrt{n})$ 
communication, 
whereas the 2-party communication complexity of set-disjointness is $\Omega(n)$, even 	
randomized. Surprisingly, they also show that Conjecture~\ref{MPH_NOF_conj} is equivalent, 
up to $O(\log k)$ communication factor, for deterministic and randomized protocols. 
Nevertheless, Conjecture~\ref{MPH_NOF_conj} still stands  
for \emph{product distributions} \cite{littleadvice} (incidentally,  the communication complexity 
of set-disjointness under product distributions is $\tilde{\Theta}(\sqrt{n})$ \cite{BFS86, Hastad2007}). 

The technical centerpiece of this paper is an $\Omega(\sqrt{n})$ lower bound on the NOF Multiphase game, 
for (unbounded-round)  protocols in which \emph{only the first} message of Alice in $\Pi$  
depends on her entire input $\vec{S} = S_1, \ldots , S_k$ (and $i$), while in subsequent rounds $j>1$, Alice's messages 
can depend only \emph{on $S_i$, $i$ and the transcript $\Pi^{<j}$} (No restriction is placed on Bob and Charlie). 
Note that Alice's messages in subsequent rounds 
still heavily depend on \emph{all sets} $\vec{S}$, but only through the 
\emph{transcript} of $\Pi=\Pi(\vec{S})$ so far (this feature better captures 
data structures, which can only adapt based on cells probed so far).  
There is a natural way to view such restricted 3-party NOF protocol in terms 
of an additional player (Megan, holding $\vec{S},i$),   
who, in addition to Charlie's private advice to Bob, can broadcast a single message to \emph{both} Alice and Bob 
(holding $S_i, T$ respectively), 
who then continue to communicate in the standard 2-party model (see \pref{fig:4party}). 
We define this \emph{4-party NOF} model formally  
in \pref{sec:nof}. Our main technical result is the following lower bound on such 
NOF protocols:

\begin{theorem}[Informal]  
\label{thm:multiphase_informal}
Any Restricted NOF protocol $\Gamma=(U,\Pi)$ for the Multiphase game with $|U| =o(k)$ 
bits of advice must have $|\Pi| > \Omega(\sqrt{n})$ communication. 
\end{theorem}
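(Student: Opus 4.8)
The plan is to establish the contrapositive by reducing, on a suitable hard distribution, to the $\Omega(\sqrt n)$ lower bound for two-party set-disjointness under a product distribution. Fix a product distribution $\mu_X\times\mu_Y$ on $\{0,1\}^n\times\{0,1\}^n$ on which $\DISJ_n$ has distributional communication complexity $\Omega(\sqrt n)$ — e.g.\ the balanced, size-$\Theta(\sqrt n)$ distribution of \cite{BFS86,Hastad2007} — and let $\mu$ be the Multiphase distribution in which $T\sim\mu_Y$, the sets $S_1,\dots,S_k\sim\mu_X$ are i.i.d.\ and independent of $T$, and the queried index $i^*\sim\mathrm{Unif}[k]$ is independent of everything. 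Assume toward a contradiction that $\Gamma=(U,\Pi)$ solves the Multiphase game on $\mu$ with $|U|=o(k)$ and error $\le 1/10$ while $|\Pi|=o(\sqrt n)$; writing $\Pi^{(j)}$ for the transcript (Megan's broadcast $M$ included) conditioned on $i^*=j$, the goal is to contradict this by showing $\E_{i^*}\,I(S_{i^*};\Pi^{(i^*)}\mid T,U,i^*)=\Omega(\sqrt n)$ — impossible since each term is at most $H(\Pi^{(j)})\le|\Pi|$.

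The first ingredient is that $o(k)$ bits of advice are useless for almost all indices. Since the $S_j$ are i.i.d.\ given $T$, the chain rule yields $\sum_{j=1}^{k} I\!\left(S_j;U\mid T,S_{<j}\right)=I(\vec S;U\mid T)\le H(U)\le|U|=o(k)$, and as $S_j\perp(T,S_{<j})$ also $\sum_j I(S_j;U\mid T)\le o(k)$; hence for a $(1-o(1))$-fraction of ``good'' indices $j$ we have $I(S_j;U\mid T)=o(1)$, so conditioned on $(T,U)$ the set $S_j$ is $o(1)$-close to $\mu_X$ and the advice carries essentially no information about $\DISJ_n(S_j,T)$ beyond what $T$ already gives. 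Intuitively, a short sketch cannot pre-answer $\Omega(k)$ of the $k$ independent subproblems. (That the telescoping uses the prefix $S_{<j}$ rather than all of $S_{-j}$ is essential and dictates which randomness may be made public later: the prefix may, the suffix may not.)

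The second and genuinely new ingredient is neutralizing Megan's broadcast, which previous (non-adaptive) Multiphase lower bounds did not have to face. The message $M=M(\vec S,i^*)$ depends on all of $\vec S$, but — after conditioning on the prefix — only through $S_{i^*}$ and the suffix $S_{>i^*}$, and $S_{>i^*}\perp(S_{i^*},T)$; so $M$ behaves exactly like Alice's first message computed from public randomness, it contributes at most $|M|\le|\Pi|$ bits, and $I(S_{i^*};M\mid T,U)\le|M|$, so either $|M|=\Omega(\sqrt n)$ (and we are done outright) or $M$ can be folded, at lower-order cost, into the transcript and into the ``side information'' that Bob holds alongside $U$. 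For a good index $i^*$, conditioning on $U$ (and the prefix) — but crucially not on $T$ — presents $\Pi^{(i^*)}$ as (the transcript of) an interactive two-party protocol of cost $\le|\Pi|$ in which Alice holds $S_{i^*}\approx\mu_X$, Bob holds $T$ together with side information of $I$-value $o(\sqrt n)$ about $S_{i^*}$, and the output is $\DISJ_n(S_{i^*},T)$; since Alice's messages are oblivious to $U$ and $T$ while the conditional law of $T$ averaged over the conditioning is exactly $\mu_Y$, this forces $\Omega(\sqrt n)$ communication (equivalently information) by the disjointness lower bound, net of the $o(\sqrt n)$ of side information. Averaging over good indices gives $|\Pi|=\Omega(\sqrt n)$, the desired contradiction; the $k=\poly(n)$ bookkeeping and the $o(1)$ error accounting are routine.

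The step I expect to be the real obstacle is making the last paragraph's ``essentially a two-party $\DISJ$ instance'' precise. The difficulty is the entanglement of $U$ (and $M$) with both inputs: $U$ can encode $T$ almost entirely, so no conditioned subproblem is itself a hard disjointness instance and one must never condition on $T$; and $U$ depends on $S_{i^*}$ itself — only weakly for a random index by the first ingredient, but not trivially, and intertwined with $M$ through the internal suffix $S_{>i^*}$ — so Bob cannot simply recompute the true $U(\vec S,T)$ and one cannot naively re-sample it either. The bound must therefore be extracted from the mixture over conditionings, exploiting Alice's obliviousness and the fact that the mixed conditional law of $T$ is $\mu_Y$; concretely this calls for a \emph{robust} version of the product-distribution $\Omega(\sqrt n)$ disjointness lower bound that carries the extra conditioning on $U$ through the argument — most plausibly by re-running the coordinate-wise (Jayram / BJKS-style) information-complexity proof of that bound — and that survives an unbounded number of adaptive rounds. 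That robust disjointness bound, together with the two ``neutralization'' ingredients above, is the technical core.
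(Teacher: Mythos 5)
There is a genuine gap at precisely the point you flag as ``the technical core,'' and it is not a gap that can be closed by the route you propose. Your first two ingredients are broadly in the same spirit as the paper's — a chain-rule/averaging argument over indices showing the $o(k)$-bit advice $U$ is nearly useless for a random subproblem, and an observation that Megan's broadcast (the first, $\vec S$-dependent message) contributes little because it is independent of $T$ given the prefix — and some of your inequalities correspond roughly to Claims~\ref{cl:pathsampling} and~\ref{cl:smallcorr}. But your endgame is to present, after conditioning on $U$ and the prefix, the residual interaction as ``essentially a two-party disjointness instance with $o(\sqrt n)$ side information'' and then invoke (a robust version of) the $\Omega(\sqrt n)$ product-distribution $\DISJ$ bound. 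This is exactly what does not work, and the paper makes a point of \emph{not} reducing to the two-party $\DISJ$ lower bound. Conditioning on $U(\vec S,T)$ destroys the rectangle property: the residual transcript $\Pi$ is not a valid two-party protocol (Bob's messages depend on $T$ through $U$ in a way that correlates $S_{i}$ and $T$), so $I(S_{i};T\mid\Pi,U,\dots)$ can be nonzero. ``Re-running the BJKS/Jayram proof'' cannot absorb this, because that proof's crucial step is the Cut-and-Paste Lemma for Hellinger distance, which is a rectangle-property fact and simply does not hold for the conditioned object you have. Your ``mixture over conditionings'' remark identifies the problem but offers no mechanism.

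The paper's actual argument replaces the reduction-to-$\DISJ$ step with something structurally different: it builds a random variable $Z^{\DISJ}$ (explicitly \emph{not} a protocol — it includes $\Pi_{I_{\bl}}$, the prefix sets $S_{I_{<\bl}}$, and Megan's messages $\Pi_{I_{<\bl}}^{\cM}$ along the random path) and proves \emph{simultaneously} that $Z^{\DISJ}$ has information cost $\approx C$ (Claim~\ref{cl:pathsampling} plus Eq.~\eqref{eq:non-adaptive}) and, via a round-elimination argument that critically uses the layered-adaptivity restriction (Claim~\ref{cl:roundelimination}), residual correlation $I(S_{I_{\bl}};T\mid Z^{\DISJ})\le O(|UT|/p)=o(1)$. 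It then scales this down to a random process $Z^{\AND}(X,Y)$ for the two-bit $\AND$ with information $O(C/n)$ and correlation $o(1/n)$ (Lemma~\ref{lem:reduction}), and rules out such a $Z^{\AND}$ by a new \emph{Robust Cut-and-Paste Lemma} (Lemma~\ref{lem:andbound}): a direct KL-divergence computation on the $2\times2$ conditional distribution, exploiting that Pinsker is lossless near the boundary, which works for arbitrary random variables with small correlation and needs no protocol/Markov structure. Nothing in your proposal supplies this last lemma or an alternative to it; until you have a statement that rules out low-information, low-correlation AND computations for general random processes (or an equivalent), the reduction does not close.
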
 

This lower bound is tight up to logarithmic factors, as the model generalizes the upper bound of 
\cite{littleadvice} (See \pref{sec:appendixLB}).   
This suggests that the NOF model we study is both subtle and powerful. Indeed, while the aforementioned 
restriction may seem somewhat technical, we show that removing it by allowing as little as \emph{two rounds} 
of Alice's messages to depend on her entire input $\vec{S}$, would lead to a major breakthrough 
in circuit lower bounds -- see \pref{thm:ciruitwire_informal} below. 
Interestingly, the Multiphase conjecture itself does not have this implication, 
since dynamic data structures only have limited and local access to $\vec{S}$, through the probes (``transcript") 
of the query algorithm, and hence induce weaker NOF protocols.

\paragraph{Implications to dynamic data structure lower bounds}

In contrast to the \emph{static} cell-probe model, \emph{adaptivity} plays a dramatic role when it comes to dynamic data structures. 
In \cite{BL15}, Brody and Larsen consider a variation of the Multiphase problem with $(\lg n)$-bit updates (i.e., the 2nd phase set is of 
cardinality $|T|=1$), and show that any dynamic data structure whose query algorithm is \emph{non-adaptive}\footnote{An algorithm 
is non-adaptive, if the addresses of probed memory cells are predetermined by the query itself, and do not depend on content 
of cells probed along the way.} must make $\max\{t_u,t_q\} \geq \Omega(n/w)$ 
cell-probes when the word-size is $w$ bits. Nevertheless, such small-update problems have a trivial ($t_q=O(1)$) 
adaptive upper bound
and therefore 
are less compelling from the prospect of making progress on 
general lower bounds. (By contrast, proving polynomial
cell-probe lower bounds for dynamic problems with \emph{large}  $\poly(n)$-bit updates, like Multiphase, 
even against non-adaptive query algorithms,
already seems beyond the reach of current techniques
\footnote{While intuitively larger updates $|T|=\poly(n)$ only make the problem harder and should therefore 
only be easier to prove lower bounds against, the \emph{total} update time of the 
data structure in Phase \Romannum{2} is also proportional to $|T|$ and hence the data structure has 
potentially much more power as it can ``amortize" its operations. This is why encoding-style 
arguments fail for large updates (enumerating all ${n \choose |T|}=\exp(n)$ possible updates is prohibitive).}).

We prove a polynomial lower bound  on the Multiphase problem, against a much stronger class 
of data structures, 
which we call \emph{semi-adaptive}, defined as follows:  

\begin{definition}[Semi-Adaptive Data Structures] \label{def:semiadaptive}
Let $D$ be a dynamic data structure for the Multiphase problem with general (adaptive) updates. 
Let $\cM(\vec{S})$ denote the memory state of $D$ after the preprocessing Phase \Romannum{1}, 
and let $\Delta(\cM, T)$ denote the set of ($ \leq |T| \cdot t_u$) cells updated in Phase \Romannum{2}. 
$D$ is \emph{semi-adaptive} if its query algorithm in Phase \Romannum{3} operates 
in the following stages (``layers"): 
\begin{itemize}
\item Given the query $i\in [k]$, $D$ may first read $S_i$ \emph{free of charge}.  
\item $D$ (adaptively) reads at most $t_1$ cells from $\cM$. 
\item $D$ (adaptively) reads at most $t_2$ cells from $\Delta(\cM, T)$, and returns the answer $S_i\cap T=^? \emptyset$.  
\end{itemize} 
The update time of $D$ is $t_u$, and the query time is $t_q := t_1+t_2$. 
\end{definition}

Thus, the query algorithm has unbounded but ``layered" adaptivity in Phase \Romannum{3},  
as the model allows only a single alternation between the two layers of memory cells, $\cM$ and $\Delta(\cM, T)$.  
While this restriction may seem somewhat technical, all known set-intersection data structures are special 
cases of the semi-adaptive model (see \cite{Demaine00adaptiveset, BK02, BY04, 
BPP07, CP10, Kopelowitz2015} and references therein). In fact, this model 
allows to solve the Multiphase problem in $t_u=t_q = \tilde{O}(\sqrt{n})$ time (see \pref{sec:appendixUB}), 
indicating that the model is very powerful. 
We remark that even though the set of modified cells $\Delta(\cM, T)$ may be \emph{unknown} to $D$ 
at query time, it is easy to implement a semi-adaptive data structure by maintaining $\Delta(\cM, T)$ in 
a dynamic dictionary \cite{WPP05} that checks membership of cells in $\Delta$, and returns $\perp$ if 
the cell is $\notin \Delta(\cM, T)$. 

Our main result is an essentially tight 
$\tilde{\Omega}(\sqrt{n})$ cell-probe lower bound on the Multiphase problem against  
semi-adaptive data structures. 
This follows from \pref{thm:multiphase_informal} by a simple variation of the reduction 
in~\cite{patrascu:multiphase} : 

\begin{theorem}[Multiphase Lower Bound for Semi-Adaptive Data Structures] \label{thm_dynamic_multiphase_informal}
Let $k > \omega( n )$. Any semi-adaptive data structure that solves the Multiphase problem, 
must have either $n \cdot t_u \geq \Omega(k/w)$ or $t_q \geq \Omega(\sqrt{n}/w)$, in the dynamic cell-probe model with word size $w$.
\end{theorem}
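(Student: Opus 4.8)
The plan is to prove \pref{thm_dynamic_multiphase_informal} by a reduction in the style of \Pat's original argument, using the layered query structure of \pref{def:semiadaptive} to land the induced communication protocol inside the \emph{restricted} NOF model. Concretely, I would show that any semi-adaptive data structure $D$ for the Multiphase problem, with update time $t_u$ and query time $t_q = t_1 + t_2$ (in the cell-probe model with word size $w$), can be simulated by a Restricted NOF protocol $\Gamma = (U, \Pi)$ for the Multiphase game $\SEL{\DISJ_n}{k}$ with $|U| = O(n\, t_u\, w)$ bits of advice and $|\Pi| = O(t_q\, w)$ bits of communication. Granting \pref{thm:multiphase_informal}, this is exactly the desired dichotomy: if $n\, t_u\, w = o(k)$ then $\Gamma$ is covered by \pref{thm:multiphase_informal}, forcing $|\Pi| = \Omega(\sqrt n)$ and hence $t_q = \Omega(\sqrt n / w)$; otherwise $n\, t_u \ge \Omega(k/w)$. (The hypothesis $k > \omega(n)$ is what makes the first branch a genuine polynomial lower bound and is precisely the regime in which \pref{thm:multiphase_informal} operates.) It suffices to handle deterministic, worst-case-correct $D$; the randomized/low-error case is identical once one fixes the hard input distribution underlying \pref{thm:multiphase_informal} and samples the inputs of the three phases from it.

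Here is the construction, phrased in the $4$-party view (Megan holds $(\vec S, i)$ and broadcasts once to Alice, who holds $(S_i, i)$, and Bob, who holds $(T, i)$; Charlie holds $(\vec S, T)$ and privately advises Bob; then Alice and Bob run a standard $2$-party protocol). \textbf{Charlie's advice.} Knowing $\vec S$, Charlie runs $D$'s preprocessing to obtain $\cM(\vec S)$, then simulates Phase~\Romannum{2} on $T$ and records $\Delta(\cM, T)$ --- the set of modified cells together with their new contents --- sending it as $U$; since $|\Delta(\cM,T)| \le |T|\cdot t_u \le n\, t_u$ and each (address, content) pair is $O(w)$ bits (the memory has $\poly(n)$ cells, so $w = \Omega(\log n)$ suffices to address it), $|U| = O(n\, t_u\, w)$. \textbf{Megan's broadcast.} Knowing $\vec S$ (hence $\cM(\vec S)$ and $S_i$), Megan runs the first two stages of $D$'s query algorithm --- reading $S_i$ for free, then adaptively probing $t_1$ cells of $\cM$ --- and broadcasts the ordered list of the $t_1$ contents it reads, of size $O(t_1 w)$. \textbf{Layer-2 simulation.} Since layer~1's probe addresses are determined by $(S_i, i)$ and the contents read so far, Alice (who holds $(S_i,i)$ and received Megan's list) can reconstruct $D$'s state at the end of layer~1; she then runs the last stage, which probes only cells of $\Delta(\cM, T)$, so for each of its $\le t_2$ probes she sends the address to Bob and Bob, holding $U$, returns the content (or $\perp$). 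After at most $t_2$ rounds Alice outputs $D$'s answer $[S_i \cap T = \emptyset]$; this costs $O(t_2 w)$, so $|\Pi| = O((t_1+t_2) w) = O(t_q w)$, and $\Gamma$ answers correctly whenever $D$ does.

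The step that needs care --- and the only place semi-adaptivity is used --- is checking that this protocol genuinely lies in the \emph{restricted} NOF model, i.e., that after Megan's single broadcast no message of Alice's depends on the whole vector $\vec S$. This is exactly what the layering buys: the only portion of the query algorithm that requires arbitrary cells of $\cM$ (equivalently, all of $\vec S$) is layer~1, which is confined to Megan's one-shot message; layer~2 touches only $\Delta(\cM,T)$ (known to Bob via $U$) and $S_i$ (known to Alice), so in the $2$-party phase Alice's messages depend on $\vec S$ only through the transcript so far. A single extra alternation back into $\cM$ would require a second full-$\vec S$ message from Alice, putting the protocol outside the scope of \pref{thm:multiphase_informal} --- and, by \pref{thm:ciruitwire_informal}, into a regime where a lower bound would imply a breakthrough in circuit complexity. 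I expect the remaining work --- tracking the $O(w)$-per-probe accounting, confirming the $\poly(n)$ memory bound and the $w = \Omega(\log n)$ assumption, and restating ``$|U| = o(k)$'' as the clean trade-off ``$n\, t_u \ge \Omega(k/w)$ or $t_q \ge \Omega(\sqrt n/w)$'' --- to be routine; the only real obstacle is the conceptual one of assigning the four roles so that a semi-adaptive $D$ induces a protocol of this precise restricted form rather than the fully adaptive (and merely conjecturally hard) protocol of \Pat's original reduction.
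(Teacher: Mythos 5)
Your reduction is essentially identical to the paper's: Charlie sends the Phase-\Romannum{2} update transcript (equivalently $\Delta(\cM,T)$ with contents) as $U$ of size $O(n\,t_u\,w)$, Megan's broadcast encodes the layer-1 probes into $\cM$ (size $O(t_1 w)$), and layer-2 probes into $\Delta(\cM,T)$ are simulated by Alice sending addresses and Bob returning contents, giving $|\Pi_i| = O(t_q w)$ and contradicting \pref{thm:4party}. The only cosmetic difference is that the paper has Megan broadcast both addresses and contents of the layer-1 probes, whereas you observe that contents alone suffice since Alice can reconstruct the addresses from $(S_i,i)$ and prior contents; both give the same asymptotic bound.
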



\paragraph{Implications of the NOF game to Circuit Lower Bounds}

A long-standing open problem in circuit complexity is whether \emph{non-linear} gates can 
help computing linear operators (\cite{lupanov56,JS10,Drucker12}). 
Specifically, the challenge is to prove a polynomial ($n^{1+\eps}$) lower bound on the number 
of \emph{wires} of constant-depth circuits with \emph{arbitrary gates} for computing \emph{any}
linear operator $x\mapsto Ax$ \cite{Jukna_2012}. 
A random matrix $A$ easily gives a polynomial ($n^2/\lg n$) lower bound against 
\emph{linear circuits} \cite{lupanov56, JS10} (which restricted the interest to finding \emph{explicit} 
hard matrices $A$, see \cite{Val77}). 
In contrast, for general circuits, the highest lower bound on the number of wires,  
even for computing a \emph{random} matrix $A$, is near-linear \cite{Drucker12, GHKPV13}. 

Building on a recent reduction of Viola \cite{V18}, we show that \Pat's NOF Conjecture~\ref{MPH_NOF_conj}, 
even for \emph{3-round protocols}, would already resolve this question. 
This indicates that Conjecture~\ref{MPH_NOF_conj} is in fact much stronger than the Multiphase conjecture itself.

\begin{theorem}[NOF Game Implies Circuit Lower Bounds]   \label{thm:ciruitwire_informal}
Suppose Conjecture~\ref{MPH_NOF_conj} holds, even for 3-round protocols. Then for $m = \omega( n )$,
there exists a linear \footnote{Over the boolean Semiring, i.e., where addition are replaced with 
$\vee$ and multiplication are replaced with $\wedge$. We note that there is evidence that 
computing $Ax$ over the boolean Semiring is easier 
than over $\mathbb{F}_2$ \cite{Gronlund2015}, hence in that sense our lower bound is stronger than over finite fields.}
operator $A \in \{ 0 , 1 \}^{m \times n }$ such that any depth-$d$ circuit 
computing $x \mapsto A x$  (with \emph{arbitrary gates and unbounded fan-in}) requires $n^{1 + \Omega \left( \eps /d \right)}$ wires. 
In particular, if $d = 2$, the conjecture implies that computing $A x$ for some $A$ requires $n^{1 + \frac{\eps}{2} - o(1)}$ wires.
\end{theorem}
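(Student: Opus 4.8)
\medskip\noindent\textbf{Proof plan.}
The plan is to argue by contraposition. Suppose that for \emph{every} matrix $A \in \{0,1\}^{m \times n}$ the boolean-semiring linear map $x \mapsto Ax$ is computed by some depth-$d$ circuit with arbitrary gates, unbounded fan-in, and at most $W := n^{1+\delta}$ wires. From these circuits I will build a $3$-round NOF protocol $\Gamma = (U,\Pi)$ for the Multiphase game $\SEL{\DISJ_n}{k}$ with $k := m$ sets, advice $|U| = o(k)$, and communication $|\Pi| < n^\eps$ (for all large $n$), provided $\delta$ is a sufficiently small constant multiple of $\eps/d$; this contradicts Conjecture~\ref{MPH_NOF_conj}. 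Taking $m$ to be just barely superlinear (say $m = n\log n$, which is the most restrictive choice for the $o(k)$ advice budget) then forces the witnessing matrix to require $n^{1+\Omega(\eps/d)}$ wires in depth $d$, and tracking constants in the case $d=2$ gives the sharper $n^{1+\eps/2-o(1)}$.

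First I would set up the matrix--vector encoding of the game: identify $\vec S = S_1,\dots,S_k$ with the matrix $A \in \{0,1\}^{k\times n}$ whose $i$-th row is the indicator vector of $S_i$, and identify $T$ with $x := \mathbf{1}_T \in \{0,1\}^n$. Over the boolean semiring, $(Ax)_i = \bigvee_j (A_{ij}\wedge x_j) = \mathbf{1}[S_i\cap T \neq \emptyset]$, so computing the single output coordinate $o_i(x)$ of the hypothesized circuit $C$ for $A$ is exactly the task of the Multiphase game. The feature that makes the reduction possible is the information split: Alice, holding $\vec S$ and hence $A$ and hence a canonical depth-$d$, $\le W$-wire circuit $C$ for $A$, knows $C$ in full but not $x$; Bob, holding $T=x$ and $i$, knows $x$ but not $C$; and Charlie, holding $\vec S$ and $T$, can evaluate \emph{every} gate of $C$ on $x$, but does not know $i$.

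The heart of the proof --- adapting Viola's circuit-to-communication reduction \cite{V18} to the NOF model --- is the design of Charlie's advice, and here the plan is emphatically \emph{not} to broadcast all gate values (there may be $\sim W \gg o(k)$ of them). Instead $U$ carries, for input $x$, the values of the ``hub'' gates of $C$ --- the internal gates of fan-in $\ge t$, for a threshold $t$ chosen below --- together with the answers $o_i(x)$ for the few \emph{output} gates of fan-in $\ge t$. Since the total fan-in of $C$ is $W$, at most $2W/t$ gates qualify, so $|U|\le 2W/t$, which I will make $o(k)$. On query $i$: if $o_i$ is a high-fan-in output, Alice simply points Bob to the relevant bit of $U$; otherwise, the back-cone of $o_i$ obtained by tracing predecessors and halting at inputs or at hub gates is a tree of depth $\le d$ with branching $< t$ at every node, so it meets at most $t^d$ ``leaves'', each of which is either a coordinate of $x$ (held by Bob) or a hub gate (whose value sits in $U$, held by Bob). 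Alice sends Bob the $\le t^d$ leaf labels ($O(t^d\log n)$ bits), Bob returns their values, and Alice evaluates the low-fan-in sub-circuit bottom-up to recover $o_i(x) = \mathbf{1}[S_i\cap T\neq\emptyset]$. This protocol is deterministic and zero-error, uses two rounds in $\Pi$ (three messages counting Charlie's advice), and has $|\Pi| = O(t^d\log n)$.

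Finally I would balance the two constraints, $W/t = o(m)$ (advice) and $t^d\log n < n^\eps$ (communication): with $m = n\log n$ and $t = n^{c\eps/d}$ for a constant $c<1$ we get $|\Pi| < n^\eps$ and $|U| = O(W/t) = O(n^{1+\delta-c\eps/d}) = o(n\log n) = o(k)$ as soon as $\delta < c\eps/d$, which completes the contradiction; the case $d=2$ uses that the back-cone then has only $O(t^2)$ leaves, permitting $t = n^{\eps/2-o(1)}$. The step I expect to be the main obstacle is exactly this advice analysis: showing that broadcasting only the $O(W/t)$ hub values is enough --- i.e.\ that every low-fan-in back-cone is genuinely small at depth $d$ --- and that the rare high-fan-in \emph{output} gates, which escape the generic bound because they are not interior, can be neutralised by precomputing their answers into $U$ without blowing the $o(k)$ budget. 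The matrix encoding, the bottom-up evaluation on Alice's side, the round accounting, and the parameter arithmetic should then all be routine.
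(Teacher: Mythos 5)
Your proposal is correct and takes essentially the same route as the paper: the paper factors the argument through an intermediate static data structure lower bound (its Lemma~\ref{lem:static}, with Charlie~$=$~memory, Alice~$=$~non-adaptive probe addresses, Bob~$=$~cell contents) and then invokes Viola's circuit-to-data-structure reduction (Theorem~\ref{thm:dstocircuit}: store the high-fan-in ``hub'' gates, evaluate the small back-cone of everything else), whereas you compose those two steps into one direct circuit-to-NOF simulation. The ideas, the round accounting, and the parameter balancing ($t \sim n^{\eps/d}$, $|U| = O(W/t) = o(k)$, $|\Pi| = O(t^d \log n) < n^\eps$) match the paper's Corollaries~\ref{cor:forcircuitbound} and~\ref{cor:circuit}; your observation that $U$ need not include $x$ (since Bob already holds $T$) is a harmless simplification of the paper's accounting.
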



\paragraph{Comparison to previous work.}

The aforementioned work of Brody and Larsen \cite{BL15} proves essentially optimal ($\Omega(n/w)$) dynamic lower bounds 
on variations of the Multiphase problem, when either the update or query algorithms are \emph{nonadaptive} (or 
in fact ``memoryless" in the former, which is an even stronger restriction). Proving lower bounds in the semi-adaptive model is a different 
ballgame, as the $\tilde{O}(\sqrt{n}/w)$ upper bound suggests (\pref{sec:appendixUB}). 
We also remark that \cite{BL15} were the first to observe a (similar but different) connection between nonadaptive data structures and 
depth-2 circuit lower bounds.  

A more recent result of Clifford et. al \cite{Gronlund2015} proves a ``threshold" cell-probe lower 
bound on \emph{general} dynamic data structures solving the Multiphase problem, asserting that fast queries $t_q = o(\log k / \log n)$ require 
very high $t_u > k^{1-o(1)}$ update time. Unfortunately, this result does not rule out data structures with 
$t_u=t_q = \poly\log(n)$ or even $O(1)$ time 
for the Multiphase problem (for general data structures, neither does ours).

As far as the Multiphase NOF Game, the aforementioned work of Chattopadhyay et. al~\cite{littleadvice} proves a tight $\tilde{\Theta}(\sqrt{n})$ 
communication lower bound against so-called \emph{``1.5-round"} protocols, 
in which Bob's message to Alice is \emph{independent of the index $i$}, hence he is essentially ``forwarding" a small ($o(n)$) portion of 
Charlie's message to Alice (this effectively eliminates Bob from communicating, making it similar to a 2-party problem). 
While our restricted NOF model is formally incomparable to ~\cite{littleadvice} (as in our model, Alice is the first speaker),  
\pref{thm:multiphase_informal} in fact subsumes it by a simple modification (see Appendix \ref{sec:appendixLB}).  
The model we study seems fundamentally stronger than 1.5-protocols, as it inherently 
involves multiparty NOF communication.

To best of our knowledge, all previous lower bounds ultimately reduce the Multiphase problem to a \emph{2-party} communication 
game, which makes the problem more amenable to \bf compression-based \rm arguments.  This is the main departure point 
of our work. 
We remark that most of our information-theoretic tools in fact apply to general dynamic data structures.  
We discuss this further in \pref{sec:openproblem} at the end of this paper.


\section{Technical Overview} 
Here we provide a streamlined overview of our main technical result, \pref{thm:multiphase_informal}. 
As discussed earlier in the introduction, a na\"{i}ve approach to the Multiphase Game $\SEL{\DISJ_n}{k}$ is 
a ``round elimination" approach:  
Since Charlie's advice consists of only $|U| = o( k )$ bits and he has no knowledge of the index of the interesting subproblem 
$i\in_R [ k ]$, his advice $U$ to Bob should convey $o(1)$ bits of information about the interesting set $S_i$ and hence 
Alice and Bob might hope to simply ``ignore" his advice $U$ and use such efficient NOF protocol $\Gamma$ 
to generate a too-good-to-be-true  \emph{2-party} protocol for set disjointness (by somehow ``guessing" Charlies's 
message which appears useless, and absorbing the error). The fundamental flaw with this intuition is that Charlie's advice is a function 
of \emph{both} players' inputs, hence conditioning on $U(\vec{S},T)$ \emph{correlates} the 
inputs in an arbitrary way, extinguishing the standard ``rectangular" (Markov) property of 2-party protocols in the second phase interaction 
$\Pi^{A\leftrightarrow B}$ between Alice and Bob (This is the notorious feature preventing ``direct sum" arguments in 
NOF communication models). 
In particular, Chattopadhyay et. al~\cite{littleadvice} show that a small advice ($|U| = \tilde{O}(\sqrt{n})$) can already 
decrease the communication complexity of the multiphase problem to at most the  2-party complexity of set-disjointness 
under \emph{product} distributions, yielding a surprising 2-round $\tilde{O}(\sqrt{n})$ upper bound on the Multiphase game. 
(This justifies why our hard distribution for $\SEL{\DISJ_n}{k}$ will be a product distribution to begin with, i.e., $\vec{S} \perp T$). 
Alas, even if the inputs are originally independent ($I(\vec{S};T)=0$), they may 
\emph{not} remain so throughout $\Pi$, and it is generally possible that $I(S_i;T | \Pi) \gg 0$. 
This means that, unlike 2-party protocols,  
$\Pi = \Pi(U,\vec{S},T)$ \emph{introduces correlation} between the inputs, and as such, is not amenable to the 
standard analysis of 2-party communication techniques.  

Nevertheless, one might still hope that if the advice $U$ is small enough, then this correlation will be small 
for an average index $S_i$ when the inputs are independently chosen. 
At a high level, our proof indeed shows that if only the \emph{first} message of Alice can (directly) depend on 
her entire input $\vec{S} = S_1,\ldots, S_k$ (whereas her subsequent messages $\Pi^\tau$ are only a function of 
$S_i$, $i$ and the transcript history $[\Pi(\vec{S},T,i)]^{<\tau}$), then  
it is possible to \emph{simultaneously} control the information cost and correlation of $\Pi$, so long as the advice 
$U$ is small enough ($o(k)$). This in turn facilitates a ``robust" direct-sum style argument for \emph{approximate protocols}. 
More formally, our proof consists of the following two main steps: \\ 

\noindent $\bullet$ \bf A low correlation and information random process for computing $\AND$. \rm  
The first part of the proof shows that an efficient Restricted NOF protocol $\Gamma$ for the Multiphase game $\SEL{\DISJ_n}{k}$ 
(under the natural \emph{product} distribution on $\vec{S},T$) can be used to design a certain random process $Z(X,Y)$ 
computing the 2-bit AND function (on 2 independent bits $\sim \cB_{\Theta(1/\sqrt{n})}$), which \emph{simultaneously} 
has \emph{low information cost} w.r.t $X,Y$ and \emph{small correlation}, meaning that the input bits remain 
roughly independent at any point during the process, i.e,  $I(X;Y|Z) = o(1/n)$. Crucially, $Z$ is \emph{not} a valid 
2-party protocol (Markov chain) -- if this were the case, then we would have $I(X;Y|Z) = 0$ since a deterministic 2-party 
protocol cannot introduce any additional correlation  between the original inputs (this is also the essence of the the celebrated 
\emph{``Cut-and-Paste" Lemma} \cite{CutPaste}). Nevertheless, 
we show that for restricted NOF protocols $\Gamma$ (equivalently, unrestricted protocols in our 4-party model, cf. Figure \ref{fig:4party}),
it is possible to design such random variable $Z(X,Y)$ from $\Gamma$, which is \emph{close enough} to a 
Markov chain. The design of $Z$ requires a careful choice conditioning variables as well as a ``coordinate sampling" 
step for reducing entropy, though the analysis of this 
part uses standard tools (the chain rule and subadditivity of mutual information).We first design a $Z'$ 
with similar guarantees for single-copy \emph{disjointness}, and then use (a variation of) the  standard direct-sum 
information cost argument to ``scale down" the information and correlation of $Z$ so as to extract the desired random 
process for 2-bit AND. An important observation in this last step is that the direct sum property of information cost 
holds not just for communication protocols, but in fact for general random variables. \\ 

\noindent $\bullet$ \bf A ``robust" Cut-and-Paste Lemma. \rm 
The second part is proving that such random variable $Z(X,Y)$ cannot exist, i.e., ruling out a random process $Z(X,Y)$ 
computing $\AND$ (with 1-sided error under $X,Y \sim^{iid} \cB_{\Theta(1/\sqrt{n})}$) which simultaneously has 
\emph{low information cost} and \emph{small correlation} ($o(1/n)$). 
The high-level intuition is that, if $Z(X,Y)$ introduces little correlation, then the distribution over $X$ and $Y$ conditioned 
on $Z(X,Y)$ should remain \emph{approximately} a product distribution, i.e., close to a rectangle. By the correctness guarantee of $Z$, 
the distribution on $\{0,1\}^2$ conditioned on $Z(X,Y) = \AND(X,Y) = 0$ must have 0 mass on the $(1,1)$ entry. 
But if this conditional distribution does not contain $(1,1)$ in its support and \emph{close to a rectangle}, 
a KL-divergence calculation shows that $Z(X,Y)$ must reveal a lot of information about 
either $X$ or $Y$ (this calculation crucially exploits the fact that Pinsker's inequality is  
loose ``near the ends", i.e.,  $\KL{p}{q} \approx \|p-q\|_1$ for $p,q=o(1)$, and there is no quadratic loss). 
Our argument can be viewed as a generalization of the Cut-and-Paste Lemma to more robust settings of 
random variables (``approximate protocols"). We remark that while the proof of the original Cut-and-Paste 
Lemma \cite{CutPaste} heavily relies on properties of the Hellinger distance, this technique does not seem to easily extend to 
small-correlation random variables. This forces us to find a more direct argument, 
which may be of independent interest.

\section{Preliminaries}
\subsection{Information Theory}

In this section, we provide necessary backgrounds on information theory and information complexity that are used 
in this paper. For further reference, we refer the reader to \cite{CoverThomas}.

First we define entropy of a random variable, which intuitively quantifies how ``random" a given random variable is.

\begin{definition}[Entropy]
	The entropy of a random variable $X$ is defined as 
	\begin{equation*}
	H(X) := \sum_{x} \Pr[X=x] \log \frac{1}{\Pr[X=x]}.
	\end{equation*}
	Similarly, the conditional entropy is defined as
	\begin{equation*}
	H(X|Y) := \E_{Y} \left[ \sum_{x} \Pr[X = x | Y = y] \log \frac{1}{\Pr[ X = x | Y = y]} \right].
	\end{equation*}
\end{definition}

\begin{fact}[Conditioning Decreases Entropy] \label{fact:conditioningentropy}
For any random variable $X$ and $Y$
\begin{equation*}
    H(X) \geq H(X|Y)
\end{equation*}
\end{fact}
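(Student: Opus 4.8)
The statement to prove is Fact~\ref{fact:conditioningentropy}: $H(X) \geq H(X|Y)$.

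The plan is to derive this from the non-negativity of mutual information, which itself follows from the non-negativity of KL-divergence (Gibbs' inequality / Jensen's inequality applied to the concave logarithm).

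First I would recall the identity $I(X;Y) = H(X) - H(X|Y)$, which unpacks to
\[
I(X;Y) = \sum_{x,y} \Pr[X=x, Y=y] \log \frac{\Pr[X=x, Y=y]}{\Pr[X=x]\,\Pr[Y=y]},
\]
i.e.\ the KL-divergence between the joint distribution and the product of marginals. Establishing this identity is a routine algebraic manipulation starting from the definitions of $H(X)$ and $H(X|Y)$ given above, using $\Pr[X=x|Y=y]\Pr[Y=y] = \Pr[X=x,Y=y]$ and splitting the logarithm.

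Then the core step: show $I(X;Y) \geq 0$. I would apply Jensen's inequality to the concave function $\log$. Writing $p(x,y) = \Pr[X=x,Y=y]$ and $q(x,y) = \Pr[X=x]\Pr[Y=y]$, we have
\[
-I(X;Y) = \sum_{x,y} p(x,y) \log \frac{q(x,y)}{p(x,y)} \leq \log \sum_{x,y} p(x,y) \cdot \frac{q(x,y)}{p(x,y)} = \log \sum_{x,y} q(x,y) = \log 1 = 0,
\]
where the sum ranges over pairs with $p(x,y) > 0$ (and $\sum q(x,y) \le 1$ over that range, which still gives $\log(\cdot) \le 0$). Hence $I(X;Y) \geq 0$, and combined with the identity this yields $H(X) - H(X|Y) = I(X;Y) \geq 0$, as desired.

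The main obstacle — such as it is — is purely bookkeeping: being careful about terms where $p(x,y) = 0$ (these contribute $0$ to the entropy sums by the usual convention $0\log 0 = 0$, so they can be dropped) and making sure the Jensen step is applied to a genuine probability distribution (the weights $p(x,y)$ sum to $1$). There is no conceptual difficulty; the inequality is one of the foundational facts of information theory, so I would expect the author's proof to either cite \cite{CoverThomas} or give this one-line Jensen argument.
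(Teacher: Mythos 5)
Your proof is correct and is the standard textbook argument: reduce to non-negativity of mutual information via the identity $I(X;Y)=H(X)-H(X|Y)$, then apply Jensen's inequality to $\log$ (equivalently, non-negativity of KL-divergence between the joint and the product of marginals), with the appropriate care about the support and the convention $0\log 0=0$. The paper states this fact without proof (it is a foundational fact deferred to \cite{CoverThomas}), so there is no authorial argument to compare against, but your derivation is complete and matches what any standard reference would give.
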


\noindent With entropy defined, we can also quantify correlation between two random variables, or how much information one random variable conveys about the other.

\begin{definition}[Mutual Information]
	Mutual information between $X$ and $Y$ (conditioned on $Z$) is defined as 
	\begin{equation*}
	I( X; Y | Z) := H(X|Z) - H(X|YZ).
	\end{equation*}
\end{definition}

\noindent Similarly, we can also define how much one distribution conveys information about the other distribution.

\begin{definition}[KL-Divergence]
	KL-Divergence between two distributions $\mu$ and $\nu$ is defined as
	\begin{equation*}
	\KL{\mu}{\nu} := \sum_{x} \mu(x) \log \frac{\mu(x)}{\nu(x)}.
	\end{equation*}
\end{definition}

\noindent In order to bound mutual information, it suffices to bound KL-divergence, due to following fact.

\begin{fact}[KL-Divergence and Mutual Information] \label{fact:KL_Mutual}
	The following equality between mutual information and KL-Divergence holds 
	\begin{equation*}
	I(A;B|C) = \E_{B,C} \left[ \KL{ A|_{B=b, C=c} }{ A|_{C=c} } \right].
	\end{equation*}
\end{fact}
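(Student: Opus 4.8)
\textbf{Proof proposal for Fact~\ref{fact:KL_Mutual}.}

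The plan is to unfold both sides from their definitions and match the summands term-by-term, using nothing beyond the definitions of mutual information, conditional entropy, and KL-divergence already recorded above. First I would fix the conditioning variable $C$; since both sides are expectations over $C$, it suffices to prove the identity $I(A;B \mid C=c) = \E_{B \mid C=c}\left[ \KL{ A|_{B=b,C=c} }{ A|_{C=c} } \right]$ for each fixed value $c$, and then take $\E_C$ of both sides. This reduces the claim to the unconditioned statement $I(A;B) = \E_B\left[ \KL{A|_{B=b}}{A} \right]$, applied to the distribution conditioned on $C=c$.

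For the unconditioned statement, I would start from the right-hand side and expand the KL-divergence:
\begin{equation*}
\E_B\left[ \KL{A|_{B=b}}{A} \right] = \sum_b \Pr[B=b] \sum_a \Pr[A=a \mid B=b] \log \frac{\Pr[A=a \mid B=b]}{\Pr[A=a]}.
\end{equation*}
Splitting the logarithm of the ratio into $\log \Pr[A=a\mid B=b] - \log \Pr[A=a]$ gives two double sums. In the first, $\sum_b \Pr[B=b] \sum_a \Pr[A=a\mid B=b](-\log\tfrac{1}{\Pr[A=a\mid B=b]})$ is exactly $-H(A\mid B)$ by the definition of conditional entropy. In the second, using $\sum_b \Pr[B=b]\Pr[A=a\mid B=b] = \Pr[A=a]$ to collapse the sum over $b$, we are left with $\sum_a \Pr[A=a]\log\tfrac{1}{\Pr[A=a]} = H(A)$. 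Combining, the right-hand side equals $H(A) - H(A\mid B)$, which is the definition of $I(A;B)$. Reintroducing the conditioning on $C=c$ throughout and then averaging over $C$ yields $I(A;B\mid C) = H(A\mid C) - H(A\mid BC)$, matching the definition of conditional mutual information.

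There is essentially no obstacle here — the only point requiring a little care is the interchange of summation order and the marginalization identity $\sum_b \Pr[B=b]\Pr[A=a\mid B=b] = \Pr[A=a]$, both of which are routine for discrete random variables with countable support (and absolute convergence is not an issue since all terms after grouping are the standard entropy expressions). I would also make sure the conditional versions are stated for values $b$ with $\Pr[B=b\mid C=c] > 0$, so that all conditional probabilities are well-defined; values of zero probability contribute nothing to the expectation. So the proof is a short, purely mechanical computation, and the main thing to get right is simply bookkeeping of the nested sums.
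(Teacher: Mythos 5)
Your proof is correct and is the standard textbook derivation of this identity; the paper itself states Fact~\ref{fact:KL_Mutual} without proof (it is a background fact referenced from \cite{CoverThomas}), so there is no alternative approach in the paper to compare against. The reduction to the unconditioned identity, the expansion of the KL-divergence, the split of the logarithm into the $-H(A\mid B)$ and $H(A)$ pieces, and the marginalization $\sum_b \Pr[B=b]\Pr[A=a\mid B=b] = \Pr[A=a]$ are all exactly the right steps, and your care about zero-probability values is the appropriate bookkeeping point.
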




\noindent We can expect the following inequality near 0 distributions. 

\begin{fact}[Divergence Bound] \label{fact:divergence}
If $\KL{ B_q }{ B_p } < o(p)$ with $p = o(1)$, then $q \in [ 0.99 p , 1.01 p ]$
\end{fact}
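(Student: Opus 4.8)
\textbf{Proof plan for Fact~\ref{fact:divergence}.} The statement is a quantitative calculation about the KL-divergence between two Bernoulli distributions when one parameter is very small. Write $B_q$ for the Bernoulli distribution with mean $q$ and $B_p$ for the one with mean $p$. By definition,
\[
\KL{B_q}{B_p} = q\log\frac{q}{p} + (1-q)\log\frac{1-q}{1-p}.
\]
The plan is to show that this expression is bounded below by $c \cdot p \cdot \phi(q/p)$ for an absolute constant $c>0$ and a function $\phi$ that is strictly positive and bounded away from $0$ whenever $q/p \notin [0.99,1.01]$, so that $\KL{B_q}{B_p} = o(p)$ forces $q/p$ into that window.

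First I would reduce to the leading-order behavior in $p$. Since we are told $p = o(1)$, I would expand the second term using $\log\frac{1-q}{1-p} = \log(1-q) - \log(1-p)$ and the estimate $\log(1-x) = -x + O(x^2)$, valid for both $x=q$ and $x=p$ once we know $q = O(p) = o(1)$ (which we may assume, since if $q$ is bounded away from $0$ while $p\to 0$ then the first term $q\log(q/p)\to\infty \gg p$ and the conclusion is vacuously consistent — one should dispatch that case separately at the start). This leaves
\[
\KL{B_q}{B_p} = q\log\frac{q}{p} - (q-p) + O(p^2) = p\Bigl(r\log r - (r-1)\Bigr) + O(p^2),
\]
where $r := q/p$. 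The function $g(r) := r\log r - r + 1$ is the standard convex ``relative entropy'' shape: $g(1) = 0$, $g'(1) = 0$, $g''(r) = 1/r > 0$, so $g$ is strictly convex with a unique minimum of $0$ at $r=1$.

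The key quantitative step is then purely about $g$: for $r \notin [0.99, 1.01]$, we have $g(r) \geq g(1.01) = g(0.99)\cdot(1+o(1)) \geq c_0$ for some absolute constant $c_0 > 0$ (using convexity to handle $r > 1.01$, and that $g$ is bounded below by a positive constant on $[0,0.99]$ since $g(0)=1$ and $g$ is decreasing on $[0,1]$). Plugging back in, if $r \notin [0.99,1.01]$ then $\KL{B_q}{B_p} \geq c_0 p - O(p^2) = \Omega(p)$, contradicting the hypothesis $\KL{B_q}{B_p} = o(p)$. Hence $r \in [0.99,1.01]$, i.e. $q \in [0.99p, 1.01p]$.

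The only real subtlety — and the step I would be most careful about — is controlling the error terms uniformly: the $O(p^2)$ remainder and the approximation $\log(1-x)=-x+O(x^2)$ must be shown to be genuinely $o(p)$ and not to swamp the $\Omega(p)$ main term, which requires first establishing the a priori bound $q = O(p)$. This is where the preliminary case split (large $q$ versus $q = O(p)$) does the work: once $q/p$ is known to lie in a bounded range, every error term is $O(p^2) = o(p)$ and the convexity estimate on $g$ finishes the argument. \hfill$\Box$
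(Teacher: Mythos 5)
Your approach is correct in spirit and lands on essentially the same calculation as the paper's, but the case split you use to justify the Taylor expansion is not exhaustive, which leaves a genuine (if small) hole. You want $q = O(p)$ so that the Taylor remainder in the second term is $O(p^2) = o(p)$, and you justify this by ruling out ``$q$ bounded away from $0$''. But there is an intermediate regime --- $q \to 0$ yet $q/p \to \infty$, e.g.\ $q = \sqrt{p}$ --- which falls into neither case: $q$ is not bounded away from $0$, but $q^2$ is not $o(p)$ either, so the remainder estimate as you state it fails. The conclusion is still true there, since the first term alone already gives $q\log(q/p) = p \cdot r\log r \gg p$ as $r\to\infty$, but you need to say this explicitly rather than fold it into the ``$q$ bounded away from $0$'' case.

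The paper avoids the case split entirely: it uses that $\KL{B_q}{B_p}$ is convex in $q$ with minimum $0$ at $q = p$, hence monotone in $|q-p|$, so it suffices to check the two threshold values $q = 0.99p$ and $q = 1.01p$, each of which it evaluates directly and finds to be $\Omega(p)$. Your derivation via $g(r) = r\log r - r + 1$ and its convexity is the same monotonicity fact in rescaled coordinates; once the case split is patched, the two proofs are essentially interchangeable. (Incidentally, the paper's displayed arithmetic has a typo --- the inner logarithms should involve $\frac{0.01p}{1-p}$, not $\frac{0.99p}{1-p}$ --- but the numerical constants $\approx \pm 0.0144 p$ it subsequently uses are the ones that follow from the corrected expression.)
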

\begin{proof}
Since $B_q$ is decreasing as $q$ goes to $p$, we show that if $q = 0.99p$ or $q = 1.01p$, $D(B_q || B_p) \geq \Omega(p)$. First if $q = 1.01 p$, then the term is
\begin{align*}
    &D(B_q || B_p ) = 1.01 p \log 1.01 + \left( 1 - 1.01p \right) \log \frac{1 - 1.01 p}{1 - p} \\
    & \geq 0.01449 p + \left( 1 - 1.01 p \right) \log \left( 1 - \frac{0.99p}{1-p} \right) \\
    & \geq 0.01449 p + \left( 1 - 1.01 p \right) ( - 0.01443 p ) \geq \Omega(p)
\end{align*}
Similarly, if $q = 0.99 p$, we have
\begin{align*}
    &D(B_q || B_p ) = 0.99 p \log 0.99 + \left( 1 - 0.99p \right) \log \frac{1 - 0.99 p}{1 - p} \\
    & \geq -0.01436 p + \left( 1 - 0.99 p \right) \log \left( 1 - \frac{0.99p}{1-p} \right) \\
    & \geq -0.01436 p + \left( 1 - 0.99 p \right) ( 0.01442 p ) \geq \Omega(p).
\end{align*}

\end{proof}

\noindent We also make use of the following facts on Mutual Information throughout the paper.

\begin{fact}[Chain Rule] \label{fact:chainrule}
For any random variable $A,B,C$ and $D$  
\begin{equation*}
    I(AD;B|C) = I(D;B|C) + I(A;B|CD).
\end{equation*}
\end{fact}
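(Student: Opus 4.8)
The plan is to reduce the claim to the chain rule for (conditional) entropy, which is itself immediate from the definition of conditional entropy together with the factorization $\Pr[A D \mid C] = \Pr[D \mid C]\cdot\Pr[A \mid C D]$. Concretely, I would first expand both sides using the defining identity $I(X;Y\mid Z) = H(X\mid Z) - H(X\mid Y Z)$ (the Definition of Mutual Information in the excerpt): the left-hand side becomes $H(A D\mid C) - H(A D\mid B C)$, and the right-hand side becomes $(H(D\mid C) - H(D\mid B C)) + (H(A\mid C D) - H(A\mid B C D))$.

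Next I would apply the entropy chain rule $H(A D\mid C) = H(D\mid C) + H(A\mid C D)$ and, analogously, $H(A D\mid B C) = H(D\mid B C) + H(A\mid B C D)$ to the left-hand side. Substituting these two expansions and regrouping the resulting four entropy terms pairs $H(D\mid C) - H(D\mid B C)$ with $H(A\mid C D) - H(A\mid B C D)$, which is precisely the right-hand side. No inequalities are needed; the statement is an exact identity, so the argument is a single algebraic rearrangement.

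The only real care point — hardly an obstacle — is notational bookkeeping: one must treat "$C D$" consistently as the joint random variable $(C,D)$ so that the entropy chain rule is applied with the correct conditioning sets, and one should note that since all variables are finite-valued the entropies are finite and the rearrangement is unconditionally valid, requiring no regularity assumptions.
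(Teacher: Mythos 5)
Your proof is correct and is the standard textbook derivation (reduce mutual information to conditional entropies, apply the entropy chain rule $H(AD\mid C)=H(D\mid C)+H(A\mid CD)$ in both conditionings, regroup). The paper itself states this as Fact~\ref{fact:chainrule} without proof, deferring to \cite{CoverThomas}, so there is no paper-internal argument to compare against; your argument is precisely what that reference would give.
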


\begin{fact} \label{fact:chainrule1}
For any random variable $A,B,C$ and $D$, if $I(B;D|C) = 0$
\begin{equation*}
    I(A;B|C) \leq I(A;B|CD).
\end{equation*}
\end{fact}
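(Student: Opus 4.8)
The plan is a two-line application of the chain rule for mutual information (Fact~\ref{fact:chainrule}) to the quantity $I(AD;B|C)$, expanded in two different groupings. First I would split off $D$ first:
\[
I(AD;B|C) = I(D;B|C) + I(A;B|CD).
\]
Then, applying the same chain rule with the roles of $A$ and $D$ interchanged (legitimate since the joint variable $(A,D)$ is the same either way), I would split off $A$ first:
\[
I(AD;B|C) = I(A;B|C) + I(D;B|CA).
\]
Equating the two right-hand sides gives the identity
\[
I(A;B|C) + I(D;B|CA) = I(D;B|C) + I(A;B|CD).
\]

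The next step is to feed in the hypothesis. By symmetry of mutual information, $I(D;B|C) = I(B;D|C) = 0$, so the displayed identity collapses to $I(A;B|C) + I(D;B|CA) = I(A;B|CD)$. Since mutual information is always nonnegative, $I(D;B|CA) \geq 0$, and dropping this term from the left-hand side yields the desired inequality $I(A;B|C) \leq I(A;B|CD)$.

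I do not anticipate any genuine obstacle here; the statement is a standard consequence of the chain rule. The only point requiring a moment of care is to invoke the symmetry $I(B;D|C) = I(D;B|C)$ so that the hypothesis $I(B;D|C)=0$ matches the term $I(D;B|C)$ produced by the chain-rule expansion, and to note that one is free to peel off either $A$ or $D$ first when applying Fact~\ref{fact:chainrule}. Everything else is a direct substitution and an appeal to nonnegativity of mutual information.
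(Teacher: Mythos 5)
Your proof is correct and is essentially the paper's proof: the paper also bounds $I(A;B|C) \le I(AD;B|C)$ via the chain rule with $A$ peeled off first and nonnegativity, then expands $I(AD;B|C) = I(B;D|C) + I(A;B|CD)$ and drops the vanishing term. You have merely written out the two chain-rule expansions side by side rather than compressing them into one line.
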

\begin{proof} By the chain rule and non-negativity of mutual information, 
\begin{align*}
I(A;B|C) \leq I(AD;B|C) = I(B;D|C) + I(A;B|CD) = I(A;B|CD).
\end{align*}
\end{proof}

\begin{fact}\label{fact:chainrule2}
For any random variable $A,B,C$ and $D$, if $I(B;D|AC) = 0$
\begin{equation*}
    I(A;B|C) \geq I(A;B|CD).
\end{equation*}
\end{fact}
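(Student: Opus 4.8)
The plan is to imitate the proof of Fact~\ref{fact:chainrule1}, but with the ``conditioning direction'' reversed, so that it is the hypothesis $I(B;D|AC)=0$ that collapses a term. Concretely, I would apply the chain rule (Fact~\ref{fact:chainrule}) to the quantity $I(AD;B|C)$ in the two possible orders. Peeling off $D$ first is exactly Fact~\ref{fact:chainrule} as stated and gives $I(AD;B|C) = I(D;B|C) + I(A;B|CD)$. Peeling off $A$ first is the same fact with the names $A$ and $D$ interchanged (the order of the variables inside the joint is irrelevant), and gives $I(AD;B|C) = I(A;B|C) + I(D;B|AC)$. Equating the two expansions yields $I(A;B|C) + I(D;B|AC) = I(D;B|C) + I(A;B|CD)$.

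Next I would invoke the hypothesis. By symmetry of mutual information, $I(D;B|AC) = I(B;D|AC) = 0$, so the left-hand side simplifies to $I(A;B|C)$. Finally, since mutual information is non-negative, $I(D;B|C) \geq 0$, and dropping this term from the right-hand side gives $I(A;B|C) = I(D;B|C) + I(A;B|CD) \geq I(A;B|CD)$, which is exactly the claim. (As in Fact~\ref{fact:chainrule1}, an alternative one-line route is to use $I(D;B|C) \leq I(AD;B|C) = I(A;B|C) + I(B;D|AC) = I(A;B|C)$ together with the chain-rule expansion $I(AD;B|C) = I(D;B|C) + I(A;B|CD)$; I would pick whichever reads more cleanly alongside the preceding facts.)

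I do not anticipate any genuine obstacle here: the whole argument is a short bookkeeping exercise with the chain rule, precisely dual to Fact~\ref{fact:chainrule1}. The only point requiring a modicum of care is applying Fact~\ref{fact:chainrule} with the correct grouping of variables — in particular justifying that ``peeling off $A$ first'' is legitimate, which it is, being nothing more than Fact~\ref{fact:chainrule} after renaming.
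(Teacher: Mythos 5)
Your proof is correct and uses essentially the same ingredients as the paper: two chain-rule expansions of $I(AD;B|C)$, the hypothesis $I(B;D|AC)=0$, and nonnegativity of mutual information. The paper writes it as the single chain $I(A;B|CD) \leq I(AD;B|C) = I(A;B|C) + I(B;D|AC) = I(A;B|C)$, whereas you equate the two expansions and rearrange, but this is a purely cosmetic difference (your parenthetical ``alternative one-line route'' is in fact the paper's phrasing, modulo a small slip where you wrote $I(D;B|C) \leq$ in place of $I(A;B|CD) \leq$ at the start of that chain).
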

\begin{proof} By the chain rule and non-negativity of mutual information,  
\begin{align*}
I(A;B|CD) \leq I(AD;B|C) = I(A;B|C) + I(B;D|AC) = I(A;B|C).
\end{align*}
\end{proof}

\subsection{NOF Communication Models} \label{sec:nof}

\begin{figure}[h!]
    \centering
    \includegraphics[scale=0.44]{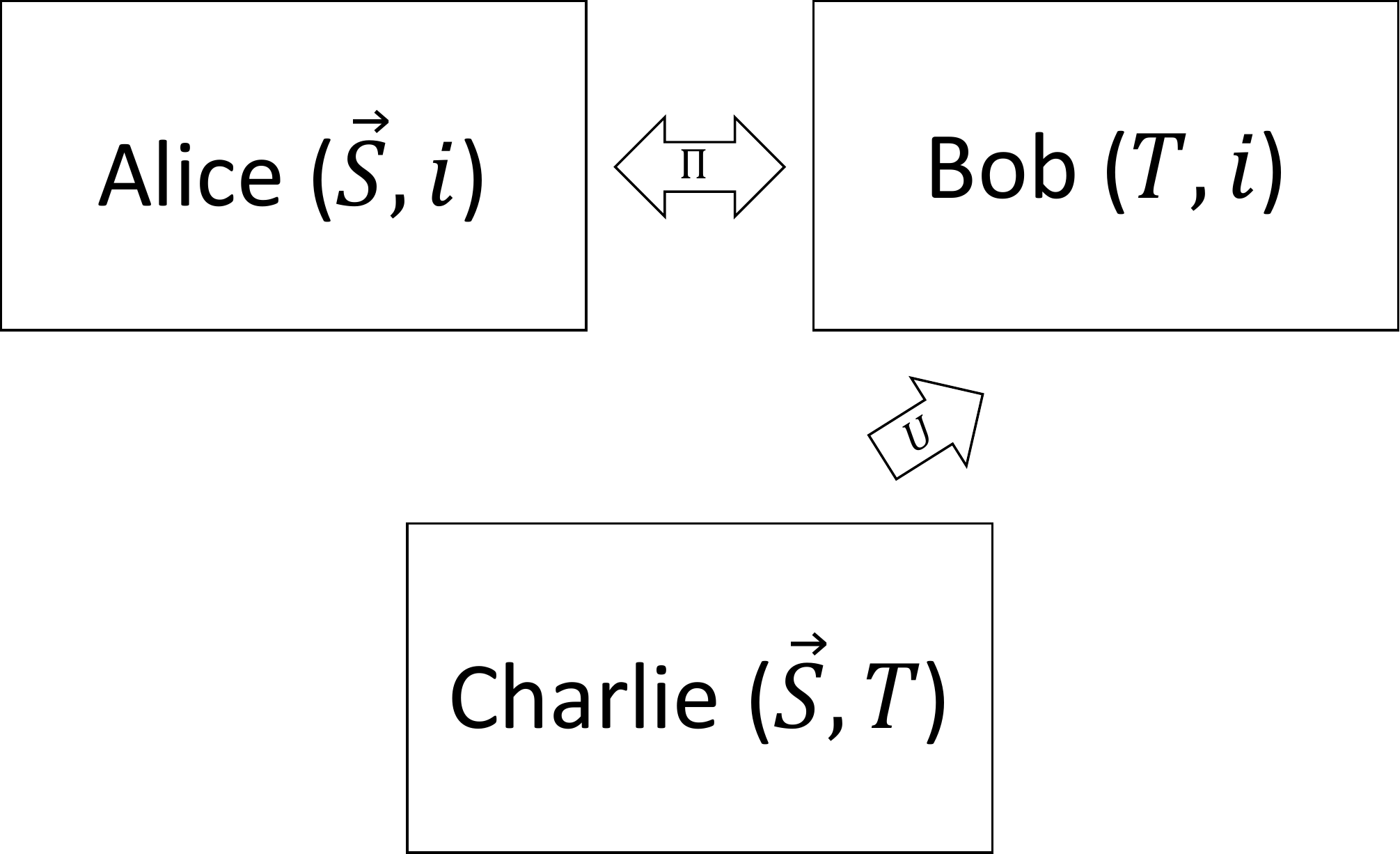}
    \caption{3-party NOF Communication}
    \label{fig:model} 
\end{figure}

In \Pat's NOF Multiphase Game $\SEL{f}{k}$, there are 3 players with the following 
information on their foreheads: 
\noindent \bf Charlie: \rm an index $i\in [k]$ ; 
\bf Bob: \rm a collection of sets $\vec{S} := S_1,\ldots, S_k \subseteq [n]$ ; 
\bf Alice: \rm  a set $T\subseteq [n]$. 
I.e., Charlie has access to both $\vec{S}$ and $T$, but not to $i$. Alice has access to $\vec{S}$ and $i$, and Bob has access to 
$T$ and $i$. The goal is to compute $$\SEL{f}{k} := f(S_i,T).$$ 
The communication proceeds as follows: 
In the first stage of the game, Charlie sends a message (``\emph{advice}") $U= U(\vec{S},T)$ \emph{privately to Bob}.  
In the second stage, Alice and Bob continue to communicate in the standard 2-party settting to compute $f (S_i, T)$ 
(see Figure \ref{fig:model}). 
We denote such protocol by $\Gamma := (U, \Pi_i)$ where $\Pi_i$
is the second stage transcript, assuming the index of the interesting subproblem is $i$.

Unfortunately, lower bounds for general protocols in \Pat's 3-party NOF model seem  beyond the reach of current techniques, 
as we show in \pref{sec:consequences} that Conjecture \ref{MPH_NOF_conj}, even for \emph{3-round} protocols,  
would resolve a major open problem in circuit complexity. Fortunately, for dynamic data structure applications, 
weaker versions of the NOF model suffice (this is indeed one of the enduring messages of this paper). 

We consider the following restricted class of protocols. We say that $\Gamma = (U,\Pi)$ is a {\bf restricted} NOF 
protocol if  Alice is the first speaker in $\Pi$ (in the second stage of the game) and only her \emph{first} message $\Pi_i^{1}$ to Bob 
depends on her entire input $\vec{S}$ and $i$,   whereas in subsequent rounds, Alice's messages $\Pi^\tau$ may depend only on 
$S_{i}, i$ and the history of the transcript 
$\Pi^{<\tau}$ with Bob. Note that the latter means that Alice and Bob's subsequent messages can still heavily depend on $S_{-i}$, but only through 
the transcript (this feature better captures data structures, since the query algorithm can only adapt based on the information 
in cells it already probed, and not the entire memory). 

\begin{figure}[h!]
    \centering
    \includegraphics[scale=0.5]{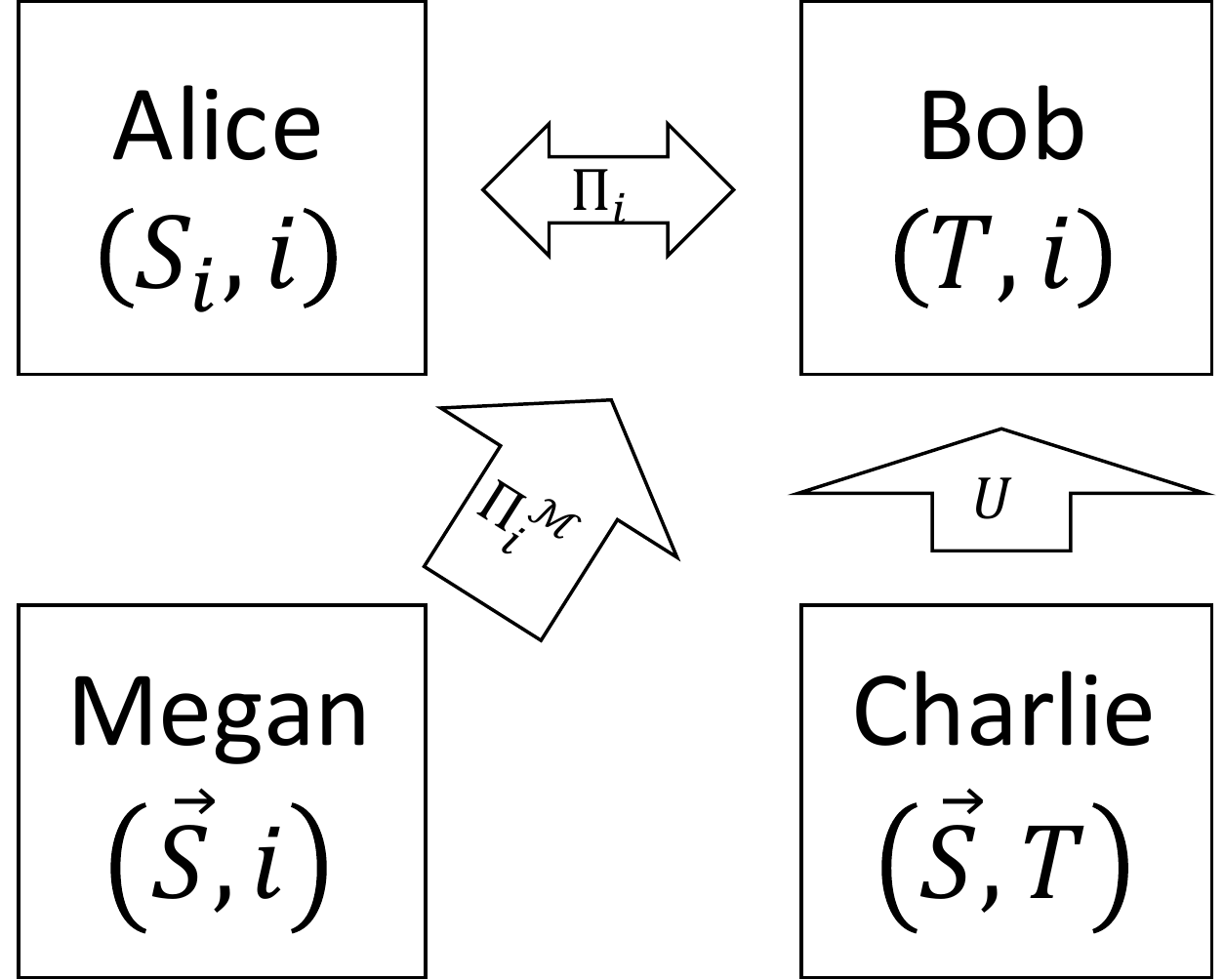}
    \caption{4-party NOF Communication}
    \label{fig:4party} 
\end{figure}

\paragraph{An equivalent 4-party NOF model.} Restricted 3-party NOF protocols are more naturally described by the following 4-party NOF model. 
{\bf Alice} has access only to $S_i$ and $i$, {\bf Bob} has access to $T$ and $i$. {\bf Charlie} has access to $\vec{S}$ and $T$, but no access to $i$. {\bf Megan} has access to $\vec{S}$ and $i$. 
In the first stage of the protocol $\Gamma$, in addition to Charlie's \emph{private} message to Bob $U(\vec{S},T)$, 
Megan can \emph{broadcast} a message $\Pi^\cM = \Pi^\cM(\vec{S},i)$ to both Alice and Bob. Thereafter, Alice and Bob proceed 
to communicate  in the 2-party model as before, denoted $\Pi^{A \leftrightarrow B}$. See Figure \ref{fig:4party}. 
We denote 4-party protocols by $\Gamma = (U, \Pi)$ where $\Pi := ( \Pi^\cM,  \Pi^{A \leftrightarrow B})$.  
We write $\Pi_i := ( \Pi_i^\cM,  \Pi_i^{A \leftrightarrow B})$ to denote the transcript of $\Pi$ when 
the index of the interesting subproblem is $i\in [k]$.  

It is straightforward to see that Restricted 3-party NOF protocols for the Multiphase Game are equivalent to (unrestricted) 4-party protocols 
(by setting Alice's first message as Megan's message $\Pi_i^\cM$, and Charlie remains unchanged). 
As such, our main technical theorem (\pref{thm:multiphase_informal}) can be restated as follows. 
\begin{theorem}[4-party NOF Lower Bound] \label{thm:4party} 
Let $k > \omega( n )$. For any 4-party NOF protocol $\Gamma = (U,\Pi)$ that solves $\SEL{\DISJ_n}{k}$  
with $|U| < o ( k )$, there exists $i \in [k]$ such that $|\Pi_i| > \Omega( \sqrt{n})$.	
\end{theorem}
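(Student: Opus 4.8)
\begin{proofsketch}
The plan is to argue by contradiction. Suppose $\Gamma=(U,\Pi)$ is a $4$-party protocol for $\SEL{\DISJ_n}{k}$ with $|U|=o(k)$ and $|\Pi_i|\le c\sqrt n$ for every $i\in[k]$, for a sufficiently small constant $c>0$. Following the outline of Section~2, I would work throughout with the \emph{product} hard distribution $\vec S\perp T$ in which $T$ and each $S_j$ are independent random subsets of $[n]$ with every element present i.i.d.\ with probability $p=\Theta(1/\sqrt n)$; then $\E|S_i\cap T|=np^2=\Theta(1)$, so $\DISJ_n(S_i,T)$ is an $\Omega(1)$-balanced bit and $\Gamma$ must compute it with small constant error. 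The argument has three stages: (i) extract from $\Gamma$, for a well-chosen index $i$, a random variable $Z'$ that ``computes'' the single instance $\DISJ_n(S_i,T)$ with \emph{both} small information cost and small residual correlation $I(S_i;T\mid Z')=o(1)$; (ii) a direct-sum step scaling $Z'$ down by a factor of $n$ to a random variable $Z=Z(X,Y)$ computing the $2$-bit $\AND$ on $X,Y\sim^{iid}\cB_p$ with information cost $o(1/\sqrt n)$ and correlation $I(X;Y\mid Z)=o(1/n)$; and (iii) a ``robust Cut-and-Paste'' argument showing that no such $Z$ can exist.

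For stage (i) I would use the $4$-party picture: Megan broadcasts $\Pi^\cM_i=\Pi^\cM_i(\vec S,i)$ to Alice and Bob, Charlie sends $U(\vec S,T)$ privately to Bob, and then Alice (holding $S_i$) and Bob (holding $T$) run the $2$-party protocol $\Pi^{A\leftrightarrow B}_i$. I would condition on $(i,S_{-i})$ for a uniformly random $i$ and also carry along a ``coordinate-sampling'' variable used in stage (ii). The key structural point is that, given $(i,S_{-i})$, Megan's message is a function of $S_i$ alone --- an ``Alice message'' --- and $\Pi^{A\leftrightarrow B}_i$ is a bona fide $2$-party protocol on $(S_i,T)$ run on top of the common string $(\Pi^\cM_i,U)$, so the \emph{only} object in $Z'$ depending on both $S_i$ and $T$ is the advice $U$. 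Using the rectangle property of $2$-party protocols, which I would state here as $I(X;Y\mid\Pi)\le I(X;Y)$ for any $2$-party $\Pi(X,Y)$ (an easy induction on rounds via Facts~\ref{fact:chainrule}--\ref{fact:chainrule2}), together with the ``Alice-message'' property of $\Pi^\cM_i$, the correlation $I(S_i;T\mid Z')$ collapses to the contribution of $U$, which is controlled \emph{on average over $i$} since $I(\vec S;U\mid T)\le H(U)\le|U|=o(k)$ (a chain-rule/super-additivity argument then delivers an index $i$ with $I(S_i;T\mid Z')=o(1)$), while the information cost of $Z'$ w.r.t.\ $(S_i,T)$ is at most $|\Pi_i|+o(1)=O(c\sqrt n)$. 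For stage (ii) I would embed a single $\AND(x,y)$ into a coordinate $\ell$ of the $\DISJ_n$ instance by drawing the other $n-1$ coordinates of $(S_i,T)$ from the ``collision-free'' distribution (the usual auxiliary variables $D_\ell\in\{A,B\}$ forcing one side of each off-coordinate to $0$), so that $\DISJ_n(S_i,T)=\neg\AND(x,y)$ on the nose and the error is effectively one-sided on these sub-instances. Since the direct-sum property of information cost holds for arbitrary random variables, not just protocols, averaging over $\ell\in[n]$ and folding $D:=(D_1,\dots,D_n)$ and the off-coordinates of $S_{-i}$ into the conditioning of $Z$ yields the promised $Z(X,Y)$ with information cost $\tfrac1n\cdot O(c\sqrt n)=o(1/\sqrt n)$ and $I(X;Y\mid Z)=\tfrac1n\cdot o(1)=o(1/n)$; the scaling-down follows from the chain rule, subadditivity of mutual information, and the collision-free structure.

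Stage (iii) is where a genuinely new argument is needed. Fix a value $z$ of $Z$ in the event $\{Z\text{ declares }\AND=0\}$, which has probability $\Omega(1)$ since $\DISJ_n$ is $\Omega(1)$-balanced and the error is a small constant, and write $a_z=\Pr[X=1\mid Z=z]$, $b_z=\Pr[Y=1\mid Z=z]$. One-sidedness forces $\Pr[X=1,Y=1\mid Z=z]=0$, so $(X,Y)\mid Z=z$ is supported on $\{(0,0),(1,0),(0,1)\}$ with marginals $\cB_{a_z}$ and $\cB_{b_z}$; a short direct computation then gives $I(X;Y\mid Z=z)=\Theta(a_z b_z)$ for such $z$. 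On the other hand, by Fact~\ref{fact:KL_Mutual} and the information-cost bound, $I(X;Z)=\E_z\KL{\cB_{a_z}}{\cB_p}\le o(1/\sqrt n)=o(p)$, so $\KL{\cB_{a_z}}{\cB_p}=o(p)$ for a $1-o(1)$ fraction of $z$ (in particular of the relevant $z$), whence Fact~\ref{fact:divergence} forces $a_z\in[0.99p,1.01p]$ for those $z$; symmetrically $b_z\in[0.99p,1.01p]$. Combining, $a_z b_z=\Theta(p^2)$ for a constant fraction of the relevant $z$, so $I(X;Y\mid Z)\ge\Omega(p^2)=\Omega(1/n)$, contradicting $I(X;Y\mid Z)=o(1/n)$. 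Since $p=\Theta(1/\sqrt n)$ and $|U|=o(k)$, this contradiction proves the theorem.

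The step I expect to be the main obstacle is the \emph{simultaneous} control achieved in stages (i)--(ii): one must choose the conditioning variables so that the extracted $Z$ stays close enough to a genuine $2$-party Markov chain for its residual correlation $I(X;Y\mid Z)$ to remain as tiny as $o(1/n)$, while at the same time its information cost stays below the $o(\sqrt n)$ budget. This is exactly where the restricted/$4$-party structure and the $|U|=o(k)$ assumption are essential --- for unrestricted $3$-party NOF protocols the correlation bound is simply false --- and it is what dictates the $\sqrt n$ barrier. Stage (iii) is also non-routine: the classical Cut-and-Paste lemma is proved via Hellinger distance, which does not see the near-zero regime sharply enough to beat the $o(1/\sqrt n)$ information budget; the argument above instead relies on the fact that $\KL{\cB_q}{\cB_p}$ behaves linearly rather than quadratically near the boundary (the failure of the quadratic Pinsker loss, quantified by Fact~\ref{fact:divergence}).
\end{proofsketch}
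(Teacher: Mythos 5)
Stages~(ii) and (iii) of your sketch are essentially correct and match the paper's argument. Your stage~(iii) reorganizes the contradiction (both information-cost bounds force the conditional marginals near $\cB_p$, and the product $a_zb_z=\Theta(p^2)$ then contradicts $I(X;Y\mid Z)=o(1/n)$), whereas the paper fixes the $X$-information and correlation bounds and derives a violated lower bound on $I(Y;Z)$; these are contrapositives of one another and both are fine. The linear-regime observation about $\KL{\cB_q}{\cB_p}$ is exactly \pref{fact:divergence}.

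Stage~(i) has a genuine gap in the choice of conditioning. You propose to condition on all of $S_{-i}$, argue that after this conditioning $\Pi^\cM_i$ becomes an ``Alice message'' and $\Pi^{A\leftrightarrow B}_i$ a genuine two-party protocol so that only $U$ can create correlation, and then hope that $I(\vec S;U\mid T)\le|U|=o(k)$ lets you average over $i$. But the quantity that actually emerges from the rectangle argument with your conditioning is $I(S_i;T\mid \Pi_i,S_{-i})\le I(S_i;U\mid T,S_{-i})$, and the terms $I(S_i;U\mid T,S_{-i})$ for different $i$ do \emph{not} telescope under the chain rule: the conditioning events $\{S_{-i}\}$ are not nested. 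In fact each such term can be $\Theta(|U|)$ simultaneously for all $i$ (take $U$ to be a short hash of $(\vec S,T)$: given $S_{-i}$ and $T$, learning $U$ reveals about $|U|$ bits about $S_i$), so the average over $i$ is $\Theta(|U|)=\Theta(o(k))$, not $o(1)$, and the correlation bound needed for stage~(ii) fails.

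The paper avoids this by conditioning instead on a \emph{random $p$-sized prefix} $(S_{I_{<\bl}},\Pi^\cM_{I_{<\bl}})$ with $|UT|=o(p)\le o(k)$, chosen so that two competing constraints are simultaneously met. On the one hand, the nested prefix structure makes the chain rule telescope exactly (\pref{cl:smallcorr}): $\E_{\bl}\, I(S_{i_\ell}\Pi^\cM_{i_\ell};UT\mid S_{i_{<\ell}}\Pi^\cM_{i_{<\ell}})=I(S_\cP\Pi^\cM_\cP;UT)/p\le|UT|/p=o(1)$, which is what the round-elimination step (\pref{cl:roundelimination}) feeds on. On the other hand, because $p=o(k)$, conditioning on only $p$ Megan messages costs $I(S_{I_\bl};S_{I_{<\bl}}\Pi^\cM_{I_{<\bl}}T,\cP,\bl)\le pC/(k-p)=o(C)$ (\pref{cl:pathsampling}), so the information budget stays at $C+o(C)$. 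Conditioning on all of $S_{-i}$ misses the first property, and conditioning on all of $(S_{-i},\Pi^\cM_{-i})$ would wreck the second (those $k$ Megan messages carry up to $kC$ bits, all correlated with $S_i$ given $S_{-i}$). The balancing act via a random $p$-sized subset with $|UT|\ll p\ll k$ is the missing ingredient in your stage~(i), and without it the residual correlation fed into stage~(ii) is not $o(1)$ but $o(k)$, which is useless.

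Two smaller remarks: (a) $U$ is private to Bob, not a common string, so ``run on top of the common string $(\Pi^\cM_i,U)$'' is inaccurate, though the rectangle argument you intend still goes through because only Bob's messages depend on $U$. (b) Your collision-free $D_\ell$ embedding in stage~(ii) differs from the paper's, which simply samples the off-coordinates i.i.d.\ $\cB_\gamma$ and absorbs the $\le0.001$ probability of an accidental off-coordinate intersection into a one-sided error; both variants can be made to work, but the paper's is lighter on bookkeeping.
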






\section{Lower Bound for 4-party NOF Protocols}

\paragraph{Notations.}
We denote by $S_i^j$ the $j$-th entry of the set $S_i$ and analogously for $T$. We write 
$S_i^{<j} :=  S_i^1, \ldots , S_i^{j-1}$, similarly  $S_i^{-j} := S_i^1, \ldots , S_i^{j-1}, S_i^{j+1} , \ldots, S_i^n$, 
and analogously for $T$. 

For technical reasons, we shall need to carry out the proof on a restricted subset $\cP$ of the original $[k]$ coordinates, where 
$\cP = (i_1, \ldots , i_p ) \in [k]^p$. We write $i \in \cP$ if there exists some index $\ell \in [p]$ such that $i_\ell = i$. We write $(i_1, \ldots, i_{\ell - 1})$ and $(i_1, \ldots, i_{\ell })$ in short hand as $i_{< \ell}$ and $i_{\leq \ell}$ respectively. $S_{i_{< \ell}}$ refers to $S_{i_1} , \ldots , S_{i_{\ell - 1}}$ and $\Pi_{i_{< \ell}}^{\cM}$ refers to $\Pi^{\cM}_{i_1}, \ldots , \Pi^{\cM}_{i_{\ell - 1}}$. Furthermore, $S_{\cP}$ and $\Pi^{\cM}_{\cP}$ refers to $S_{i_1} , \ldots , S_{ i_{p} }$ and $\Pi^{\cM}_{i_1} , \ldots , \Pi^{\cM}_{ i_{p} }$ respectively. Also $\Pi_{i}^{ans}$ denotes the output of $\Gamma = (U, \Pi)$ when the index of interest is $i \in [k]$.

Let $C := \max_{i\in [k]} |\Pi_i|$ be the maximal number of 
bits exchanged between Megan, Alice and Bob over all $i \in [k]$. Then in particular, for every $i\in [k]$, 
\begin{equation} \label{eq:smallq}
|\Pi^{\cM}_i | \leq |\Pi_i| \leq C.
\end{equation}
Observe that since Megan does not have access to $T$, for any subset $\cP$ of coordinates it holds that 
\begin{equation} \label{eq:non-adaptive}
    I(T; S_{\cP} \Pi^{\cM}_\cP ) = 0
\end{equation}
assuming $\vec{S} \perp T$, since Megan's message only depends on $i$ and $\vec{S}$. Indeed, 
$I(T; S_{\cP} \Pi^{\cM}_\cP ) \leq I(T ; \vec{S}, \Pi^{\cM}_{\cP} ) = I(T ; \vec{S} ) + I(T; \Pi^{\cM}_\cP ~ | \vec{S}) = 0$. 
It is noteworthy that, by contrast, $I(T; \Pi^{\cM}_\cP ~ | \vec{S}, \bf U\rm) \neq 0$, since conditioning on $U$ correlates Megan's message 
with Bob's input. Indeed, dealing with this subtle feature will be the heart of this section and will later explain the choice of $Z^\DISJ$.  


\paragraph{Hard Distribution.} We consider the natural hard product distribution for set-disjointness, extended to  
$\SEL{\DISJ_n}{k}$:  
For all $i \in [k]$ and $j \in [n]$, $S_i^j$ and $T^j$ are i.i.d. Bernoulli $\cB_\gamma$ for $\gamma = \frac{1}{1000 \sqrt{n}}$. 

\subsection{A Low Correlation Random Process for \texorpdfstring{$\DISJ_n$}{DISJ}}

The goal of this section is to show that an efficient 4-party NOF protocol $\Gamma$ for $\SEL{\DISJ_n}{k}$ 
implies a \emph{low-correlation, low-information} random process for computing a single copy of set-disjointness (under the 
hard product distribution). 
For technical reasons, we restrict the proof to a random subset $\cP = (I_1, \ldots, I_p) \in_R [k]^p$ of $p$ coordinates, 
with the constraint that for any $k_1, k_2 \in [p]$, if $k_1 \neq k_2$, then $I_{k_1} \neq I_{k_2}$ where $p$ is a parameter 
that will be chosen as $o(k)$.\footnote{This is equivalent to picking a random ordering over a random $p$-sized subset in $[k]$} Let $\bl \in_R [p]$ be a uniformly random index. 
We shall prove the following Lemma: 

\begin{lemma} \label{lem:goodq}
Let $\Gamma = (U, \Pi)$ be a 4-party NOF protocol for $\SEL{\DISJ_n}{k}$ with $|\Pi_i| < C$ for all $i \in [k]$. Then 
for $p = o( k )$, there exists a random variable $Z^{\DISJ}$ containing $\cP$ and $\bl$ such that
\begin{itemize}
\item If $\DISJ_n(S_{I_{\bl}},T) = 0$, then $Z^{\DISJ}_{ans} = 0$.
\item If $\DISJ_n(S_{I_{\bl}},T) = 1$, then $Z^{\DISJ}_{ans} = 1$. 
\item Satisfies the following information cost bound
\begin{align}
    & I( Z^{\DISJ} T ; S_{I_{\bl}} ) \leq C  + o( C )  \label{eq:smallS} \\
    & I( Z^{\DISJ} ; T )  \leq C \label{eq:smallT} 
\end{align}
\item Satisfies the following correlation bound	
\begin{align}
     I( S_{i_{\ell}} ; T | Z^{\DISJ} ) \leq O \left( \frac{|UT|}{p} \right). \label{eq:smallCorrelation}
\end{align} 
\end{itemize}
\end{lemma}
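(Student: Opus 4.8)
The plan is to let $Z^\DISJ$ be the public transcript produced by running $\Gamma$ on a single, randomly chosen coordinate, decorated with just enough side information to make a chain--rule argument close. Sample an injective sequence $\cP=(I_1,\dots,I_p)\in_R[k]^p$ and $\bl\in_R[p]$, independently of the inputs, and run $\Gamma$ with query index $I_{\bl}$, producing the transcript $\Pi_{I_{\bl}}=(\Pi^\cM_{I_{\bl}},\Pi^{A\leftrightarrow B}_{I_{\bl}})$. Under the hard product distribution the pair $(S_{I_{\bl}},T)$ is already an i.i.d.\ copy of $\DISJ_n$, so nothing has to be ``embedded''. Define
\[
 Z^\DISJ\ :=\ \bigl(\cP,\ \bl,\ S_{I_{<\bl}},\ \Pi^\cM_{\cP},\ \Pi^{A\leftrightarrow B}_{I_{\bl}}\bigr),
\]
where $S_{I_{<\bl}}=(S_{I_1},\dots,S_{I_{\bl-1}})$ are the sets preceding the special coordinate in the chosen order and $\Pi^\cM_\cP=(\Pi^\cM_{I_1},\dots,\Pi^\cM_{I_p})$ are Megan's broadcasts on all coordinates of $\cP$ (each a deterministic function of $\vec S$ and the index). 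Let $Z^\DISJ_{ans}:=\Pi^{ans}_{I_{\bl}}$, the output of $\Gamma$, which is a function of $\Pi_{I_{\bl}}\subseteq Z^\DISJ$; the two correctness bullets are then inherited from that of $\Gamma$ (for a protocol with small error one first restricts to the high-probability set of inputs on which it is correct, at the cost of only the lower-order terms that the ``$o(C)$'' slack below absorbs).

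The information--cost bounds \eqref{eq:smallS}--\eqref{eq:smallT} are essentially bookkeeping. The indices $\cP,\bl$ are independent of everything; $S_{I_{<\bl}}$ is a block of fresh sets, hence independent of $S_{I_{\bl}}$ given $(\cP,\bl)$ and, like all of $\vec S$, independent of $T$; and by \eqref{eq:non-adaptive} Megan's broadcasts satisfy $T\perp(S_\cP,\Pi^\cM_\cP)$, so they carry no information about $T$. Hence $I(Z^\DISJ;T)=I(\Pi^{A\leftrightarrow B}_{I_{\bl}};T\mid\cP,\bl,S_{I_{<\bl}},\Pi^\cM_\cP)\le|\Pi^{A\leftrightarrow B}_{I_{\bl}}|\le C$, which is \eqref{eq:smallT}; and since $\Pi^\cM_{I_{\bl}}$ together with $\Pi^{A\leftrightarrow B}_{I_{\bl}}$ is exactly $\Pi_{I_{\bl}}$ (of length $\le C$), the transcript's contribution to $I(Z^\DISJ T;S_{I_{\bl}})$ is $\le C$, while the remaining broadcasts $\Pi^\cM_{I_j}$ with $j\neq\bl$ amortize to a lower--order term over the random choice of $\bl$ by a chain rule along $I_1,\dots,I_p$ — giving \eqref{eq:smallS}.

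The correlation bound \eqref{eq:smallCorrelation} is the crux. The obstruction is that Charlie's advice $U=U(\vec S,T)$ couples $\vec S$ and $T$, and the Alice--Bob exchange $\Pi^{A\leftrightarrow B}_{I_{\bl}}$ depends on $U$; conditioning a two--party transcript on a \emph{non}-product prior need not preserve independence, so the usual rectangle property does not apply to $S_{I_{\bl}}$ versus $T$ directly. The key move is to treat $(T,U)$ as one object — ``Bob's input''. Conditioned on $(\cP,\bl,S_{I_{<\bl}},\Pi^\cM_\cP)$, which already contains everything common to Alice and Bob (the index $I_{\bl}$ and Megan's broadcast $\Pi^\cM_{I_{\bl}}$) plus side information the exchange never uses, the exchange $\Pi^{A\leftrightarrow B}_{I_{\bl}}$ \emph{is} a deterministic two--party protocol between Alice (holding $S_{I_{\bl}}$) and Bob (holding $(T,U)$). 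Iterating \pref{fact:chainrule2} over its rounds — Alice's next bit is independent of $(T,U)$ given her input and the history, and symmetrically Bob's next bit is independent of $S_{I_{\bl}}$ given his input and the history — shows conditioning on the exchange cannot increase $I(S_{I_{\bl}};TU)$, so
\[
 I\bigl(S_{I_{\bl}};T\mid Z^\DISJ\bigr)\ \le\ I\bigl(S_{I_{\bl}};TU\mid Z^\DISJ\bigr)\ \le\ I\bigl(S_{I_{\bl}};TU\mid\cP,\bl,S_{I_{<\bl}},\Pi^\cM_\cP\bigr).
\]
Because $T\perp(S_\cP,\Pi^\cM_\cP)$ by \eqref{eq:non-adaptive}, the chain rule collapses the last term to $I\bigl(S_{I_{\bl}};U\mid\cP,\bl,S_{I_{<\bl}},\Pi^\cM_\cP,T\bigr)$. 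Now comes the telescoping: the conditioning $(\cP,\Pi^\cM_\cP,T)$ does \emph{not} depend on $\bl$, so expanding the conditioning on $\bl\in_R[p]$ and applying the chain rule along the random order gives
\[
 I\bigl(S_{I_{\bl}};U\mid\cP,\bl,S_{I_{<\bl}},\Pi^\cM_\cP,T\bigr)\ =\ \tfrac1p\,I\bigl(S_\cP;U\mid\cP,\Pi^\cM_\cP,T\bigr)\ \le\ \tfrac{H(U)}{p}\ \le\ \tfrac{|U|}{p}\ =\ O\!\bigl(\tfrac{|UT|}{p}\bigr),
\]
which is \eqref{eq:smallCorrelation}.

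I expect the main obstacle to be exactly this last bound, and within it the handling of $U$. The naive plan — keep the two--party transcript, discard $U$, invoke the rectangle property for $S_{I_{\bl}}$ vs.\ $T$ — fails because $U$ depends on \emph{both} inputs, so conditioning on the transcript (which secretly carries partial information about $U$) can inject correlation that was not present a priori. The fix above rests on two features that must be arranged carefully: (i) folding $U$ into Bob's input, so that the rectangle/Cut-and-Paste property of the \emph{two--party} sub-protocol applies verbatim — this is the only place the restricted (4-party) structure is used, since it is precisely what lets the entire $\vec S$-dependent, non-rectangular portion of the computation be carried by Megan's \emph{broadcast}, which sits in the conditioning; and (ii) making the object on which we telescope independent of the random index $\bl$, which forces putting \emph{all} of $\Pi^\cM_\cP$ (not merely $\Pi^\cM_{I_{\bl}}$) into $Z^\DISJ$ and conditioning on the prefix $S_{I_{<\bl}}$, so that the chain rule along $I_1,\dots,I_p$ telescopes to $H(U)\le|U|$ and produces the $1/p$ gain. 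A secondary, purely technical issue is reconciling the presence of all of $\Pi^\cM_\cP$ in $Z^\DISJ$ with the sharp ``$C+o(C)$'' form of \eqref{eq:smallS}, which is again where the amortization over the random ordering of $\cP$ enters.
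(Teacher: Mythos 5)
Your proposal matches the paper's strategy and is correct, but your choice of $Z^\DISJ$ is genuinely different from the paper's and this is worth flagging. The paper sets $Z^\DISJ := (\Pi_{I_{\bl}},\,S_{I_{<\bl}},\,\Pi^{\cM}_{I_{<\bl}},\,\cP,\,\bl)$ — i.e.\ only the \emph{prefix} broadcasts $\Pi^{\cM}_{I_{<\bl}}$ (plus $\Pi^{\cM}_{I_{\bl}}$, sitting inside $\Pi_{I_{\bl}}$) appear — whereas you include the entire tuple $\Pi^{\cM}_{\cP}$, including the suffix broadcasts $\Pi^{\cM}_{I_{>\bl}}$. This leads to a slightly different, but equally valid, telescope for the correlation bound: you accumulate $S_{I_\ell}$ alone across $\ell$ against a fixed conditioning $(\cP,\Pi^{\cM}_\cP,T)$, bounding the sum by $H(U)\le|U|$, whereas the paper's Claim 3.4 accumulates the pairs $(S_{I_\ell},\Pi^{\cM}_{I_\ell})$ against the fixed conditioning $\cP$, bounding by $|UT|$. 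Both are correct, and both isolate the same structural fact (your "fold $U$ into Bob's input" plus the round-elimination from Fact~\ref{fact:chainrule2}) that underlies the paper's Claim 3.5.

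Where you should be more careful is the information-cost bound \eqref{eq:smallS}. Because your $Z^\DISJ$ carries the suffix $\Pi^{\cM}_{I_{>\bl}}$, which is a function of \emph{all} of $\vec S$ and hence can leak information about $S_{I_{\bl}}$, you owe an argument that its contribution is $o(C)$ and not merely $O(C)$. Your description — that the extra broadcasts "amortize by a chain rule along $I_1,\dots,I_p$" — reads like a telescope over $\ell\in[p]$, which would give only $\frac{1}{p}\cdot p C = C$ (i.e., an extra additive $C$, yielding $2C$, not $C+o(C)$). What actually works is exactly the Claim~\ref{cl:pathsampling} mechanism applied to $\Pi^{\cM}_{I_{\neq\bl}}$ rather than just the prefix: fix $\bl=\ell$ and the off-diagonal indices, and average over $I_{\bl}$ ranging over $k-p+1$ choices; subadditivity of mutual information over the independent $S_i$'s then gives $\E_{I_\bl}\bigl[I(\Pi^\cM_{I_{\neq\bl}};S_{I_\bl}\mid S_{I_{<\bl}},T,\cP,\bl)\bigr]\le \frac{(p-1)C}{k-p+1} = o(C)$. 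The denominator $k-p$ (not $p$) is the whole point, and you should invoke this averaging explicitly rather than a chain rule over $\ell$. With that repair, your choice of $Z^\DISJ$ indeed recovers the stated $C+o(C)$ and the lemma as claimed.
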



Intuitively, \pref{lem:goodq} states that an efficient 4-party NOF protocol for $\SEL{\DISJ_n}{k}$ can be used to design 
a random process $Z(S_{I_{\bl}}, T)$ which for a random $\cP$ and $\bl$ computes $\DISJ_n$ on inputs $S_{I_{\bl}}$ and $T$, 
in a way that simultaneously: (i) $Z$ reveals small information on average about both $S_{I_{\bl}}$ and $T$, and (ii) creates 
small correlation between $S_{I_{\bl}}$ and $T$ assuming $|UT| = o(p) \leq o( k )$ (i.e., it is in some sense ``close" to a 2-party 
communication protocol). The choice of $Z$ is set to 
\begin{align*}
& Z^{\DISJ} := \Pi_{I_{\bl}} S_{I_{< \bl}} \Pi_{I_{< \bl}}^{\cM} ,\cP, \bl = \Pi_{I_{\bl}} \Pi_{I_{\bl}}^{\cM} S_{I_{< \bl}} \Pi_{I_{< \bl}}^{\cM} ,\cP, \bl \\
& Z^{\DISJ}_{ans} := \Pi_{I_{\bl}}^{ans}.
\end{align*}
where the equality holds for $Z^{\DISJ}$ since $\Pi_{I_{\bl}}^{\cM}$ is included in $\Pi_{I_{\bl}}$. Since we do not bound the number of rounds, we can without loss of generality assume that $\Pi_{I_{\bl}}^{ans}$ is included in $\Pi_{I_{\bl}}$.
Note that $\Pi_{I_{\bl}}, \Pi_{I_{< \bl}}^{\cM}$ are random variables that depend on Charlie's advice 
$U$, but importantly $U$ is not included explicitly in $Z^{\DISJ}$. 
We begin with the following claim, which morally states that $Z^{\DISJ}\setminus \Pi_{I_{\bl}}$ 
reveals little information on an average set $S_i$: 

\begin{claim} \label{cl:pathsampling}
\begin{equation} \label{eq:smallinfoonS}
    I( S_{I_{\bl}} ; S_{I_{< \bl}} \Pi_{I_{< \bl}}^{\cM} T , \cP , \bl ) \leq \frac{p \cdot C}{k - p}. 
\end{equation}
\end{claim}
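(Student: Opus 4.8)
The plan is to do chain-rule bookkeeping to show that the ``prefix'' $S_{I_{<\bl}}\,\Pi_{I_{<\bl}}^{\cM}$ can only learn about the target set $S_{I_{\bl}}$ through Megan's messages $\Pi_{I_{<\bl}}^{\cM}$, and then to argue that these $\le p$ messages of $\le C$ bits each carry, on average over a freshly sampled coordinate, only a $1/(k-p)$ fraction of information -- because $S_1,\dots,S_k$ are mutually independent. First I would strip off $T$, $\cP$ and $\bl$. Under the hard distribution $(\cP,\bl)$ is independent of $(\vec S,T)$ and $\vec S\perp T$; since $S_{I_{\bl}}$, $S_{I_{<\bl}}$ and $\Pi_{I_{<\bl}}^{\cM}$ are all deterministic functions of $(\vec S,\cP,\bl)$ -- here using that Megan's message $\Pi_i^{\cM}$ depends only on $\vec S$ and $i$, cf.\ \eqref{eq:non-adaptive} -- the whole tuple $(S_{I_{\bl}},S_{I_{<\bl}},\Pi_{I_{<\bl}}^{\cM},\cP,\bl)$ is independent of $T$, so $I(S_{I_{\bl}};T\mid S_{I_{<\bl}}\Pi_{I_{<\bl}}^{\cM},\cP,\bl)=0$. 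Similarly $I(S_{I_{\bl}};\cP,\bl)=0$ (the $S_i$ are i.i.d.\ and independent of $(\cP,\bl)$) and $I(S_{I_{\bl}};S_{I_{<\bl}}\mid\cP,\bl)=0$ (given $\cP,\bl$ these are sets at distinct coordinates). Chaining these identities, $I(S_{I_{\bl}};S_{I_{<\bl}}\Pi_{I_{<\bl}}^{\cM}T,\cP,\bl)=I(S_{I_{\bl}};\Pi_{I_{<\bl}}^{\cM}\mid S_{I_{<\bl}},\cP,\bl)$, so it suffices to bound the latter by $pC/(k-p)$.

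Next I would isolate a small superadditivity fact: if $(X_i)_{i\in B}$ are mutually independent conditioned on a random variable $W$, then for any $g$,
\begin{equation*}
\sum_{i\in B}I(g;X_i\mid W)\ \le\ \sum_{i\in B}I(g;X_i\mid W,X_{<i})\ =\ I\big(g;(X_i)_{i\in B}\,\big|\,W\big)\ \le\ H(g\mid W),
\end{equation*}
where the first inequality applies \pref{fact:chainrule1} coordinatewise (using $I(X_i;X_{<i}\mid W)=0$) and the middle step is the chain rule (\pref{fact:chainrule}).

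To finish, I would amortize Megan's budget over a fresh coordinate. Condition on $\bl=\ell$ and on the ordered prefix $I_{<\bl}=A$, so $|A|=\ell-1\le p-1$; then $I_{\bl}$ is uniform over the $k-\ell+1\ge k-p+1$ indices of $[k]\setminus A$, while the trailing indices $I_{>\bl}$ are irrelevant (they are independent of $\vec S$ and enter neither $S_{I_{\bl}}$ nor $\Pi_{I_{<\bl}}^{\cM}$). Writing $S_A:=S_{I_{<\bl}}$, the sets $(S_i)_{i\in[k]\setminus A}$ are still mutually independent given $S_A$, and, being $\ell-1$ Megan messages of $\le C$ bits each, $H(\Pi_{I_{<\bl}}^{\cM}\mid S_A)\le(\ell-1)C$ by \eqref{eq:smallq}. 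The sub-lemma with $W=S_A$ and $B=[k]\setminus A$ then gives
\begin{equation*}
\E_{I_{\bl}}\!\left[I\big(S_{I_{\bl}};\Pi_{I_{<\bl}}^{\cM}\,\big|\,S_A\big)\right]=\frac{1}{k-\ell+1}\sum_{i\in[k]\setminus A}I\big(S_i;\Pi_{I_{<\bl}}^{\cM}\,\big|\,S_A\big)\ \le\ \frac{(\ell-1)C}{k-\ell+1}\ \le\ \frac{pC}{k-p},
\end{equation*}
and since this bound is pointwise in $(\ell,A)$, taking the expectation over $(\ell,A)$ yields $I(S_{I_{\bl}};\Pi_{I_{<\bl}}^{\cM}\mid S_{I_{<\bl}},\cP,\bl)\le pC/(k-p)$, which together with the first paragraph proves the claim.

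The one delicate point is the first step: one must verify that conditioning on $S_{I_{<\bl}}$ closes off every channel between $S_{I_{\bl}}$ and the prefix other than Megan's messages, and that $T$ genuinely drops out -- which is precisely where it matters that $\Pi^{\cM}$ does not see $T$ (unlike Charlie's advice $U$, which does, and which is therefore deliberately excluded from $Z^{\DISJ}$). The conceptual heart is the last step: Megan spends $\le C$ bits per query, but this budget is diluted over the $\ge k-p$ coordinates she has not been queried about, so a uniformly random such coordinate absorbs at most a $1/(k-p)$ fraction of it.
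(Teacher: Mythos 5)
Your proof is correct and follows essentially the same route as the paper's: both arguments amortize Megan's $(\ell-1)\cdot C$ bits of entropy over the $\geq k-p$ untouched coordinates via superadditivity of mutual information for mutually independent $S_i$'s. The only cosmetic difference is that you first strip off $T,\cP,\bl,S_{I_{<\bl}}$ to isolate $I(S_{I_{\bl}};\Pi^{\cM}_{I_{<\bl}}\mid S_{I_{<\bl}},\cP,\bl)$ and state superadditivity as a standalone sub-lemma, whereas the paper carries $T$ along, rewrites $I(S_{i\notin i_{<\ell}};S_{i_{<\ell}}\Pi^{\cM}_{i_{<\ell}},T)=I(S_{i\notin i_{<\ell}};\Pi^{\cM}_{i_{<\ell}}\mid S_{i_{<\ell}},T)$ inline, and bounds by $H(\Pi^{\cM}_{i_{<\ell}})$.
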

\begin{proof}
First, note that we have
\begin{equation*} 
	I( S_{I_{\bl}} ; S_{I_{< \bl}} \Pi_{I_{< \bl}}^{\cM} , \cP, \bl ) = I( S_{i_{\ell}}  ; S_{i_{< \ell}} \Pi_{i_{< \ell}}^{\cM} | \cP \bl  ) = \E_{ \cP, \bl } \left[ I( S_{i_\ell} ; S_{i_{< \ell}} \Pi_{i_{< \ell}}^{\cM} | \cP = i_{\leq p}, \bl = \ell) \right]
\end{equation*}
where the first equality holds since $I(S_{I_{\bl}}; \cP, \bl ) = 0$ since $\cP, \ell$ are independent of $\vec{S}$ and $T$ along with \pref{fact:chainrule}. Furthermore, for any setting of $\cP$ and $\bl$, we have
\begin{equation*}
I( S_{i_\ell} ; S_{i_{< \ell}} \Pi_{i_{< \ell}}^{\cM} | \cP = i_{\leq p}, \bl = \ell) = I( S_{i_\ell} ; S_{i_{< \ell}} \Pi_{i_{< \ell}}^{\cM}).
\end{equation*}
since the choice of $\cP$ and $\bl$ are independent of entries in $\vec{S}$ and $T$.
Therefore, we have
\begin{equation} \label{eq:nopl}
	I( S_{I_{\bl} } ; S_{I_{< \bl}} \Pi_{I_{< \bl}}^{\cM} \cP \bl ) =I( S_{I_{\bl} } ; S_{I_{< \bl}} \Pi_{I_{< \bl}}^{\cM} | \cP \bl ) = \E_{\cP, \bl} \left[ I( S_{i_{\ell} } ; S_{i_{< \ell}} \Pi_{i_{< \ell}}^{\cM} ) \right]. 
\end{equation}
Now consider fixed $\bl = \ell$ and $I_{< \ell} = i_{< \ell}$. Then we prove the following inequality.
\begin{equation} \label{eq:p_average}
	\E_{\substack{ i_\ell, \\ \forall r < \ell, i_\ell \neq i_{r}}} \left[ I( S_{i_\ell} ; S_{i_{< \ell}} \Pi_{i_{< \ell}}^{\cM}, T ) \right] \leq \frac{p \cdot C}{k - p}.
\end{equation}
First observe that 
\begin{align*}
	& I( S_{i_\ell} ; S_{i_{< \ell}} \Pi_{i_{< \ell}}^{\cM}, T | I_{< \ell} = i_{<\ell} , I_{\geq \ell } ) = I( S_{i_\ell} ; S_{i_{< \ell}} \Pi_{i_{< \ell}}^{\cM}, T | I_{\ell}, I_{< \ell} = i_{<\ell}  ) \\
	& = \E_{i_\ell} \left[ I( S_{i_\ell} ; S_{i_{< \ell}} \Pi_{i_{< \ell}}^{\cM}, T | I_{\ell} = i_\ell, I_{< \ell} = i_{<\ell} ) \right] = \E_{\substack{ i_\ell, \\ \forall r < \ell, i_\ell \neq i_{r}}} \left[ I( S_{i_\ell} ; S_{i_{< \ell}} \Pi_{i_{< \ell}}^{\cM}, T ) \right]
\end{align*}
Now we have that for all $i \in [k]$ such that $i \notin (i_1, \ldots , i_{\ell -1})$, $S_i$'s are i.i.d. Therefore we get 
\begin{align*}
	& \E_{\substack{ i_\ell, \\ \forall r < \ell, i_\ell \neq i_{r}}} \left[ I( S_{i_\ell} ; S_{i_{< \ell}} \Pi_{i_{< \ell}}^{\cM}, T ) \right] = \frac{1}{k - (\ell - 1)} \sum_{i \notin i_{< \ell}} I(S_i ; S_{i_{< \ell}} \Pi_{i_{< \ell}}^{\cM}, T ) \\
	&  \leq \frac{I(S_{i \notin i_{< \ell}} ; S_{i_{< \ell}} \Pi_{i_{< \ell}}^{\cM}, T ) }{k - (\ell - 1)} = \frac{I(S_{i \notin i_{< \ell}} ; \Pi_{i_{< \ell}}^{\cM}| S_{i_{< \ell}}, T ) }{k - (\ell - 1)} \leq \frac{H(\Pi_{i_{< \ell}}^{\cM} | S_{i_{< \ell}}, T)}{k - (\ell - 1)} \\
	& \leq \frac{H( \Pi_{i_{< \ell}}^{\cM} )}{k - (\ell - 1)} \leq \frac{(\ell - 1) C }{k - (\ell - 1)} 
\end{align*}
where the last equality holds since $I(S_{i \notin i_{< \ell}} ; S_{i_{< \ell}}, T ) = 0$ from our assumption on the hard distribution.
Now since we have $\ell \leq p$, we get 
\begin{equation*}
    \E_{\substack{ i_\ell, \\ i_\ell \notin i_{< \ell}}} \left[ I( S_{i_\ell} ; S_{i_{< \ell}} \Pi_{i_{< \ell}}^{\cM}, T ) \right] \leq \frac{(\ell - 1) C }{k - (\ell - 1)}  \leq  \frac{p \cdot C}{k - p}.
\end{equation*}
Therefore, we have that \pref{eq:p_average} holds for any fixed $\bl = \ell$ and $I_{< \ell}= (i_1, \ldots, i_{\ell - 1})$. Taking expectation over $\cP$ and $\bl$, we get 
\begin{align*}
\E_{\cP, \bl} \left[ I( S_{i_{\ell} } ; S_{i_{< \ell}} \Pi_{i_{< \ell}}^{\cM} T ) \right]	= \E_{\bl} \left[  \E_{I_{< \ell}} \E_{I_{\ell}} \left[ I( S_{i_{\ell} } ; S_{i_{< \ell}} \Pi_{i_{< \ell}}^{\cM} T ) \right] \right] \leq \frac{p \cdot C}{k - p} .
\end{align*}
\end{proof}

The next claim, which is another direct application of the chain rule,  
asserts that for a random coordinate $i\in \cP$, Megan's messages in $\Pi_i$ ($\Pi_i^{\cM}$) do not heavily depend 
on $T,U$, conditioned on previous coordinate transcripts. 

\begin{claim} \label{cl:smallcorr}
For any fixed $\cP = i_{\leq p}$, if $\bl$ is uniformly distributed over $[p]$
\begin{equation} \label{eq:ut}
\E_{l} \left[ I( S_{i_\ell} \Pi_{i_{\ell}}^{\cM} ; U T | S_{i_{< \ell}} \Pi_{i_{< \ell}}^{\cM} , \bl = \ell, \cP = i_{\leq p}  \right] \leq O \left( \frac{|UT|}{p} \right)
\end{equation}
\end{claim}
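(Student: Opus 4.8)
The plan is a one-line chain-rule telescoping; the only subtlety is that the auxiliary index $\bl$ must be seen to play no role. First I would record the (benign) fact that $\bl$ is, by construction, drawn uniformly from $[p]$ independently of $\vec{S}$, of $T$, of Charlie's advice $U$, and of every coordinate transcript $\Pi_i$ — indeed $\cP$ and $\bl$ are external random choices used only to set up the reduction, exactly as already exploited in the proof of \pref{cl:pathsampling}. Consequently, for a fixed tuple $\cP = i_{\leq p}$ and any $\ell \in [p]$,
\[
I\big( S_{i_\ell} \Pi_{i_\ell}^{\cM} \, ; \, U T \mid S_{i_{<\ell}} \Pi_{i_{<\ell}}^{\cM}, \, \bl = \ell, \, \cP = i_{\leq p} \big) \;=\; I\big( S_{i_\ell} \Pi_{i_\ell}^{\cM} \, ; \, U T \mid S_{i_{<\ell}} \Pi_{i_{<\ell}}^{\cM}, \, \cP = i_{\leq p} \big),
\]
so conditioning on $\bl = \ell$ merely singles out the $\ell$-th coordinate of the already-fixed tuple.

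Since $\bl$ is uniform on $[p]$, the left-hand side of \eqref{eq:ut} equals $\tfrac{1}{p}\sum_{\ell=1}^{p} I\big( S_{i_\ell}\Pi_{i_\ell}^{\cM} ; UT \mid S_{i_{<\ell}}\Pi_{i_{<\ell}}^{\cM}, \cP = i_{\leq p}\big)$. I would then apply the chain rule for mutual information (\pref{fact:chainrule}) to the block $\big(S_{i_1}\Pi_{i_1}^{\cM}, \ldots, S_{i_p}\Pi_{i_p}^{\cM}\big)$, revealed one coordinate at a time and everything conditioned on $\cP = i_{\leq p}$, to collapse the sum:
\[
\sum_{\ell=1}^{p} I\big( S_{i_\ell}\Pi_{i_\ell}^{\cM} ; UT \mid S_{i_{<\ell}}\Pi_{i_{<\ell}}^{\cM}, \cP = i_{\leq p}\big) \;=\; I\big( S_{\cP}\,\Pi^{\cM}_{\cP} \, ; \, UT \mid \cP = i_{\leq p}\big) \;\leq\; H(U,T) \;\leq\; |UT|,
\]
where the first inequality bounds mutual information by the entropy of one side (using also \pref{fact:conditioningentropy}, $H(\cdot \mid \cP) \leq H(\cdot)$), and $|UT|$ denotes the combined description length of $U$ and $T$ (equivalently one may carry $H(U,T)$ throughout). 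Dividing by $p$ gives $\E_{\bl}[\cdot] \leq |UT|/p = O(|UT|/p)$, which is \eqref{eq:ut}.

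The argument carries essentially no analytic content; the two points deserving care are (i) confirming the independence of $\bl$ from the inputs and all transcripts, which is precisely what lets the $\bl = \ell$ conditioning be dropped above, and (ii) maintaining the conditioning on the \emph{fixed} tuple $\cP = i_{\leq p}$ uniformly across every term, so that the random variables $S_{i_\ell}, \Pi_{i_\ell}^{\cM}$ are all well-defined and the chain-rule telescoping is legitimate. The only ``obstacle," such as it is, is bookkeeping: making sure the entropy bound is invoked with the right conditioning so that the resulting $|UT|$ is exactly the quantity appearing in \eqref{eq:smallCorrelation} of \pref{lem:goodq}, where it will later be combined with the choices $p = o(k)$ and $|U| = o(k)$ to drive the correlation below $o(1/n)$.
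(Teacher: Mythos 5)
Your proposal is correct and follows the paper's own proof essentially step for step: drop the conditioning on $\bl=\ell$ using the independence of $\bl$ from the inputs and transcripts, rewrite the expectation as a $\tfrac{1}{p}$-weighted sum over $\ell\in[p]$, telescope via the chain rule to $I(S_\cP\Pi^\cM_\cP;UT\mid\cP=i_{\le p})$, and bound by $|UT|$.
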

\begin{proof}
Again, since $\bl$ is picked independently at random ($UT$ is independent of $\ell$), we get
\begin{equation*}
I( S_{i_\ell} \Pi_{i_{\ell}}^{\cM} ; U T | S_{i_{< \ell}} \Pi_{i_{< \ell}}^{\cM} , \bl = \ell, \cP = i_{\leq p} ) = 	I( S_{i_\ell} \Pi_{i_{\ell}}^{\cM} ; U T | S_{i_{< \ell}} \Pi_{i_{< \ell}}^{\cM} , \cP = i_{\leq p} )
\end{equation*}
Then taking expectation over $\bl$, we get
\begin{align*}
    & \E_{\bl} \left[ I( S_{i_\ell} \Pi_{i_{\ell}}^{\cM} ; UT | S_{i_{< \ell}} \Pi_{i_{< \ell}}^{\cM}, \cP = i_{\leq p} ) \right] \\
    & = \frac{1}{p} \sum_{\ell \in [p]} I( S_{i_\ell} \Pi_{i_{\ell}}^{\cM} ; UT | S_{i_{< \ell}}\Pi_{i_{< \ell}}^{\cM} , \cP = i_{\leq p} ) \\
    & = \frac{I( S_{i_{\leq p}} \Pi^{\cM}_{i_{\leq p}} ; UT | \cP = i_{\leq p} ) }{p} \leq  \frac{|UT|}{p}.
\end{align*}
	
\end{proof}

Recall that $\Pi_{i} := (\Pi_{i}^{\cM}, \Pi_{i}^{A \leftrightarrow B})$ is the transcript between Megan, Alice and Bob when the index of the interesting subproblem is $i$. We now turn to establish the fact that conditioning on 
$\Pi_{i}$ cannot introduce too much correlation between the (originally independent) $S_{i}$ and $T$. As discussed in the 
introduction, if $\Pi_i$ 
were a standard (deterministic) 2-party protocol, then this would have indeed been the case (as the \emph{rectangle} 
property of communication protocols ensures that independent inputs $S_i,T$ remain so throughout the 
protocol: $I(S_i;T|\Pi_i) = 0$). Alas, $\Pi_i$ no longer has the rectangle property anymore (as Charlie's message $U(\vec{S},T)$ correlates the 
inputs in an arbitrary way). Fortunately, we will be able to show that if Megan's messages only depend on $\vec{S}$ and $i$, and Alice's response only depend on $S_i, i$ and previous transcript then we can 
control the correlation introduced by $Z^\DISJ$ 
(by adding the aforementioned extra variables in the definition of $Z^\DISJ$) \emph{without increasing the information cost} 
of $Z^\DISJ$ with respect to $S_i$ and $T$. We begin with the following claim, 
which shows that the effect of conditioning on $Z^\DISJ$ 
can be upper bounded by the following term: 

\begin{claim} \label{cl:roundelimination}
For any fixed $\ell \in [p]$, and $\cP = i_{\leq p}$
\begin{align*}
    & I(S_{i_{\ell}} ; T | \Pi_{i_\ell} S_{i_{< \ell}} \Pi_{i_{< \ell}}^{\cM}, \cP = i_{\leq p}, \bl = \ell) = I(S_{i_{\ell}} ; T | \Pi_{i_\ell}^{\cM} \Pi_{i_\ell}^{A \leftrightarrow B}, S_{i_{< \ell}} \Pi_{i_{< \ell}}^{\cM}, \cP = i_{\leq p}, \bl = \ell) \\
    & \leq I(S_{i_\ell}\Pi_{i_\ell}^{\cM} ; UT | S_{i_{< \ell}} \Pi_{i_{< \ell}}^{\cM},  \cP = i_{\leq p}, \bl = \ell) . 
\end{align*}
\end{claim}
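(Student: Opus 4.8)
The plan is to get the displayed equality for free by unwinding $\Pi_{i_\ell} = (\Pi^{\cM}_{i_\ell}, \Pi^{A \leftrightarrow B}_{i_\ell})$, which makes the two conditional mutual informations literally identical, and then to prove the inequality by a short chain of mutual-information manipulations. Throughout I would fix $\ell \in [p]$ and $\cP = i_{\leq p}$ and condition silently on the event $\{\cP = i_{\leq p}, \bl = \ell\}$; this is harmless because $\cP$ and $\bl$ are drawn independently of $(\vec S, T)$ and of Charlie's advice $U$. To lighten notation, abbreviate $S := S_{i_\ell}$, $M := \Pi^{\cM}_{i_\ell}$, $P := \Pi^{A \leftrightarrow B}_{i_\ell}$ and $W := S_{i_{<\ell}}\Pi^{\cM}_{i_{<\ell}}$, so that the goal becomes $I(S; T \mid M P W) \le I(SM; UT \mid W)$.

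First I would add $U$ to the interesting term for free: by the chain rule and nonnegativity of mutual information, $I(S; T \mid M P W) \le I(S; TU \mid M P W)$. The crucial step is then to \emph{drop $P$} at the price of replacing $T$ by $(T,U)$, i.e.\ to show $I(S; TU \mid M P W) \le I(S; TU \mid M W)$. For this, I would fix any value $(M,W)=(m,w)$ and observe that, conditioned on this event, $P$ is precisely the transcript of a $2$-party protocol between Alice, whose input is $S$, and Bob, whose input is the pair $(T,U)$: the string $W$ never enters the $\Pi^{A \leftrightarrow B}_{i_\ell}$-phase (Alice does not hold $S_{i_{<\ell}}$, and $\Pi^{\cM}_{i_{<\ell}}$ are transcripts of unrelated runs), while once $m$ and $i_\ell$ are fixed each of Alice's messages is a function of $S$ and the transcript so far and each of Bob's messages is a function of $(T,U)$ and the transcript so far. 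Invoking the standard fact that conditioning on the transcript of a $2$-party protocol cannot \emph{increase} the mutual information between the two parties' inputs (valid for an arbitrary input distribution; it follows by induction on the number of rounds, since a single message --- being a function of only the speaker's side --- can only decrease this quantity), I get $I(S; TU \mid M=m, W=w, P) \le I(S; TU \mid M=m, W=w)$, and averaging over $(m,w)$ yields $I(S; TU \mid M W P) \le I(S; TU \mid M W)$. It then remains to clean up with the chain rule and the ``non-adaptivity'' identity $I(T; S_\cP \Pi^{\cM}_\cP)=0$ from \eqref{eq:non-adaptive}: since $(S,M,W)$ is a sub-tuple of $(S_\cP, \Pi^{\cM}_\cP)$, that identity gives $I(T; SMW)=0$, hence both $I(S; T \mid M W) = 0$ and $I(SM; T \mid W) = 0$. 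Consequently $I(S; TU \mid M W) = I(S; T \mid M W) + I(S; U \mid M W T) = I(S; U \mid M W T) \le I(SM; U \mid W T) = I(SM; UT \mid W)$, where the inequality is the chain rule together with nonnegativity ($I(SM; U \mid WT) = I(M; U \mid WT) + I(S; U \mid M W T)$) and the last equality is the chain rule together with $I(SM; T \mid W)=0$. Chaining the four bounds gives $I(S; T \mid M P W) \le I(SM; UT \mid W)$.

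I expect the only step with real content --- and hence the one to be careful about --- to be ``$I(S; TU \mid M P W) \le I(S; TU \mid M W)$'', since this is exactly where the restricted / $4$-party structure is used: once Megan's broadcast $M$ is fixed, the leftover exchange $P$ is a genuine $2$-party protocol on Alice's side $S$ versus Bob's side $(T,U)$, so it cannot manufacture any extra correlation beyond what is already present given $(M,W)$. Were Alice's later messages allowed to depend on all of $\vec S$ rather than only on $S_{i_\ell}$, this rectangularity would break down, which is consistent with the observation in the introduction that removing the restriction collapses the whole approach. Everything else reduces to the chain rule, nonnegativity of mutual information, and the previously established identity $I(T; S_\cP \Pi^{\cM}_\cP)=0$.
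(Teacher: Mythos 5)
Your proof is correct and follows essentially the same route as the paper: the paper's proof of this claim is precisely a hands-on, round-by-round instance of the ``deterministic $2$-party protocols cannot increase mutual information'' fact you invoke (Alice's rounds give $I(\Pi^\tau;UT\mid S_{i_\ell}\Pi^{\cM}_{i_\ell}\Pi^{<\tau}\cdots)=0$, Bob's rounds give $I(\Pi^\tau;S_{i_\ell}\mid UT\Pi^{\cM}_{i_\ell}\Pi^{<\tau}\cdots)=0$, then Fact~\ref{fact:chainrule2} peels rounds off one at a time), so the substance is identical. The only stylistic difference is your closing detour through $I(T;S_\cP\Pi^\cM_\cP)=0$, which is harmless but unnecessary since $I(S;TU\mid MW)\le I(SM;UT\mid W)$ already follows from one application of the chain rule and non-negativity, which is what the paper does.
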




\begin{proof}


The proof is by induction on the number of rounds of $\Pi_{i_\ell}^{A \leftrightarrow B} := \Pi^{1}_{i_\ell} , \ldots , \Pi_{i_\ell}^{C}$. 
If at $\tau \in [C]$, it is Alice's turn to speak, then since Alice's message is a function of $S_{i_\ell} \Pi_{i_{\ell}}^{\cM}$ and $\Pi_{i_\ell}^{< \tau}$, it holds that 
\begin{equation} \label{eq:oddround}
    I(\Pi_{i_\ell}^\tau ; UT | S_{i_\ell} \Pi_{i_{\ell}}^{\cM} S_{i_{< \ell}} \Pi_{i_{< \ell}}^{\cM}, \Pi_{i_\ell}^{< \tau},  
    \cP = i_{\leq p}, \bl = \ell )  \\ 
    \leq H(\Pi_{i_\ell}^\tau |  S_{i_\ell}, \Pi_{i_{\ell}}^{\cM}, \Pi_{i_\ell}^{< \tau}) 
    = 0
\end{equation}
(Note that this would not have been true had Alice's message been a function of all $\vec{S}$, because $U$ correlates $\vec{S}$ and $T$. 
This is where we use the fact that only Megan's message $\Pi_{i_{< \ell}}^{\cM}$ can depend on all $\vec{S}$). 
If it is Bob's turn to speak at round $\tau \in [C]$, then it still holds that 
\begin{equation} \label{eq:evenround}
    I(\Pi_{i_\ell}^\tau ; S_{i_\ell} | UT, S_{i_{< \ell}} \Pi_{i_{< \ell}}^{\cM}, \Pi_{i_{\ell}}^{\cM}, \Pi_{i_\ell}^{< \tau}, 
    \cP = i_{\leq p}, \bl = \ell) = 0
\end{equation}
since Bob's message is determined by $UT, \Pi_{i_{\ell}}^{\cM}$ and $\Pi_{i_\ell}^{< \tau}$ or equivalently 
\begin{align*}
	& I(\Pi_{i_\ell}^\tau ; S_{i_\ell}  | UT, S_{i_{< \ell}} \Pi_{i_{< \ell}}^{\cM},\Pi_{i_{\ell}}^{\cM}, \Pi_{i_\ell}^{< \tau}, \cP = i_{\leq p}, \bl = \ell ) \\
	& \leq H(\Pi_{i_\ell}^\tau |  UT, \Pi_{i_{\ell}}^{\cM}, \Pi_{i_\ell}^{< \tau} ) = 0.
\end{align*}
Then applying \pref{fact:chainrule2} iteratively with \pref{eq:oddround} and \pref{eq:evenround} for any $\ell \in [p]$, we get
\begin{align*}
    & I(S_{i_{\ell}} ; UT | \Pi_{i_{\ell}}^{\cM} \Pi_{i_\ell}^{\leq C} S_{i_{< \ell}} \Pi_{i_{< \ell}}^{\cM}, \cP = i_{\leq p}, \bl = \ell) \\
    & \leq I(S_{i_{\ell}} ; UT | \Pi_{i_{\ell}}^{\cM} \Pi_{i_\ell}^{< C } S_{i_{< \ell}} \Pi_{i_{< \ell}}^{\cM}, \cP = i_{\leq p}, \bl = \ell) \leq \ldots \\
    &  \leq I(S_{i_\ell}  ; UT | \Pi_{i_{\ell}}^{\cM} \Pi_{i_\ell}^{1 } S_{i_{< \ell}} \Pi_{i_{< \ell}}^{\cM},  \cP = i_{\leq p}, \bl = \ell) \\
    &  \leq I(S_{i_\ell}  ; UT | \Pi_{i_{\ell}}^{\cM}S_{i_{< \ell}} \Pi_{i_{< \ell}}^{\cM},  \cP = i_{\leq p}, \bl = \ell).
\end{align*}
We get the final inequality by non-negativity of mutual information or
\begin{equation*}
I(S_{i_\ell}  ; UT | \Pi_{i_{\ell}}^{\cM}S_{i_{< \ell}} \Pi_{i_{< \ell}}^{\cM},  \cP = i_{\leq p}, \bl = \ell) \leq I( S_{i_\ell} \Pi_{i_{\ell}}^{\cM} ; UT | S_{i_{< \ell}} \Pi_{i_{< \ell}}^{\cM},  \cP = i_{\leq p}, \bl = \ell).
\end{equation*}

\end{proof}

We are finally ready to prove \pref{lem:goodq} using \pref{cl:pathsampling}, \pref{cl:smallcorr} and \pref{cl:roundelimination}.

\begin{proofof}{\pref{lem:goodq}}
Recall the definition of 
$Z^{\DISJ}(S_{I_{\bl}},T):= \Pi_{I_{\bl}} \Pi_{I_{< \bl}}^{\cM} ,\cP, \bl$.  
The correctness guarantee of $\DISJ_n(S_{I_{\bl}},T)$ holds since we set $\Pi_{I_{\bl}}^{ans}$ as $Z^{\DISJ}_{ans}$, and the original NOF protocol $\Gamma = (U,\Pi)$ was assumed to have 0 error. 

To establish \pref{eq:smallS}, we get from \pref{cl:pathsampling} and \pref{fact:chainrule} that 
\begin{align*}
    I(\Pi_{I_{\bl}} S_{I_{< \bl}} \Pi_{I_{< \bl}}^{\cM} T ,\cP, \bl ; S_{I_{\bl}} ) & = I(S_{I_{< \bl}} \Pi_{I_{< \bl}}^{\cM} T ,\cP, \bl  ; S_{I_{\bl}} ) + I( \Pi_{i_\ell} ; S_{i_\ell} | S_{I_{< \ell}} \Pi_{I_{< \ell}}^{\cM} T ,\cP, \bl ) \\
    & \leq  \frac{p \cdot C }{k - p} + C.
\end{align*}
Since we set $p := o ( k )$, we get $\frac{p \cdot C }{k - p} = o( C )$.

Next for \pref{eq:smallT}, we write
\begin{align}
    I(Z^\DISJ; T ) = \; 
    & I(\Pi_{I_{\bl}} S_{I_{< \bl}} \Pi_{I_{< \bl}}^{\cM},\cP, \bl ; T ) = \underbrace{I(\cP, \bl ; T)}_{=0} + I(S_{i_{< \ell}} \Pi_{i_{< \ell}}^{\cM} ; T | \cP, \bl ) + \underbrace{I(\Pi_{i_\ell} ; T | S_{i_{< \ell}} \Pi_{i_{< \ell}}^{\cM} , \cP, \bl )}_{\leq C } \nonumber \\
    & \leq \underbrace{I(S_{i_{< \ell}} \Pi_{i_{< \ell}}^{\cM} ; T | \cP, \bl )}_{=0} + C \leq C. \label{eq:point_of_deviation}
\end{align}
where we used \pref{eq:non-adaptive} for $I(S_{i_{< \ell}} \Pi_{i_{< \ell}}^{\cM} ; T | \cP, \bl ) = I(S_{i_{< \ell}} \Pi_{i_{< \ell}}^{\cM} ; T ) = 0$ (recall that $\cP, \bl$ are 
chosen independently of the inputs, so conditioning on them does not change things). 
We remark that here we (crucially) used the fact that Megan is only allowed to send a single 
message (i.e., no further interaction with the player holding $\vec{S}$ is allowed). 

Finally for \pref{eq:smallCorrelation}, from \pref{cl:roundelimination}, we have that for every $\ell  \in [p]$ and $\cP$,
\begin{align} 
    I( S_{i_{\ell}} ; T | \Pi_{i_{\ell} }, S_{i_{< \ell}}, \Pi_{i_{< \ell }}^{\cM} , \bl, \cP ) \leq I( S_{i_\ell} \Pi_{i_{\ell}}^{\cM} ; U T | S_{i_{< \ell}}, \Pi_{i_{< \ell}}^{\cM} , \bl, \cP). \label{eq:applyelimination}
\end{align}
But from \pref{cl:smallcorr}, we know that over random $\bl \in_R [p]$, we have
\begin{align} 
    & I( S_{i_\ell} \Pi_{i_{\ell}}^{\cM}  ; U T | S_{i_{< \ell}} \Pi_{i_{< \ell}}^{\cM}, \bl, \cP ) \nonumber \\
    &= \E_{ \bl, \cP } \left[ I( S_{i_\ell} \Pi_{i_{\ell}}^{\cM}  ; U T | S_{i_{< \ell}} \Pi_{i_{< \ell}}^{\cM}, \cP ) \right] \leq O \left( \frac{|UT|}{p} \right). \label{eq:smallcorrelation} 
\end{align}
\end{proofof}

\subsection{A Random Process for AND with Low Information \texorpdfstring{$\&$}{and} Correlation} 
We now show how to ``scale down" the random process $Z^{DISJ}$ (obtained from the NOF protocol $\Gamma$ in \pref{lem:goodq}) 
so as to generate another random process (\emph{not} a 2-party protocol) 
that ``approximately" computes the $2$-bit $\AND(X,Y)$ function (on independent random bits) under $X,Y \sim \cB_\gamma$, 
with information correlation smaller by a factor of $n$. This follows the standard ``direct sum" embedding (e.g., \cite{BBCR10, Braverman2011, Bra15})  
-- the important observation here is that the direct sum property of the information cost function apply to general interactive processes 
and not just to communication protocols. This is the content of the next lemma. 


\begin{lemma} \label{lem:reduction}
Let $\Gamma = ( U, \Pi )$ be a 4-party NOF protocol that solves $\SEL{\DISJ_n}{k}$ with $|U| < o(p) < o(k)$ and $|\Pi_i| < C$ 
for all $i \in [k]$. Then there exists a random variable $Z^{\AND} = Z^{\AND}(X,Y)$ such that 
\begin{itemize}
    \item If $\AND(X, Y) =1$, then $Z^{\AND}_{ans}$ outputs $0$.
    \item If $\AND(X, Y) = 0$, then $Z^{\AND}_{ans}$ outputs $0$ with probability at most $0.001$.
    \item Has following information cost guarantees
    \begin{align}
    & I( Z^{\AND}; X ) \leq \frac{C + o( C )}{n} \label{eq:smallx} \\
    & I( Z^{\AND} ; Y ) \leq \frac{C}{n} \label{eq:smally} 
    \end{align}
    \item Has following correlation guarantee
    \begin{align}
    & I( X ; Y | Z^{\AND} ) < o ( 1 / n ) \label{eq:singlecorr}
    \end{align}
\end{itemize}
\end{lemma}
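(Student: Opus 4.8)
The plan is to prove the lemma by the standard direct-sum embedding of a single $2$-bit $\AND$ instance into a uniformly random coordinate of the $n$-bit disjointness process $Z^{\DISJ}$ supplied by \pref{lem:goodq}; the one delicate point is choosing exactly which auxiliary randomness to fold into $Z^{\AND}$ so that $I(Z^{\AND};X)$, $I(Z^{\AND};Y)$ and $I(X;Y\mid Z^{\AND})$ all shrink by a factor of $n$ at once. I would first fix notation by writing $A := S_{I_{\bl}}$ and $B := T$, so that \pref{lem:goodq} furnishes a process $Z^{\DISJ}=Z^{\DISJ}(A,B)$ with one-sided correctness for $\DISJ_n$ and the bounds $I(Z^{\DISJ};A\mid B)\le C+o(C)$, $I(Z^{\DISJ};B)\le C$ and $I(A;B\mid Z^{\DISJ})\le O(|UT|/p)$ (the first because $A\perp B$, the third because $Z^{\DISJ}$ contains $\cP,\bl$). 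I would instantiate $\cP$ with a size $p=\omega(n)$, compatible with the hypothesis $|U|=o(p)=o(k)$ since $k=\omega(n)$, so that $|UT|=o(p)$ and $Z^{\DISJ}$ has correlation $o(1)$.

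Next I would build the embedding: draw $X,Y\sim^{iid}\cB_\gamma$, an independent fresh sample of the hard distribution, and an independent uniform $J\in_R[n]$, then overwrite $A^J:=X$ and $B^J:=Y$ while every other entry of $\vec S$ and of $T$ stays an independent $\cB_\gamma$ ``filler'' bit; run $\Gamma$ and form $Z^{\DISJ}$ as in \pref{lem:goodq}. Because $J$ is uniform and independent and $X,Y$ share the filler's marginal, the pair $(A,B)$ is distributed exactly as under the hard distribution, so \emph{every} guarantee of \pref{lem:goodq} applies verbatim to this joint law. I would then set
\[
Z^{\AND}\;:=\;\bigl(Z^{\DISJ},\,J,\,B^{>J}\bigr),\qquad Z^{\AND}_{ans}\;:=\;Z^{\DISJ}_{ans},
\]
where $B^{>J}:=(B^{J+1},\dots,B^n)$ is a \emph{suffix} of $T$'s filler. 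Correctness is then immediate: if $\AND(X,Y)=1$ then $A^J=B^J=1$, so $\DISJ_n(A,B)=0$ and $Z^{\DISJ}_{ans}=0$; if $\AND(X,Y)=0$ then $\DISJ_n(A,B)=1$ unless two filler bits collide, which by a union bound has probability at most $(n-1)\gamma^2<10^{-6}<0.001$ for $\gamma=\tfrac1{1000\sqrt n}$, and in that case $Z^{\DISJ}_{ans}=1\neq0$.

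For the three information bounds I would average over $J=j$ and use only that the coordinates of $A$ (resp.\ of $B$) are i.i.d.\ and independent of $(J,B)$ (resp.\ of the coordinates of $A$) --- i.e.\ nothing about the internal workings of $Z^{\DISJ}$, which is exactly why direct sum applies to an arbitrary random variable and not only to a genuine protocol. Each bound becomes a per-coordinate average. (i) $I(Z^{\AND};X)=\E_j\,I(Z^{\DISJ};A^j\mid B^{>j})$; since $A^j\perp(A^{<j},B)$, $I(Z^{\DISJ};A^j\mid B^{>j})=H(A^j)-H(A^j\mid B^{>j}Z^{\DISJ})\le H(A^j)-H(A^j\mid A^{<j}BZ^{\DISJ})=I(Z^{\DISJ};A^j\mid A^{<j}B)$, and summing (chain rule on $A$, conditioned on $B$) gives $I(Z^{\AND};X)\le\tfrac1n I(Z^{\DISJ};A\mid B)\le\tfrac{C+o(C)}{n}$. (ii) $I(Z^{\AND};Y)=\E_j\,I(Z^{\DISJ};B^j\mid B^{>j})$, whose sum over $j$ equals $I(Z^{\DISJ};B)$ exactly (chain rule on $B$ in reverse coordinate order), so $I(Z^{\AND};Y)\le C/n$. (iii) $I(X;Y\mid Z^{\AND})=\E_j\,I(A^j;B^j\mid Z^{\DISJ}B^{>j})\le\E_j\,I(A;B^j\mid Z^{\DISJ}B^{>j})$, whose sum over $j$ equals $I(A;B\mid Z^{\DISJ})\le O(|UT|/p)$ (chain rule on $B$ in reverse order, conditioned on $Z^{\DISJ}$), so $I(X;Y\mid Z^{\AND})\le O(|UT|/(pn))=o(1/n)$.

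The hard part, as I see it, is not any single calculation but the \emph{design} of $Z^{\AND}$: I must expose enough of the filler that $I(A;B\mid Z^{\DISJ})$ re-emerges as the exact sum of the per-coordinate correlations $I(A^j;B^j\mid Z^{\DISJ}B^{>j})$ --- which pins down the suffix $B^{>J}$ --- yet expose little enough that the revealed bits do not push $I(Z^{\AND};X)$ or $I(Z^{\AND};Y)$ above the $\tfrac1n$-average --- which forbids revealing any prefix, or anything at all on $A$'s side. A more symmetric choice (say revealing both $A^{<J}$ and $B^{>J}$, or revealing nothing beyond $J$) either injects a spurious lower-order term into (ii) or fails to close (iii). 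So the whole proof ultimately turns on keeping straight the directions of the ``conditioning decreases entropy'' inequalities in (i)--(iii); beyond that it is the chain rule plus the union bound above.
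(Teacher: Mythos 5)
Your proof is correct and follows essentially the same route as the paper's: embed $(X,Y)$ at a uniformly random coordinate $J$, set $Z^{\AND}$ to be $Z^{\DISJ}$ together with $J$ and one side's partial filler, then amortize all three information quantities of \pref{lem:goodq} by a factor of $n$ via the chain rule. The only cosmetic difference is that you reveal the suffix $T^{>J}$ while the paper reveals the prefix $T^{<\bj}$ --- a coordinate-relabeling that changes nothing, since the chain rule works in either direction.
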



\begin{proof}
	Consider the following embedding of bits $X$ and $Y$ to $\Gamma$.
	\begin{Protocol}
	\begin{enumerate}
	\item Select $\cP$, $\bl \in_R [p]$, $\bj \in_R [n]$ uniformly at random.
	\item Set $S_{I_{\bl}}^{\bj} = X$ and $T^{\bj} = Y$.
	\item Sample the rest of the coordinates all i.i.d. $\cB_\gamma$.
	\item Run $\Gamma = (U,\Pi)$ with $I_{\bl}$ as the index. Then return the output.
	\end{enumerate}
	\caption{Embedding $\AND(X,Y)$ \label{prot:embedding}}
	\end{Protocol}
	
	Now we claim that $$Z^{\AND} := \Pi_{I_{\bl}} S_{I_{< \bl}} \Pi_{I_{< \bl}}^{\cM} T^{< \bj} \bj \cP \bl = Z^{\DISJ} T^{ < \bj } \bj $$ 
	with $Z^{\AND}_{ans}$ set as $\Pi_{I_{\bl}}^{ans}$ satisfies the conditions of \pref{lem:reduction}.
	
	\paragraph{Correctness} Suppose $\AND(X,Y) = 1$. Then $\Pi_{I_{\bl}}$ must output 0 since $\DISJ(S_{I_{\bl}},T) = 0$ from embedding. Also note that given $\AND(X,Y) = 0$, $\DISJ(S_{I_{\bl}},T) = 0$ with at most $0.001$ probability since $\DISJ(S_{I_{\bl}}^{- \bj}, T^{- \bj}) = 0$ with at most $0.001$ probability on our distribution on $\vec{S}$ and $T$ for any setting of $\bj, \cP, \bl$. Therefore $\Pi_{I_{\bl}}$ outputs $0$ with at most probability $0.001$.
	
	\paragraph{Information Cost}
	Recall that we have from \pref{lem:goodq},  \pref{eq:smallS}, \pref{eq:smallT}, \pref{eq:smallCorrelation} or
	\begin{align*}
    & I(Z^{\DISJ} T ; S_{I_{\bl}} ) = I(  \Pi_{I_{\bl}} S_{I_{< \bl}} \Pi_{I_{< \bl}}^{\cM} T, \cP, \bl ; S_{I_{\bl}} ) \leq C+ o( C ) \\
    & I(Z^{\DISJ}  ; T ) = I( \Pi_{I_{\bl}} S_{I_{< \bl}} \Pi_{I_{< \bl}}^{\cM} T, \cP, \bl ; T ) \leq C  \\
    & I( S_{i_\ell}  ; T | Z^{\DISJ} ) = I( S_{i_\ell} ; T | \Pi_{i_{\ell}} S_{i_{< \ell}} \Pi_{i_{< \ell}}^{\cM} T, \cP, \bl) \leq O \left( \frac{|UT|}{p} \right) 
    \end{align*}
    Now since we embed $X$ and $Y$ in random $\bj \in [n]$, we get
    \begin{align*}
    	& I ( \Pi_{I_{\bl}} S_{I_{< \bl}} \Pi_{I_{< \bl}}^{\cM} T^{< \bj }, \bj, \cP, \bl ; X) 
    	= \E_{\cP, \ell} \E_{j} \left[ I ( \Pi_{i_\ell} S_{i_{< \ell}} \Pi_{i_{< \ell}}^{\cM} T^{< j}  ; X | \bj = j, \cP, \bl) \right] \\
    	& \leq \E_{\cP, \ell} \E_{j} \left[ I ( \Pi_{i_\ell} S_{i_{< \ell}} \Pi_{i_{< \ell}}^{\cM} T  ; S^j_{i_\ell} | \bj = j, \cP, \bl ) \right]
    	= \E_{\cP, \ell} \E_{j} \left[ I ( \Pi_{i_\ell} S_{i_{< \ell}} \Pi_{i_{< \ell}}^{\cM} T  ; S^j_{i_\ell} | \cP, \bl) \right] \\
    	 &=  \E_{\cP, \ell} \left[ \frac{1}{n}  \sum_{j \in [n]} I ( \Pi_{i_\ell} S_{i_{< \ell}} \Pi_{i_{< \ell}}^{\cM} T ; S^j_{i_\ell} | \cP, \bl) \right]
        \leq \E_{\cP, \ell} \left[ \frac{1}{n} \sum_{j \in [n]} I ( \Pi_{i_\ell} S_{i_{< \ell}} \Pi_{i_{< \ell}}^{\cM} T ; S^j_{i_\ell} | S^{< j}_{i_\ell}, \cP, \bl) \right] \\
        & \leq \E_{\cP, \ell} \left[ \frac{I ( \Pi_{i_\ell} S_{i_{< \ell}} \Pi_{i_{< \ell}}^{\cM} T ; S_{i_\ell} | \cP, \bl)}{n} \right] = \frac{I ( \Pi_{I_{\bl}} S_{I_{< \bl}} \Pi_{I_{< \bl}}^{\cM} T, \cP, \bl ; S_{I_{\bl}} )}{n} \\
        & = \frac{I(Z^{\DISJ} T; S_{I_{\bl}} )}{n} \leq \frac{C + o(C)}{n}
    \end{align*}
    where the second inequality holds from $I(S^j_{i_\ell};S^{<j}_{i_\ell}|\cP, \bl) =0$. The second to last equality holds from $I(\cP, \bl; S_{I_{\bl}}) = 0$ or \pref{eq:nopl}.
    For \pref{eq:smally}, analogously, we get 
    \begin{align*}
    	& I ( \Pi_{I_{\bl}} S_{I_{< \bl}} \Pi_{I_{< \bl}}^{\cM} T^{< \bj }, \bj, \cP, \bl ; Y) 
    	= \E_{\cP, \ell} \E_{j} \left[ I ( \Pi_{i_\ell} S_{i_{< \ell}} \Pi_{i_{< \ell}}^{\cM} T^{< j}  ; T^j | \bj = j, \cP, \bl) \right] \\
    	& \leq \E_{\cP, \ell} \E_{j} \left[ I ( \Pi_{i_\ell} S_{i_{< \ell}} \Pi_{i_{< \ell}}^{\cM} T^{<j}  ; T^j | \cP, \bl) \right]
    	= \E_{\cP, \ell} \E_{j} \left[ I ( \Pi_{i_\ell} S_{i_{< \ell}} \Pi_{i_{< \ell}}^{\cM} ; T^j| T^{<j}, \cP, \bl) \right] \\
    	&=  \E_{\cP, \ell} \left[ \frac{1}{n} \sum_{j \in [n]} I ( \Pi_{i_\ell} S_{i_{< \ell}} \Pi_{i_{< \ell}}^{\cM} ; T^j | T^{< j},\cP, \bl ) \right]
        \leq \E_{\cP, \ell} \left[ \frac{I ( \Pi_{i_\ell} S_{i_{< \ell}} \Pi_{i_{< \ell}}^{\cM}  ; T |\cP, \bl)}{n} \right] \\
        & = \frac{I ( \Pi_{I_{\bl}} S_{I_{< \bl}} \Pi_{I_{< \bl}}^{\cM} , \cP, \bl ; T )}{n}  = \frac{I(Z^{\DISJ}; T )}{n} \leq \frac{C}{n}.
    \end{align*}
    where the second equality holds from $I(T^{<j} ; T^j | \cP, \bl) = I(T^{<j} ; T^j ) = 0$.
    
    \paragraph{Low Correlation}
    Now for \pref{eq:singlecorr}, we again take expectation over $\bj$.
    \begin{align*}
    	& I(X;Y| \Pi_{i_{\ell}} S_{i_{< \ell}} \Pi_{i_{< \ell}}^{\cM} T^{< j }, \bj, \cP, \bl
    ) = \E_{\cP, \ell} \E_{j} \left[ I ( S_{i_\ell}^j; T^j | \Pi_{i_{\ell}} S_{i_{< \ell}} \Pi_{i_{< \ell}}^{\cM} T^{< j }, \cP, \bl, \bj = j ) \right] \\
    & = \E_{\cP, \ell} \E_{j} \left[ I ( S_{i_\ell}^j; T^j | \Pi_{i_\ell} S_{i_{< \ell}} \Pi_{i_{< \ell}}^{\cM} T^{<j}, \cP, \bl ) \right] \\
     & \leq \E_{\cP, \ell} \left[ \frac{1}{n} \sum_{j} I ( S_{i_\ell}; T^j | \Pi_{i_\ell} S_{i_{< \ell}} \Pi_{i_{< \ell}}^{\cM} T^{<j}, \cP, \bl ) \right]\\
     & = \E_{\cP, \ell} \left[ \frac{I ( S_{i_\ell}; T | \Pi_{i_\ell} S_{i_{< \ell}} \Pi_{i_{< \ell}}^{\cM} \cP, \bl)}{n} \right] = \frac{I(S_{I_{\bl}}; T | Z^{\DISJ} )}{n} \leq O \left( \frac{|UT|}{pn} \right) \leq o(1/n).
    \end{align*}
    where the last bound follows from \pref{lem:goodq} with $|UT| < o( p )$.
    \end{proof}

\subsection{An Information Lower Bound for Low-Correlation AND Computation}
In this section, we rule out a random process that computes the 2-bit $\AND$ function with {\em simultaneously} low information cost and small correlation. The proof can be viewed as a generalization of the classic Cut-and-Paste Lemma \cite{CutPaste} from 2-party protocols to a more general setting of low-correlation random variables.

\paragraph{A Robust ``Cut-and-Paste" Lemma for Product Distribution.} 
Instead of restricting to a two-party protocol, we generalize the setting for ``cut-and-paste" to any random process. We prove the following general structural bound for any random process $Z$ when $X$ and $Y$ are distributed $B_\gamma$ independently with $\gamma = o(1)$.

\begin{lemma}[Robust Cut and Paste] \label{lem:andbound}
Suppose $X$ and $Y$ are distributed i.i.d. $\cB_\gamma$ with $\gamma = o(1)$. Then consider a random variable $Z$ containing $Z_{ans}$ such that
\begin{itemize}
\item If $\AND(X,Y) = 1$ then $Z_{ans} = 0$. Otherwise $\Pr[ Z_{ans} = 0 ] < 0.001$.
\item $\Pr[ Z_{ans} = 1 ] \geq 1/2$
\item Satisfies following two inequalities
\begin{align}
	& I(X ; Y | Z ) \leq o ( \gamma^2 ) \label{eq:correlation} \\
	& I( Z  ; X  ) \leq o( \gamma )
\end{align}
\end{itemize}
Then it must be the case $I( Y ; Z  ) \geq \Omega ( \gamma )$. 
\end{lemma}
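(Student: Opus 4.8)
The plan is to argue by contradiction: I would assume $I(Y;Z)=o(\gamma)$ and combine this with the remaining hypotheses (correctness of $Z_{ans}$ together with $\Pr[Z_{ans}=1]\ge 1/2$, the bound $I(Z;X)\le o(\gamma)$, and the bound $I(X;Y\mid Z)\le o(\gamma^2)$) to force $I(X;Y\mid Z)\ge\Omega(\gamma^2)$, a contradiction. For $z$ in the support of $Z$ write $p_z:=\Pr[X=1\mid Z=z]$, $q_z:=\Pr[Y=1\mid Z=z]$, and $r_z:=\Pr[X=1,\,Y=1\mid Z=z]$; recall $X,Y$ are i.i.d.\ $\cB_\gamma$, so marginally $p_z,q_z$ average to $\gamma$.

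\emph{Pinning the conditional biases.} By Fact~\ref{fact:KL_Mutual}, $I(X;Z)=\E_z[\KL{\cB_{p_z}}{\cB_\gamma}]$ and $I(Y;Z)=\E_z[\KL{\cB_{q_z}}{\cB_\gamma}]$, both of which are $o(\gamma)$ (the former by hypothesis, the latter by the contradiction assumption). Markov's inequality then shows that a $(1-o(1))$-fraction of $z$ (weighted by $\Pr[Z=z]$) simultaneously satisfies $\KL{\cB_{p_z}}{\cB_\gamma}=o(\gamma)$ and $\KL{\cB_{q_z}}{\cB_\gamma}=o(\gamma)$, hence $p_z,q_z\in[0.99\gamma,1.01\gamma]$ by the Divergence Bound (Fact~\ref{fact:divergence}). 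This is exactly where there being ``no quadratic loss near the ends'' is used: Pinsker's inequality would only give $|p_z-\gamma|=o(\sqrt\gamma)$, which is useless for getting a constant-factor grip on $p_z$.

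\emph{Killing the $(1,1)$ cell and the local estimate.} Since $Z$ determines $Z_{ans}$ and $\AND(X,Y)=1\Rightarrow Z_{ans}=0$, every $z$ with $z_{ans}=1$ has $r_z=0$. Let $E$ be the set of $z$ with $z_{ans}=1$ and $p_z,q_z\in[0.99\gamma,1.01\gamma]$; a union bound with $\Pr[Z_{ans}=1]\ge 1/2$ gives $\Pr[Z\in E]\ge 1/2-o(1)\ge 1/3$. For $z\in E$ I would estimate $I(X;Y\mid Z=z)$ directly from the $2\times2$ conditional table ($\Pr[(0,0)]=1-p_z-q_z$, $\Pr[(0,1)]=q_z$, $\Pr[(1,0)]=p_z$, $\Pr[(1,1)]=0$): bounding the two positive ``cross'' cells below using $-\log(1-t)\ge t$ and the negative $(0,0)$ cell using $\log(1-t)\ge -t-t^2$, all terms collapse to $I(X;Y\mid Z=z)\ge (1-o(1))\,p_zq_z\ge\Omega(\gamma^2)$ — the point being that the two positive cross cells dominate the single negative $(0,0)$ cell. (If $Z$ were an honest deterministic two-party protocol this quantity would be exactly $0$; the present inequality is the ``robust'' substitute for Cut-and-Paste.) Finally, since $I(X;Y\mid Z=z)\ge0$ for every $z$, $I(X;Y\mid Z)=\E_z[I(X;Y\mid Z=z)]\ge\Pr[Z\in E]\cdot\Omega(\gamma^2)=\Omega(\gamma^2)$, contradicting $I(X;Y\mid Z)\le o(\gamma^2)$; hence $I(Y;Z)\ge\Omega(\gamma)$.

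The only genuinely delicate point is the local estimate above: one has to verify that when the $(1,1)$ mass vanishes while both marginal biases are $\Theta(\gamma)$, the induced conditional mutual information is truly $\Theta(\gamma^2)$ and is not cancelled by the negative contribution of the $(0,0)$ cell. Everything else — turning the $o(\cdot)$ hypotheses into ``$(1-o(1))$-fraction of $z$'' statements via Markov, the union bound for $\Pr[Z\in E]$, and the elementary logarithm inequalities — is routine, and most of it mirrors standard direct-sum/information-complexity bookkeeping.
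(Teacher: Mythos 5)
Your proof is correct and takes essentially the same route as the paper. The paper's core step (Claim~\ref{cl:largediv}) is precisely the contrapositive of your local estimate: the paper shows that if $\KL{X_z}{X}$ and $I(X;Y\mid Z=z)$ are both small then $\KL{Y_z}{Y}$ must be large, whereas you show that if $\KL{X_z}{X}$ and $\KL{Y_z}{Y}$ are both small (i.e., $p_z, q_z \in [0.99\gamma, 1.01\gamma]$) then $I(X;Y\mid Z=z)$ must be $\Omega(\gamma^2)$. Both hinge on exactly the same observation: the $(1,1)$ cell has zero mass when $z_{ans}=1$, and under $\Theta(\gamma)$ marginal biases this is incompatible with near-independence. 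The only cosmetic difference is in the local $2\times 2$ computation: the paper decomposes $I(X;Y\mid Z=z)=\E_y\bigl[\KL{X_{Y=y,z}}{X_z}\bigr]$ and keeps just the $y=1$ term, which gives $b_z\KL{\cB_0}{\cB_{c_z}}\geq\Omega(b_z c_z)$ immediately, whereas you sum over all four cells and verify that the two positive cross terms dominate the negative $(0,0)$ term (a computation which does check out, and in fact $\log(1+t)\leq t$ gives the clean bound $I(X;Y\mid Z=z)\geq 2p_zq_z - p_zq_z = p_zq_z$ with no $(1-o(1))$ slack). Both formulations rely, as you correctly flag, on the absence of quadratic loss in the KL divergence near zero (Fact~\ref{fact:divergence}), which is where Pinsker alone would fail.
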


We remark that in the usual cut-and-paste setting introduced in \cite{CutPaste}, one requires $I(X;Y|Z) = I(X;Y)$. And this follows from $Z$ being a communication protocol. It is crucial in standard cut-and-paste argument that $I(X;Y|Z) = I(X;Y)$, and therefore for communication model in \pref{fig:model} which crucially introduces correlation between two inputs conditioned on the protocol, using cut-and-paste seemed far from giving any bound. Instead, in \pref{lem:andbound}, we show that if $X$ and $Y$ are under product distribution, one can actually relax this condition and make the argument robust to small correlation between inputs (even if $Z$ is not a communication protocol). In particular $Z$ can be an arbitrary random process!

\begin{proofof}{\pref{lem:andbound}}
	Towards the proof we make use of the following notation for distribution on $X, Y$ conditioned on $Z = z$ assuming $Z_{ans} = 1$. 
\begin{align*}
a_{z} := \Pr[ X = 0 , Y = 0 | Z = z] \\
b_{z} := \Pr[ X = 0 , Y = 1 | Z = z] \\
c_{z} := \Pr[ X = 1 , Y = 0 | Z = z] 
\end{align*}
Note that  $\Pr[ X  = 1 , Y = 1 | Z = z] = 0 $, since we have the guarantee that if $\AND(X,Y) = 1$ then $Z_{ans} = 0$. We prove the following claim on $a_z,b_z,c_z$.

\begin{figure}[!ht]
\centering
\includegraphics[scale=0.25]{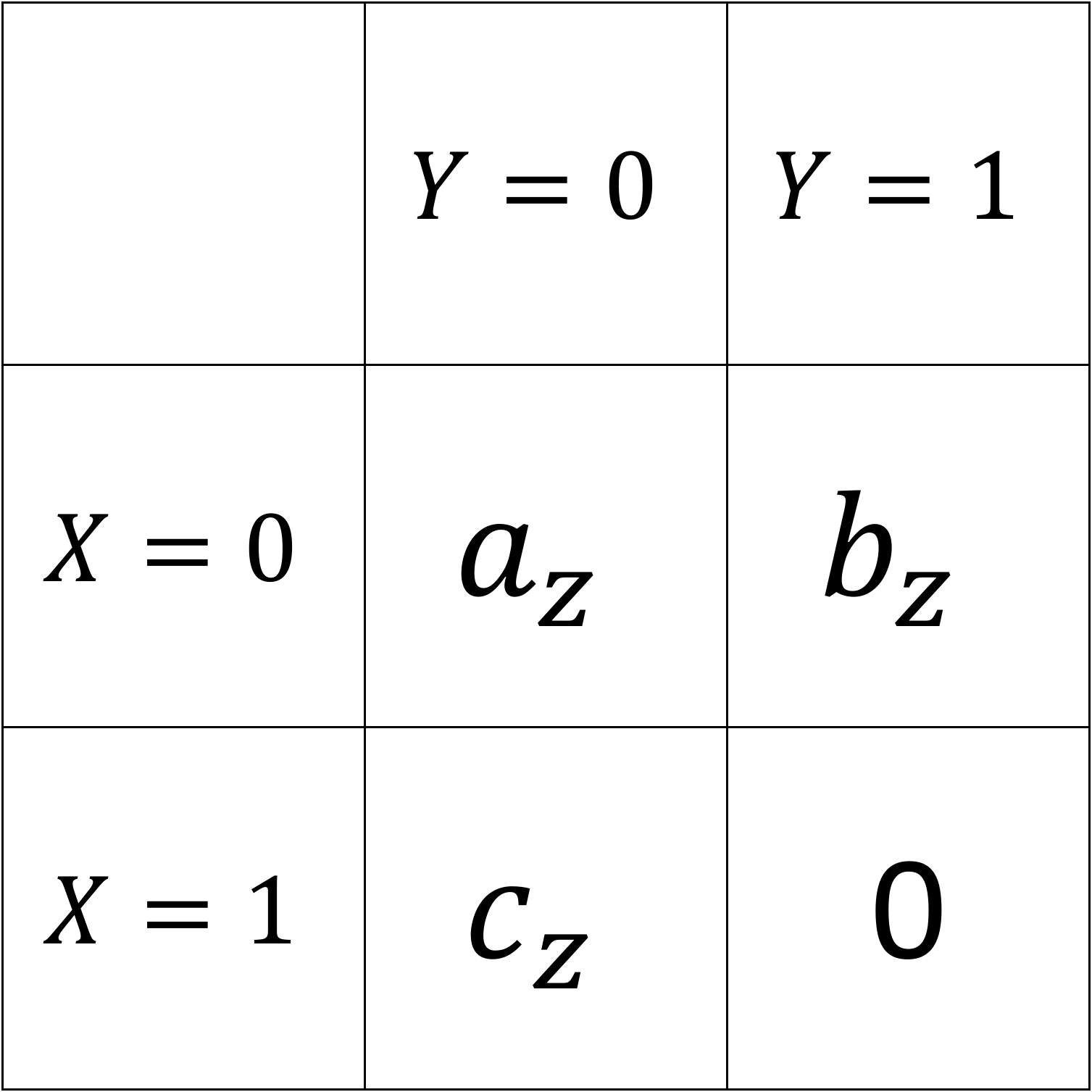}
\caption{Resulting prior conditioned on $Z = z$ with $Z_{ans} =1$}
\label{fig:xy_prior}
\end{figure}

\begin{claim} \label{cl:largediv}
	If $\KL{X_{z}}{X} \leq o(\gamma)$,
	and $I(X; Y | Z = z) \leq o ( \gamma^2 )$, 
	then $\KL{Y_{z}}{Y} \geq \Omega( \gamma ).$
\end{claim}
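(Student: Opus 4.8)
The plan is to extract quantitative bounds on the two ``off-diagonal'' conditional masses $b_z=\Pr[X=0,Y=1\mid Z=z]$ and $c_z=\Pr[X=1,Y=0\mid Z=z]$ from the two hypotheses, and then read the conclusion off the behaviour of $\KL{\cB_{b_z}}{\cB_\gamma}$ near the endpoint. The structural facts I will use, all consequences of $Z_{ans}(z)=1$ (so that the $(1,1)$ cell carries no mass), are $a_z+b_z+c_z=1$ together with the identifications of the conditional marginals $X_z\sim\cB_{c_z}$ and $Y_z\sim\cB_{b_z}$ (indeed $\Pr[X=1\mid Z=z]=c_z+0$ and $\Pr[Y=1\mid Z=z]=b_z+0$). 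Thus $\KL{X_z}{X}=\KL{\cB_{c_z}}{\cB_\gamma}$, so the hypothesis $\KL{X_z}{X}\le o(\gamma)$ together with \pref{fact:divergence} (applicable since $\gamma=o(1)$) yields $c_z\in[0.99\gamma,1.01\gamma]$, i.e.\ $c_z=\Theta(\gamma)$; and since likewise $\KL{Y_z}{Y}=\KL{\cB_{b_z}}{\cB_\gamma}$, it only remains to show $b_z$ is forced well below $\gamma$.

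The crux is the \emph{linear} lower bound $I(X;Y\mid Z=z)\ge b_zc_z$. Writing $I(X;Y\mid Z=z)=\sum_{x,y}p_{xy}\log\tfrac{p_{xy}}{p_xp_y}$ (all logarithms natural; the claim's $\Omega(\cdot),o(\cdot)$ bounds are base-free) with $p_{00}=a_z,\ p_{01}=b_z,\ p_{10}=c_z,\ p_{11}=0$, and using $(a_z+b_z)(a_z+c_z)=(1-c_z)(1-b_z)=a_z+b_zc_z$, the four terms simplify to
\[
I(X;Y\mid Z=z)=-a_z\ln\!\Big(1+\tfrac{b_zc_z}{a_z}\Big)-b_z\ln(1-c_z)-c_z\ln(1-b_z),
\]
and the elementary inequalities $\ln(1+t)\le t$ and $\ln(1-t)\le-t$ give $I(X;Y\mid Z=z)\ge-b_zc_z+b_zc_z+b_zc_z=b_zc_z$. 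Combining this with the hypothesis $I(X;Y\mid Z=z)\le o(\gamma^2)$ and with $c_z\ge0.99\gamma$ forces $b_z\le o(\gamma^2)/(0.99\gamma)=o(\gamma)$. Finally, since $b_z\le o(\gamma)<0.99\gamma$ for all large $n$, monotonicity of $q\mapsto\KL{\cB_q}{\cB_\gamma}$ on $[0,\gamma]$ together with the computation inside the proof of \pref{fact:divergence} gives $\KL{Y_z}{Y}=\KL{\cB_{b_z}}{\cB_\gamma}\ge\KL{\cB_{0.99\gamma}}{\cB_\gamma}\ge\Omega(\gamma)$. Concretely, $\KL{\cB_{b_z}}{\cB_\gamma}=b_z\ln\tfrac{b_z}{\gamma}+(1-b_z)\ln\tfrac{1-b_z}{1-\gamma}$, whose second summand is $\ge\Omega(\gamma-b_z)=\Omega(\gamma)$ while the first has magnitude $b_z\ln(\gamma/b_z)=o(\gamma)$.

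I expect the middle step --- the linear bound $I(X;Y\mid Z=z)\ge b_zc_z$ --- to be the real obstacle, and the one place where the combinatorial structure $p_{11}=0$ is genuinely exploited: a softer route through Pinsker's inequality only yields $I(X;Y\mid Z=z)\gtrsim\|p_{XY|z}-p_{X|z}\otimes p_{Y|z}\|_1^2\asymp(b_zc_z)^2$, which, combined with $I\le o(\gamma^2)$ and $c_z\asymp\gamma$, gives only $b_z=o(1)$ --- far too weak. The last step relies on the companion ``no quadratic loss'' phenomenon that near the endpoint $\KL{\cB_{b_z}}{\cB_\gamma}$ is linear, not quadratic, in the gap $\gamma-b_z$ (precisely the regime where Pinsker is loose), so a gap of order $\gamma$ produces divergence of order $\gamma$. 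Together these two effects are what let the argument --- and hence the robust cut-and-paste lemma (\pref{lem:andbound}) --- go through even though $Z$ is an arbitrary random process rather than a genuine two-party communication transcript.
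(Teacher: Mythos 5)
Your proof is correct. The overall scaffolding is the same as the paper's: pin down $c_z=\Theta(\gamma)$ from the first hypothesis via \pref{fact:divergence}, show the mutual-information hypothesis forces $b_z$ to be $o(\gamma)$, and then read off $\KL{Y_z}{Y}=\KL{\cB_{b_z}}{\cB_\gamma}\ge\Omega(\gamma)$ from the ``no quadratic loss'' behavior of KL near the endpoint. Where you genuinely diverge is the key middle step, lower bounding $I(X;Y\mid Z=z)$ by $\Omega(b_zc_z)$. The paper decomposes $I(X;Y\mid Z=z)=\E_{Y\mid z}\left[\KL{X_{Y,z}}{X_z}\right]$, discards the nonnegative $Y=0$ term, and bounds the surviving $b_z\cdot\KL{\cB_0}{\cB_{c_z}}=b_z\log\frac{1}{1-c_z}\ge\Omega(b_zc_z)$. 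You instead expand the four-term mutual-information sum directly, exploit the identity $(a_z+b_z)(a_z+c_z)=a_z+b_zc_z$ (which uses $p_{11}=0$ and $a_z+b_z+c_z=1$), and bound each of the three nonzero summands with $\log(1+t)\le t$ and $\log(1-t)\le -t$ to get the exact inequality $I(X;Y\mid Z=z)\ge b_zc_z$ with no constant loss. Both routes isolate the same structural fact (the forbidden $(1,1)$ corner plus a $\Theta(\gamma)$-mass on $(1,0)$ forces the off-diagonal $(0,1)$ mass to be tiny), and your version is arguably a touch cleaner since it keeps all three terms and avoids hiding constants; the paper's version is slightly more conceptual in that it is phrased through the KL decomposition of $I$, which is perhaps easier to remember. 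Your closing discussion of why Pinsker's inequality is insufficient here (a $\gtrsim (b_zc_z)^2$ bound is too weak) accurately identifies the same subtlety the paper flags in the technical overview.
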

\begin{proof}
    First, observe that $c_{z} = \Theta( \gamma )$. If $\KL{X_{z}}{X} \leq o (\gamma)$, then from \pref{fact:divergence} we have 
    \begin{equation}\label{eq:smallc}
    \Pr[X = 1 | Z = z] = c_{z} = \Theta( \gamma ).
    \end{equation}
	Next, we expand and lower bound the term $I(X;Y|Z = z )$. 
	\begin{align*}
	& I(X;Y| Z = z) = \Pr[ Y = 0 | Z = z ] \cdot \KL{ X_{Y=0, z}}{ X_{z}} \\
	& + \Pr[ Y = 1 | Z = z ] \cdot \KL{ X_{Y=1, z}}{ X_{z}} \\
	& \geq ( 1 - b_{z} ) \cdot \KL{\cB_{\frac{c_{z}}{1 - b_{z} }}}{ \cB_{c_{z}}} + b_{z} \cdot \KL{\cB_{0}}{\cB_{c_{z}}} \\
	& \geq  b_{z} \cdot \KL{\cB_{0}}{ \cB_{c_{z}} } = b_{z} \log \frac{1}{1-c_z} \geq \Omega \left( b_{z}  c_{z}   \right).
	\end{align*}
	where the last bound holds from $- \log ( 1 - x ) \geq x / 2 $ for $x < 1/2$.
	Rewriting the inequality we get,
	\begin{equation} \label{eq:bc}
	b_{z} c_{z} \leq O \left( I(X;Y| Z = z ) \right)
	\end{equation}
    Now we have that $c_{z} = \Theta( \gamma )$ from \pref{eq:smallc}. Therefore we can rewrite \pref{eq:bc} as 
	\begin{equation} \label{eq:c-bound}
	b_{z} \leq O \left( \frac{I(X;Y|Z = z )}{\gamma} \right)
	\end{equation}
	Since we assumed $I(X;Y| Z = z) \leq o ( \gamma^2 )$, we obtain
	\begin{equation} \label{eq:b-bound}
	b_{z} \leq o \left( \gamma \right)
	\end{equation}
	Then combining \pref{eq:b-bound} with \pref{fact:divergence}, we get 
	\begin{equation*}
	    \KL{Y_{z}}{Y} \geq \Omega ( \gamma ).
	\end{equation*}
\end{proof}
\noindent To complete the proof of \pref{lem:andbound}, take ``good" $z$ such that 
\begin{itemize}
    \item $z_{ans} = 1$ 
    \item $\KL{X_{z}}{X} \leq o ( \gamma )$
    \item $I(X;Y | Z = z ) \leq o( \gamma^2 )$
\end{itemize}
By union bound, the mass on such $Z$ must be at least $\Omega(1)$. Furthermore, for these $z$, \pref{cl:largediv} holds, and $\KL{Y_{z}}{Y} \geq \Omega( \gamma )$. Then 
\begin{align*}
    & I( Z ; Y ) = \E_{z \sim Z } \left[ \KL{Y_{z}}{Y} \right] \geq \Pr[ z \mbox{ is good} ] \cdot \Omega( \gamma ) = \Omega(\gamma)
\end{align*}
where the expectation sum is only taken over good $z$.
\end{proofof}

\subsection{Proof of the Main Theorem}
The proof of Theorem \ref{thm:4party} now follows easily by combining \pref{lem:reduction} and \pref{lem:andbound}.  
\begin{theorem}[Restated] \label{thm:mainLB}
Let $\Gamma = (U, \Pi)$ be a 4-party NOF  protocol (c.f. \pref{fig:4party}) that solves $\SEL{\DISJ_n}{k}$ with $|T| < |U| < o( k )$. Then there exists some $i \in [k]$ such that $|\Pi_i| \geq \Omega(\sqrt{n})$.
\end{theorem}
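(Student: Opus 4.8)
The plan is to chain together the two structural lemmas already in hand. Suppose toward a contradiction that there is a 4-party NOF protocol $\Gamma=(U,\Pi)$ for $\SEL{\DISJ_n}{k}$ with $|T|<|U|<o(k)$ and $C := \max_{i\in[k]} |\Pi_i| = o(\sqrt{n})$. Setting the auxiliary parameter $p$ from \pref{lem:goodq} and \pref{lem:reduction} to be, say, $p = \sqrt{|U|\cdot k} = o(k)$, we get $|U| = |UT|/2 \leq o(p)$ as required by \pref{lem:reduction} (using $|T|<|U|$, so $|UT| = O(|U|)$). Then \pref{lem:reduction} hands us a random process $Z^{\AND} = Z^{\AND}(X,Y)$ on independent bits $X,Y\sim\cB_\gamma$ with $\gamma = 1/(1000\sqrt n)$ that one-sidedly computes $\AND(X,Y)$ (outputs $0$ whenever $\AND=1$, and errs with probability $\leq 0.001$ when $\AND=0$), and that simultaneously satisfies $I(Z^{\AND};X)\leq (C+o(C))/n$, $I(Z^{\AND};Y)\leq C/n$, and the correlation bound $I(X;Y\mid Z^{\AND}) < o(1/n)$.

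Next I would check that $Z^{\AND}$ meets the hypotheses of the robust Cut-and-Paste lemma, \pref{lem:andbound}, instantiated with this $\gamma$, so that $\gamma^2 = \Theta(1/n)$ and $\gamma = \Theta(1/\sqrt n)$. The one-sided correctness guarantee of $Z^{\AND}$ is exactly the first bullet of \pref{lem:andbound}. The second bullet, $\Pr[Z^{\AND}_{ans}=1]\geq 1/2$, holds because under the product distribution $\Pr[\AND(X,Y)=1] = \gamma^2 = o(1)$ while the error on the $\AND=0$ branch is at most $0.001$, so $\Pr[Z^{\AND}_{ans}=0]\leq \gamma^2 + 0.001 < 1/2$. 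For the information bounds: since $C = o(\sqrt n)$ we have $I(Z^{\AND};X),I(Z^{\AND};Y)\leq (C+o(C))/n = o(\sqrt n / n) = o(1/\sqrt n) = o(\gamma)$, matching the hypothesis $I(Z;X)\leq o(\gamma)$. And the correlation hypothesis $I(X;Y\mid Z)\leq o(\gamma^2)$ is exactly $o(1/n)$, which is \pref{eq:singlecorr}. Hence \pref{lem:andbound} applies and forces $I(Y;Z^{\AND})\geq \Omega(\gamma) = \Omega(1/\sqrt n)$.

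This is the contradiction: \pref{lem:reduction} gives $I(Z^{\AND};Y)\leq C/n$, while \pref{lem:andbound} gives $I(Z^{\AND};Y)\geq \Omega(1/\sqrt n)$, so $C/n \geq \Omega(1/\sqrt n)$, i.e. $C\geq \Omega(\sqrt n)$, contradicting the assumption $C = o(\sqrt n)$. Therefore every 4-party NOF protocol with $|T|<|U|<o(k)$ has some index $i\in[k]$ with $|\Pi_i|\geq\Omega(\sqrt n)$, which is the statement of \pref{thm:mainLB} (equivalently \pref{thm:4party}).

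The only genuinely delicate point in assembling the proof is bookkeeping the asymptotic regime so that all the $o(\cdot)$ conditions are mutually consistent: we need $p=o(k)$ (for \pref{lem:goodq}'s information loss $pC/(k-p)=o(C)$), $|UT|=o(p)$ (for the correlation bound in \pref{lem:reduction} to be $o(1/n)$), and $C=o(\sqrt n)$ to be assumed for contradiction — and then we must verify that $(C+o(C))/n$ and $C/n$ really are $o(\gamma)$ and that $I(X;Y\mid Z^{\AND})$ really is $o(\gamma^2)$ with the specific $\gamma=\Theta(1/\sqrt n)$. Choosing $p$ as any function with $|U| \ll p \ll k$ (e.g. $p=\sqrt{|U|k}$) discharges these simultaneously. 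No further work is needed — the two lemmas were designed precisely to dovetail here, and the proof is a short application of both followed by comparing the two bounds on $I(Z^{\AND};Y)$.
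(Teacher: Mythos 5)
Your proof is correct and follows the same route as the paper: assume $C=\max_i|\Pi_i|=o(\sqrt n)$, instantiate \pref{lem:reduction} (choosing $p$ with $|U|\ll p\ll k$, which the hypothesis $|T|<|U|<o(k)$ makes possible) to obtain $Z^{\AND}$, verify the hypotheses of \pref{lem:andbound} with $\gamma=1/(1000\sqrt n)$, and derive the contradiction $\Omega(\gamma)\leq I(Z^{\AND};Y)\leq C/n=o(\gamma)$. The paper's write-up omits the explicit choice of $p$ and the check that $|UT|=O(|U|)$, which you spell out; otherwise the two arguments are the same.
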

\begin{proof}
	Suppose otherwise. Then for all $i \in [k]$, we have $| \Pi_i | < o (\sqrt{n})$. Then by \pref{lem:reduction}, there is some random variable 
	$Z^{\AND}(X,Y)$ for solving $\AND(X,Y)$ with the following guarantees by setting $C = o( \sqrt{n} )$ : 
	\begin{align*}
    	& I(Z^{\AND} ; X ) < o ( 1 / \sqrt{n} ) \\
    	& I(Z^{\AND} ; Y ) < o ( 1 / \sqrt{n} ) \\
    	& I( X ; Y | Z^{\AND} ) < o ( 1 / n ).
	\end{align*}
	Furthermore, $Z^{\AND}_{ans} = 0$ whenever $\AND(X,Y) = 1$. But $\Pr[\AND(X,Y) = 1] = O(1/n)$. While if $\AND(X,Y) = 0$, $\Pr[Z^{\AND}_{ans} = 0] < 0.001$. Therefore, $\Pr[Z^{\AND}_{ans} 	= 1] \geq 1/2$. But this is in direct contradiction to \pref{lem:andbound} with $\gamma := \frac{1}{1000 \sqrt{n}}$. Hence such $Z^{\AND}$ cannot exist.
	
\end{proof}


\subsection{Multiphase Lower Bound for Semi-Adaptive Data Strucutres}

Here we prove Theorem \ref{thm_dynamic_multiphase_informal}, asserting a polynomial lower bound on the Multiphase problem against 
Semi-Adaptive dynamic data structures (\pref{def:semiadaptive}): 

\begin{theorem}[Main Result]
	Let $k = \omega( n )$. Any semi-adaptive  data structure solving the Multiphase problem 
	must have (total) update time $t_u n \geq \Omega( \frac{k}{ w } )$ or query time $t_q \geq \Omega( \sqrt{n} / w )$, 
	in the cell-probe model with word size $w$. 
\end{theorem}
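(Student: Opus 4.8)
The plan is to derive this dynamic cell-probe lower bound by reducing any semi-adaptive data structure to a $4$-party NOF protocol for $\SEL{\DISJ_n}{k}$, and then invoke \pref{thm:mainLB} (equivalently \pref{thm:4party}). This is a variation of \Pat's original simulation, adapted to the layered structure of \pref{def:semiadaptive}. Suppose for contradiction that there is a semi-adaptive data structure $D$ with $t_u n < o(k/w)$ and $t_q = t_1 + t_2 < o(\sqrt{n}/w)$. We want to use $D$ to build a $4$-party protocol $\Gamma = (U,\Pi)$ solving the Multiphase game with $|U| = o(k)$ bits of advice and $|\Pi_i| = o(\sqrt{n})$ for all $i$, contradicting \pref{thm:mainLB}.

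First I would set up the simulation. Recall the four players: Charlie holds $\vec{S}, T$ (but not $i$); Megan holds $\vec{S}, i$; Alice holds $S_i, i$; Bob holds $T, i$. The idea is that Charlie, who knows everything needed to run Phases \Romannum{1} and \Romannum{2}, plays the role of the data structure up through the end of Phase \Romannum{2}: he simulates preprocessing to obtain $\cM(\vec{S})$, then simulates the update to obtain $\Delta(\cM,T)$. Charlie's private advice $U$ to Bob is the contents of the updated cells $\Delta(\cM,T)$, together with their addresses; since there are at most $|T| \cdot t_u \le n t_u$ such cells, $|U| = O(n t_u w) = o(k)$ by assumption. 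Now the query algorithm of $D$ on input $i$ must be simulated interactively. The ``free read of $S_i$'' is handled because Alice holds $S_i$. For the first layer --- the $\le t_1$ adaptive probes into $\cM$ --- note that $\cM(\vec{S})$ depends on all of $\vec{S}$ but not on $i$ or $T$; this is exactly what Megan can provide, but the probes are adaptive, so Megan cannot simply broadcast the answers non-interactively. This is where the single broadcast message $\Pi^\cM(\vec{S},i)$ comes in: Megan, knowing $\vec{S}$ and $i$, can herself run the first layer of the query algorithm (which after the free read of $S_i$ only needs $i$, $S_i$, and cells of $\cM$, all of which Megan can compute), and broadcast the resulting transcript --- the $t_1$ addresses and contents --- to both Alice and Bob. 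This costs $|\Pi^\cM| = O(t_1 w) = o(\sqrt{n})$. Finally, for the second layer, the $\le t_2$ adaptive probes into $\Delta(\cM,T)$: the contents of $\Delta(\cM,T)$ are known to Bob (they are in $U$), and the query algorithm's behavior in this layer depends only on $S_i$, $i$, the layer-$1$ transcript (known to Alice and Bob from $\Pi^\cM$), and which cells of $\Delta$ get probed and their contents. So Alice and Bob can simulate layer $2$ in the standard $2$-party model: at each step the ``current address'' is determined by the shared state, Bob looks it up in $\Delta(\cM,T)$ (returning $\perp$ if absent, as in the dictionary implementation remarked after \pref{def:semiadaptive}) and sends the content back, which takes $O(t_2 w) = o(\sqrt{n})$ total communication. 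At the end Alice or Bob outputs the data structure's answer to $S_i \cap T =^? \emptyset$, which is $\DISJ_n(S_i,T)$. Thus $|\Pi_i| = |\Pi^\cM| + |\Pi^{A\leftrightarrow B}| = O((t_1+t_2)w) = o(\sqrt{n})$ for every $i$, and $|U| = O(n t_u w) = o(k)$.

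Plugging into \pref{thm:mainLB} (or \pref{thm:4party}): since $k = \omega(n)$ and $|U| = o(k)$ (and, checking the technical hypothesis $|T| < |U|$, which holds since $|U| \ge$ the update-cell contents whenever the data structure writes anything, or can be padded), we get that some $i \in [k]$ must have $|\Pi_i| \ge \Omega(\sqrt{n})$, contradicting $|\Pi_i| = o(\sqrt{n})$. Hence no such $D$ exists, i.e., $t_u n \ge \Omega(k/w)$ or $t_q \ge \Omega(\sqrt{n}/w)$, as claimed. The main obstacle --- and the place needing the most care --- is verifying that the layered adaptivity of \pref{def:semiadaptive} maps cleanly onto the $4$-party communication structure: specifically, that the layer-$1$ query computation genuinely depends only on information available to Megan (so it can be packaged into a single broadcast $\Pi^\cM(\vec{S},i)$ rather than requiring back-and-forth with the holder of $\vec{S}$, which the model forbids), and that the transition to layer $2$ introduces no further dependence on $\vec{S}$ beyond what is already in the transcript --- this is precisely the ``restricted'' feature of the NOF model that \pref{thm:mainLB} exploits, and it is why a semi-adaptive (rather than fully adaptive) data structure is needed. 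One should also double-check the bit-accounting: the $\log k$ factors for encoding indices and the word size $w$ must all be absorbed into the $o(\cdot)$ bounds, which holds under the stated regime $k = \omega(n)$, $t_u n = o(k/w)$, $t_q = o(\sqrt{n}/w)$.
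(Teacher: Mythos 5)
Your proof is correct and follows essentially the same route as the paper: Charlie sends the Phase-II update transcript (the modified cells) as the advice $U$ with $|U| = O(n t_u w) = o(k)$, Megan broadcasts the layer-1 probe transcript into $\cM$ as $\Pi^\cM$ (possible precisely because that computation depends only on $\vec{S}$, $i$, $S_i$), Alice and Bob interactively simulate the layer-2 probes into $\Delta(\cM,T)$ by exchanging addresses and contents, and the resulting bound $|\Pi_i| = O(t_q w) = o(\sqrt{n})$ contradicts \pref{thm:mainLB}. Your explicit attention to why the layered structure of \pref{def:semiadaptive} aligns with the single-broadcast restriction on Megan, and to the $|T| < |U|$ side condition, are both sound and match the paper's (terser) argument.
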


\begin{proof}
	We use a simple variation of the reduction from \cite{patrascu:multiphase} 
	to show that an efficient semi-adaptive data strucutre implies a too-good-to-be-true 4-party NOF protocol, contradicting 
	\pref{thm:mainLB}. To this end, 
	suppose we have a semi-adaptive dynamic data structure with $t_u n < o \left( \frac{k}{ w } \right)$ and $t_q < o( \sqrt{n} / w )$.
	
	We argue that this implies a cheap 4-party NOF protocol. First we set the update transcript as $U$ -- which can be generated by Charlie who has access to both $\vec{S}$ and $T$. We then have $|U| = O(t_u n w) < o( k )$.
	Set $\Pi^{\cM}_i$ as addresses and the contents of cells in $\cM$ that are accessed by the query algorithm. 
	Set subsequent Alice's message $\Pi_i^\tau$ as the cell address of $\Delta(\cM,T)$ that are accessed. This is determined by $\Pi_i^{\cM}, \Pi_i^{<\tau}, i, S_i$. Set Bob's message $\Pi_i^{\tau+1}$ as the cell content. This is determined by $\Pi_i^{\cM}, \Pi_i^{\leq \tau}, UT, i$.
	It is immediate that this is a valid 4-party protocol. 
	We also have the length of 4-party NOF protocol $\Gamma = (U, \Pi)$ as
	\begin{align*}
	& |U| + |T| \leq o(k) \\
	& |\Pi_i| = |\Pi^{\cM}_i| + |\Pi^{A \leftrightarrow B}_i| \leq 4 t_q w < o ( \sqrt{n} )
	\end{align*}
	But this is in contradiction to \pref{thm:mainLB}.
\end{proof}


\section{Implications of the NOF Multiphase Conjecture} \label{sec:consequences}

In this section, we show that \Patrascu's NOF Conjecture on the original Multiphase Game, 
even against 3-round protocols, would imply a breakthrough in circuit complexity. 
This complements our restricted NOF model, as it shows that allowing even \emph{two} of Alice's 
messages to depend arbitrarily on her entire input $\vec{S},i$, would resolve a decades-old open problem 
in circuit lower bounds. It also suggests that attacking the Multiphase conjecture for (general) dynamic data strucutres via 
the the NOF Game, should exploit the fact that data strucutres induce highly restricted NOF protocols. 

First, note that Conjecture \ref{MPH_NOF_conj} in particular implies the following special case: 

\begin{conj}[3-round NOF Conjecture] \label{conj:NOF}
Any 3-round NOF protocol for the 3-party  Multiphase Game with $|U| = o(k)$ bits of advice must have $|\Pi| > n^\eps$ communication. 
\end{conj}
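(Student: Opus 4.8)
The plan is to derive Conjecture \ref{conj:NOF} as an immediate logical consequence of Conjecture \ref{MPH_NOF_conj}, using nothing beyond the observation that a 3-round NOF protocol is a special case of a general (unbounded-round) NOF protocol. Concretely, suppose Conjecture \ref{MPH_NOF_conj} holds, i.e., every 3-party NOF protocol $\Gamma = (U, \Pi)$ for the Multiphase game with $|U| = o(k)$ bits of advice satisfies $|\Pi| > n^\eps$. Now take any 3-round NOF protocol $\Gamma' = (U', \Pi')$ for the same game with $|U'| = o(k)$. Since the class of 3-round protocols is contained in the class of all protocols (a 3-round protocol is simply one in which Alice and Bob exchange exactly three messages after Charlie's advice), $\Gamma'$ is in particular a valid protocol to which Conjecture \ref{MPH_NOF_conj} applies, and hence $|\Pi'| > n^\eps$. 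This is exactly the statement of Conjecture \ref{conj:NOF}.

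The only step worth spelling out is why the hypothesis of Conjecture \ref{MPH_NOF_conj} is genuinely satisfied by $\Gamma'$: one must check that the advice-length restriction $|U'| = o(k)$ is the same in both statements (it is, verbatim), and that ``any 3-party NOF protocol for the Multiphase game'' in Conjecture \ref{MPH_NOF_conj} is not secretly restricted to some bounded number of rounds (it is not — the conjecture quantifies over all such protocols). Given this, there is no obstacle: the implication is purely a matter of quantifier weakening, $(\forall \text{ protocols})\,\phi \;\Rightarrow\; (\forall \text{ 3-round protocols})\,\phi$, and no communication-complexity argument is needed. In other words, Conjecture \ref{conj:NOF} is literally a sub-case of Conjecture \ref{MPH_NOF_conj}, obtained by restricting attention to protocols with at most three messages in the second stage.

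The reason this trivial-looking statement is nonetheless worth recording is its role in the surrounding narrative: \pref{thm:ciruitwire_informal} will be proved by reducing the circuit-lower-bound question to Conjecture \ref{conj:NOF}, so it is convenient to isolate the weakest form of \Patrascu's conjecture that already suffices for that reduction. Thus the ``proof'' here is really just the remark that $n^\eps$ communication lower bounds for bounded-round protocols follow a fortiori from the general conjecture, so that the subsequent circuit implication can be stated against the strongest hypothesis (3-round protocols only) while still being a consequence of \Patrascu's original Conjecture \ref{MPH_NOF_conj}.
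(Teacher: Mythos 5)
Your proposal matches the paper exactly: the paper introduces Conjecture~\ref{conj:NOF} with the single remark that Conjecture~\ref{MPH_NOF_conj} ``in particular implies the following special case,'' which is precisely the quantifier-restriction argument you spell out (every $3$-round protocol is a protocol, so the universal bound applies a fortiori). Since the statement is a conjecture rather than a theorem, there is nothing further to prove, and your careful note about why it is worth isolating as a separate statement (so that the circuit-lower-bound reduction can be pinned to the weakest needed hypothesis) also reflects the paper's intent.
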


\paragraph{Circuits with arbitrary gates}
As mentioned in the introduction, 
a long-standing open problem in circuit complexity is whether \emph{non-linear} gates can significantly (polynomially) reduce the number of wires of circuits 
computing linear operators \cite{JS10}. 
We consider Valiant's depth-2 circuit model \cite{Val77} with arbitrary gates, and its generalizations to arbitrary depths. 
More formally, consider a circuit computing a linear operator $x\mapsto Ax$ where $A$ is an $k \times n$ matrix with $k =\tilde{O}(n)$, 
using unbounded fan-in, and where gates are allowed to be \emph{arbitrary} functions. 
Clearly, such circuits can trivially compute any $f$ with $k$ gates. As such, the interesting complexity measure in this model is the 
minimum number of {\bf wires} to computing the function $f$. This measure captures how much ``information" needs to be transferred 
between different components of the circuit, in order to compute the function. For a more thorough exposition and motivation on 
circuits with arbitrary gates, we refer the reader to \cite{Jukna_2012}.

\paragraph{Previous Works}

In contrast to arithmetic circuit models (e.g., \cite{Val77} where allowed functions are simple functions such as AND, OR or PARITY), it is a long-standing open problem \cite{JS10, Jukna_2012, Drucker12} whether non-linear circuits can compute {\em any} linear operator $A$ with near-linear ($\tilde{O}(k)$) number of wires. Indeed, for linear circuits, this is a simple counting argument. Counting argument shows that $\Omega( k^2 / \log k)$ wires are necessary for linear circuits, and this is tight \cite{lupanov56}. But once again, for arbitrary circuits, counting argument completely fails, since the number of possible functions over $n$ bits is already doubly exponential in $n$. 


\cite{jukna10} initiated works on analyzing the complexity of {\em representing} a random matrix, that is computing $A x$ when $x$ is restricted to having only one 1. In other words, compute $A e_i$ for $i \in [n]$. \cite{Drucker12} showed that when restricted to {\em representing} a matrix, $\Omega(k \log k)$ is necessary for depth $2$ circuit, complementing the previous upper bound of $O( k \log k)$ by \cite{jukna10}. 


The lower bound of \cite{Drucker12} immediately implies $\Omega(k \log k)$ lower bound for computing a matrix using depth $2$ circuit, since any circuit that computes a matrix must represent a matrix as well. But in case of computing the linear operator $A$, no better bounds were known. 
Furthermore, for higher depth circuits, no super-linear bounds are known, even for {\em representing} a matrix. In fact, \cite{Drucker12} showed that one cannot hope to obtain a better lower bound when restricted to {\em representing} a matrix by giving $O(k)$-sized depth-$3$ circuit for representing a matrix.

\

We first show that \pref{conj:NOF} implies a \emph{static} data structure lower bound (on a problem defined by a random 
but hard-wired matrix $A$).  We then use a recent reduction of \cite{V18} to show that this static data structure lower bound implies 
a  lower bound on the wire complexity of depth-$d$ circuits with arbitrary gates computing $Ax$. 

\paragraph{Static Data Structure Lower Bound} First, we consider the following class of static data structure problems $\cP^f_A (x)$ defined by a query matrix $A \in \{ 0 , 1 \}^{k \times n}$ and a 
function $f: \{ 0 , 1 \}^{2n} \rightarrow \{ 0 , 1 \}$: 

\begin{enumerate}
\item Given a \emph{fixed} matrix $A$ with rows $A_1, \ldots, A_k$, preprocess an input database $x \in \{ 0, 1 \}^n$. 
\item Given $i \in [k]$ as a query, the data structure needs to output $f (A_i,x)$. 
\end{enumerate}

Note that $A$ is \emph{hard-wired} to the problem, i.e., the data structure can access $A$ for free 
during both preprocessing and query stage\footnote{This is a generalization of Valiant's model \cite{Val77} in that the 
circuit itself is allowed to depend arbitrarily on $A$.}. In particular, with $s=k/w$ space, one can store the (boolean) 
answers for all possible queries and the problem becomes trivial ($t=1$), whereas without any preprocessing, the query 
algorithm needs to read $x$ (but not $A$) to compute the answer $f(A_i,x)$, giving (worst case) query time $t\sim n/w$. 
Accordingly, the query algorithm is \emph{non-adaptive} if the cell addresses that are probed only a function of 
$i \in [k]$ and $A$.

We show that \pref{conj:NOF} implies the following lower bound on $\cP^f_A (x)$ where $f := \DISJ_n$. 

\begin{lemma}[Polynomial Static Lower Bound for Random Set Disjointness Queries] \label{lem:static}
Suppose \pref{conj:NOF} holds. Let $k = \omega( n)$. Then there exists a collection of $k$ sets $A := A_1,\ldots, A_k \subseteq [n]^k$ such that 
any non-adaptive static data structure solving $\cP^{\DISJ_n}_A$ must either use $s \geq \Omega( \frac{k}{ w } )$ 
space, or have $t \geq \Omega( n^{\eps} / w )$ query time, in the cell-probe model with word size $w$. 
\end{lemma}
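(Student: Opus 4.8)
The plan is to reduce the NOF Multiphase Game to the static data structure problem $\cP^{\DISJ_n}_A$, so that a cheap non-adaptive static data structure yields a cheap $3$-round NOF protocol, contradicting Conjecture~\ref{conj:NOF}. The key observation is that in the Multiphase Game, the collection $\vec S = S_1,\ldots,S_k$ plays the role of the database $x$, and the index $i$ plays the role of the query; the set $T$ is the ``second-phase'' input that, in the static setting, is absorbed into the advice. Concretely, suppose for contradiction that for \emph{every} choice of a fixed matrix (set collection) $A$, there is a non-adaptive static data structure for $\cP^{\DISJ_n}_A$ with space $s < o(k/w)$ and query time $t < o(n^\eps/w)$.

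First I would fix an arbitrary $T \subseteq [n]$ and consider the data structure problem whose database is $\vec S$ and whose query $i \in [k]$ asks for $\DISJ_n(S_i, T)$. This is exactly $\cP^{\DISJ_n}_A$ with the roles of $A$ and $x$ as above (here the ``query matrix'' is the fixed collection $A = \vec S$ and the input database is... — actually, to match the statement, one identifies $x$ with $\vec S$ and the hard-wired matrix $A$ with the collection of queries indexed by $i$; either way the reduction is symmetric). By the contradiction hypothesis applied to this instance, there is a non-adaptive data structure using memory $M(\vec S)$ of $s < o(k/w)$ cells, and a non-adaptive query algorithm that, on query $i$, probes at most $t$ cells of $M(\vec S)$ whose addresses depend only on $i$ (and on $A$), then outputs $\DISJ_n(S_i,T)$.

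Next I would turn this into a $3$-round NOF protocol for the Multiphase Game. Charlie holds $\vec S$ and $T$; he computes the memory table $M(\vec S)$ of the static data structure and privately sends it to Bob as the advice $U$, so $|U| \le s\cdot w < o(k)$ bits — satisfying the $|U| = o(k)$ hypothesis of Conjecture~\ref{conj:NOF}. Now Alice holds $i$ (and $\vec S$) and Bob holds $i$ (and $T$) and $U = M(\vec S)$. Since the query algorithm is \emph{non-adaptive}, the $t$ cell \emph{addresses} probed on query $i$ are a function of $i$ alone, so in round~$2$ Alice can send Bob exactly these $t$ addresses; in round~$3$ Bob looks them up in $U = M(\vec S)$ and returns the contents, from which $\DISJ_n(S_i,T)$ is computed. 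The total communication of $\Pi$ is $O(t w) < o(n^\eps)$ bits, in $3$ rounds, contradicting Conjecture~\ref{conj:NOF}.

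\textbf{Main obstacle.} The delicate point is getting the quantifiers and the identification of the static problem's ``hard-wired matrix'' to line up with the Multiphase Game, so that (a) the advice bound $|U| = o(k)$ is met (this forces $s = o(k/w)$, which is why the space bound in the lemma is exactly $\Omega(k/w)$), and (b) \emph{non-adaptivity} is genuinely used — it is what lets Alice compute and forward all probe addresses in a single message without having read any cell contents, keeping the protocol to $3$ rounds. An adaptive query algorithm would require interleaving address-requests and content-replies, blowing up the round count, so the restriction to non-adaptive data structures in the lemma statement is not incidental. One must also check that the conjecture is being invoked for a \emph{fixed} hard collection $A$: since the contradiction hypothesis gives an efficient data structure for \emph{every} $A$, in particular it gives one for the collection $A$ guaranteed hard by Conjecture~\ref{conj:NOF}, yielding the contradiction and hence the existence of the asserted hard $A$.
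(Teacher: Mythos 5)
Your proof is correct and follows essentially the same reduction as the paper: a cheap non-adaptive data structure for $\cP^{\DISJ_n}_A$ yields a $3$-round NOF protocol in which Charlie sends the $sw = o(k)$-bit memory snapshot as advice, Alice sends the $t$ non-adaptively determined probe addresses, and Bob returns the probed contents, giving $|\Pi| = O(tw) = o(n^\eps)$ and contradicting Conjecture~\ref{conj:NOF}. The one place you waver is the identification in the parenthetical: the correct matching is $A = \vec{S}$ (hard-wired, known to Alice) and $x = T$ (the preprocessed database), so that the non-adaptive probe addresses are a function of $(i, A)$ which Alice holds, and Charlie — who knows both $\vec{S}$ and $T$ — can compute the memory; your actual protocol construction is consistent with this even though the side remark suggesting $x = \vec{S}$ is not (dimensionally, $x \in \{0,1\}^n$ while $\vec{S}$ is $kn$ bits).
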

\begin{proof}
	Suppose for any $A \in \{0,1\}^{k\times n}$ there is a non-adaptive static data structure $D_A$ computing $\cP^{\DISJ_n}_A$ 
	with $s \leq o ( \frac{k}{ w } )$ space and $t \leq o( n^{\eps} / w )$ cell probes. We show that this induces a too-good-to-be-true 
	(3-round) NOF protocol for the Multiphase game $\SEL{\DISJ_n}{k}$, violating \pref{conj:NOF}. Indeed, consider the 
	following simple 3-party protocol for simulating $D_A$: 
	
	Charlie's ``advice" $U$ in Phase 1 of the Multiphase game will be the contents of the $s$ memory cells of $D_A$, 
	Alice's message during Phase 2 (i.e., $\Pi_{i}^{A \rightarrow B}$) will be the memory addresses probed by the $D_A$ for 
	answering $\DISJ(A_i, x)$, and Bob's messages $\Pi_{i}^{B \rightarrow A}$ are the contents of cells probed by Alice. 
	Note this protocol is well defined: Indeed, by the definition of $\cP^{\DISJ_n}_A$, $U$ only depends on $A$ and $x$ 
	at preprocessing time, and if $D_A$ is non-adaptive, then $\Pi_{i}^{A \rightarrow B}$ is only a function of $A$ and $i$; 
	Finally, $\Pi_{i}^{B \rightarrow A}$ depends on the previous transcript and $U= U(A,x)$ which Bob possesses. We therefore 
	have a valid 3-round NOF protocol for $\SEL{\DISJ_n}{k}$ with 
	\begin{align*}
	& |U| + |x| \leq sw + n \leq o( k ) + n = o( k ) , \text{\;\;\; and}  \\
	& |\Pi_i| = |\Pi_{i}^{A \rightarrow B}| + |\Pi_{i}^{B \rightarrow A}| \leq 2 t w \leq o( n^{\eps} )
	\end{align*}
	bits, which contradicts \pref{conj:NOF}.
\end{proof}



We consider the following parameter for the circuit lower bound.

\begin{corollary} \label{cor:forcircuitbound}
If $k = \omega(n)$, and $s = O( n / w )$ then $t \geq \Omega( n^{\eps} / w)$.	
\end{corollary}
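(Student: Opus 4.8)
The plan is to obtain \pref{cor:forcircuitbound} as an immediate specialization of \pref{lem:static}, by checking that the stated parameter regime kills one of the two branches of that lemma's dichotomy. Concretely, I would invoke \pref{lem:static} (which is conditioned on \pref{conj:NOF}) to fix a collection $A = A_1, \ldots, A_k \subseteq [n]$ for which every non-adaptive static data structure solving $\cP^{\DISJ_n}_A$ must have either $s \geq \Omega(k/w)$ space or $t \geq \Omega(n^\eps/w)$ query time. I then argue by contradiction: assume a non-adaptive data structure for this $A$ with $s = O(n/w)$ and $t = o(n^\eps/w)$.

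The one substantive step is the asymptotic comparison. Since $k = \omega(n)$, we have $n/w = o(k/w)$, so a data structure using only $s = O(n/w)$ space cannot meet the requirement $s \geq \Omega(k/w)$ once $n$ is large enough. Hence the space branch of \pref{lem:static} is vacuous in this regime, and the surviving branch forces $t \geq \Omega(n^\eps/w)$ — contradicting $t = o(n^\eps/w)$. This yields the corollary, with the understanding that (like \pref{lem:static}) it asserts the existence of a single hard matrix $A$ against which the bound holds, and that the statement implicitly carries the hypothesis that \pref{conj:NOF} holds.

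I do not anticipate a genuine obstacle here; the only care needed is bookkeeping of quantifiers and of the hidden constants, so that ``$s = O(n/w)$'' together with ``$k = \omega(n)$'' really does contradict ``$s \geq \Omega(k/w)$'' in the limit. The reason for isolating this particular normalization is that it is exactly the shape of static lower bound that the subsequent reduction of \cite{V18} consumes in order to extract the circuit wire lower bound of \pref{thm:ciruitwire_informal}, so it is worth recording \pref{cor:forcircuitbound} separately before proceeding to that reduction.
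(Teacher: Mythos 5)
Your proof is correct and is exactly the intended argument: the corollary is stated in the paper without its own proof precisely because it is the immediate specialization of \pref{lem:static} that you give, with $k = \omega(n)$ ruling out the space branch. The only caveat you already flag yourself — that both the lemma and the corollary are conditional on \pref{conj:NOF} and quantify existentially over the hard matrix $A$ — is the right thing to note.
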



\paragraph{Circuit Lower Bound} Now we show that \pref{conj:NOF} implies circuit lower bounds using reduction by \cite{V18} from \pref{lem:static}. We use the following translation theorem for lower bounds in arbitrary depths.



\begin{theorem}[\cite{V18}]
\label{thm:dstocircuit}
Suppose function $f: \{0,1\}^n \rightarrow \{ 0 , 1 \}^k$ has a circuit of depth $d$ with $\ell$ wires, consisting of unbounded fan-in, arbitrary gates. Then for any $r$ there exists a static data structure (with non-adaptive query) with space $s = n + r$, query time $(\ell/r)^d$, and word size $\max \{ \log n, \log r \} + 1$ which solves the following problem
\begin{enumerate}
\item Preprocess input $x$ depending on $x$ and $f$
\item Given $i \in [k]$, output $f_i(x)$.	
\end{enumerate}
\end{theorem}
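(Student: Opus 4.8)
The plan is to prove \pref{thm:dstocircuit} by augmenting the input $x$ with a small ``advice'' array that records the values of the few high-fan-in gates of the circuit, and then answering a query by unwinding the computation backwards through the remaining low-fan-in gates only, so that the cost is a bounded-branching recursion tree of depth $d$.

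Concretely, I would first set a threshold $\theta := \ell/r$ and call a gate of the circuit \emph{heavy} if its fan-in exceeds $\theta$ and \emph{light} otherwise (input variables, having fan-in $0$, are light). Since the circuit has at most $\ell$ wires, a pigeonhole count shows there are fewer than $r$ heavy gates. The static data structure then stores (i) the $n$ bits of $x$, one per cell, and (ii) for each heavy gate $g$, the single bit $g(x)$ that $g$ evaluates to on the given input $x$, using at most $r$ further cells; the preprocessing is legitimate since it depends only on $x$ and the fixed circuit for $f$. This uses $s = n+r$ cells, and a word size of $\max\{\log n,\log r\}+1$ comfortably holds one bit of content together with a pointer into this address space.

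Next I would specify the (non-adaptive) query algorithm for $f_i$. Let $g_i$ be the circuit gate computing the $i$-th output coordinate. The algorithm evaluates a gate $g$ recursively: if $g$ is an input variable it probes the corresponding cell of part (i); if $g$ is a heavy gate it probes the corresponding cell of part (ii); otherwise $g$ is light and not an input, so the algorithm recursively evaluates each of the $\le \theta$ gates feeding $g$ and then combines their values \emph{at no cost}, since in the cell-probe model all computation on retrieved contents is free. Every internal node of the resulting recursion tree is a light non-input gate, hence strictly deeper in the circuit than its children; so the tree has height at most $d$ and out-degree at most $\theta$, yielding at most $\theta^d = (\ell/r)^d$ leaves and therefore query time $(\ell/r)^d$. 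The shape of this tree is determined once $f$ and $i$ are fixed, so the multiset of probed addresses does not depend on $x$ or on any cell content — which is exactly why the query is non-adaptive.

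The point that needs the most care is the termination/depth accounting: one must argue that every root-to-leaf branch either hits a heavy gate and stops, or stays among light gates, in which case it strictly decreases circuit depth at each step and hence reaches an input within $d$ steps — this is what pins the leaf count at $(\ell/r)^d$. Storing \emph{all} heavy gates in the advice is precisely what guarantees that no branch is ever forced to expand a high-fan-in gate, preventing the branching factor (and hence the query time) from blowing up. The remaining items — rounding the advice size to a clean $r$, packing a gate identifier together with its bit into one word, and handling degenerate cases such as an output coordinate that is itself an input or a constant, or a circuit that is not strictly layered (where one takes ``longest path from the inputs'' as a gate's depth) — are routine.
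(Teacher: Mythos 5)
Your proposal is correct and follows essentially the same approach as the paper's proof: identify the heavy (fan-in $> \ell/r$) gates, of which there are fewer than $r$ by the wire budget, store their values alongside the $n$ input bits, and bound the query cost by a branching-factor-$\le\ell/r$ recursion over the remaining light gates to depth $d$. The paper phrases the count as an induction on the level of the gate, while you phrase it as a recursion tree with bounded out-degree and height, but these are the same argument, and you also flag the same non-adaptivity observation (the probe pattern is fixed once $f$ and $i$ are).
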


For completeness, we attach the proof here. Though \cite{V18} did not remark on queries being non-adaptive, we remark that data structures derived from circuits are intrinsically non-adaptive, that is the cells probed only depends on query index and $f$.

\begin{proofof}{\pref{thm:dstocircuit}}
Consider circuit $C_f$ which computes $f$. Now set $G$ as the set of gates with in-wire $> \ell / r $. Since the number of wires are bounded by $w$, $|G| < r$. Now given $x$, store the values of these gates $G$ in space $r$.

Now we argue inductively on the level of the gate. We show that if the gate is at level $m$, that the number of non-adaptive queries made to compute the value of the gate is at most $( \ell /r)^m$. As base case, suppose if it were a level $1$ gate. If it lies in $G$, then the non-adaptive query required is 1, by calling to its address in $G$, which requires $\log |G|$-bits. Otherwise, the query required is at most $\ell / r$, each of which requires $\log n$-bits for the address, and they are non-adaptive. Therefore, the word size required is $\max \{ \log n, \log r \} + 1$.

Now as induction step, suppose for any $j < m$, level $j$ gates require at most $(\ell/r)^j$ non-adaptive queries to compute its value with word size $\max \{ \log n, \log r \} + 1$. Consider a level $m$ gate. If the gate lies in $G$, again it only requires 1 non-adaptive query, by calling to its address in $G$. Otherwise, it can be answered by computing at most $\ell/r$ level $m-1$ gates, and these queries are non-adaptive. Each of these level $m-1$ gates require $(\ell / r)^{m-1}$ non-adaptive queries by induction hypothesis, and the word size required is $\max \{ \log n, \log r \} + 1$. Therefore, the total non-adaptive query required is at most $(\ell / r) \cdot (\ell / r)^{m-1} = (\ell / r)^{m}$.

Since the final output gate that we are interested in is at level $d$, we get $(\ell / r)^{d}$ as the upper bound on the number of queries.
\end{proofof}

Here we crucially used the fact that wirings are fixed if we fix $C_f$. Note that $G$ is determined by $C_f$. Then a contrapositive of \pref{thm:dstocircuit} then states that data structure lower bound with non-adaptive query implies circuit lower bounds. When restricted to linear space usage (i.e. $r = O(n)$, $f(n)$ query time lower bound translates to $\Omega \left( n f(n)^{1/d} \right)$ lower bound for $\ell$. Using the contrapositive along with \pref{cor:forcircuitbound}, we show the following circuit lower bound.


\begin{corollary} \label{cor:circuit}
Assuming \pref{conj:NOF} and $k = \omega( n )$, there exists a matrix $A \in \{ 0 , 1 \}^{k \times n}$ such that any depth-$d$ circuit 
	that computes $A x$\footnote{under Boolean Semi-Ring, with addition as OR and multiplication as AND} requires $n^{1 + \Omega \left( \eps / d \right)}$ wirings. In particular, if $d = 2$, there exists $A$ that requires $n^{1 + \frac{\eps}{2} - o(1)}$ wirings.
\end{corollary}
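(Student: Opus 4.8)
The plan is to combine \pref{cor:forcircuitbound} with the contrapositive of the translation theorem \pref{thm:dstocircuit}, exactly as foreshadowed in the paragraph preceding the statement. Concretely: suppose \pref{conj:NOF} holds and fix $k = \omega(n)$. By \pref{lem:static} applied to $f = \DISJ_n$, there is a matrix $A \in \{0,1\}^{k\times n}$ such that every non-adaptive static data structure for $\cP^{\DISJ_n}_A$ must obey the tradeoff $s \geq \Omega(k/w)$ or $t \geq \Omega(n^\eps/w)$; specializing to the linear-space regime $s = O(n/w)$ gives \pref{cor:forcircuitbound}, i.e.\ $t \geq \Omega(n^\eps / w)$ for that $A$. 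Now I argue by contradiction about circuits: suppose some depth-$d$ circuit $C$ with arbitrary gates and unbounded fan-in computes $x \mapsto Ax$ (over the boolean semiring) using $\ell$ wires, where $\ell$ is ``too small.'' Apply \pref{thm:dstocircuit} to $f = (x \mapsto Ax)$ with the choice $r = \Theta(n)$: this yields a non-adaptive static data structure with space $s = n + r = O(n)$ (so $O(n/w)$ words after accounting for word size $w = \max\{\log n, \log r\}+1 = O(\log n)$), query time $t = (\ell/r)^d = (\ell/\Theta(n))^d$, and word size $w = O(\log n)$. Note this data structure solves precisely $\cP^{\DISJ_n}_A$, since the $i$-th output coordinate of $Ax$ over the boolean semiring is $\bigvee_{j} A_{ij}\wedge x_j = \neg\,\DISJ_n(A_i, x)$, and negating the output is free in the cell-probe model; it is non-adaptive because, as remarked after \pref{thm:dstocircuit}, data structures coming from circuits probe cells determined only by $i$ and the (hard-wired) circuit $C_f$, hence only by $i$ and $A$.

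The contradiction is then a matter of arithmetic. By \pref{cor:forcircuitbound} we must have $(\ell/\Theta(n))^d = t \geq \Omega(n^\eps / w) = \Omega(n^\eps/\log n) = n^{\eps - o(1)}$, hence $\ell/n \geq n^{(\eps - o(1))/d}$, i.e.\ $\ell \geq n^{1 + (\eps - o(1))/d} = n^{1 + \Omega(\eps/d)}$. So any depth-$d$ circuit for $Ax$ requires $n^{1+\Omega(\eps/d)}$ wires, which is the claim. For the $d = 2$ specialization one simply substitutes $d = 2$: $\ell \geq n^{1 + (\eps - o(1))/2} = n^{1 + \eps/2 - o(1)}$.

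I should double-check one point of hygiene in the reduction: \pref{thm:dstocircuit} is stated for $f\colon\{0,1\}^n\to\{0,1\}^k$, and here $k = \omega(n)$ is exactly the output length of $x\mapsto Ax$, so the hypotheses match and the resulting data structure answers queries $i\in[k]$ by outputting $f_i(x) = (Ax)_i$ as needed; the word-size bound $\max\{\log n,\log r\}+1$ with $r = \Theta(n)$ is indeed $O(\log n) = w$, consistent with the cell-probe model used in \pref{lem:static} and \pref{cor:forcircuitbound}. The main (and really the only) subtlety is making sure the chain of reductions lines up at the level of \emph{non-adaptive} data structures with \emph{hard-wired} $A$ and logarithmic word size on both ends — the circuit-to-data-structure direction of \cite{V18} must produce exactly the kind of object that \pref{conj:NOF} (via \pref{lem:static}) rules out. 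Since \pref{lem:static} is phrased for precisely this model ($A$ hard-wired, non-adaptive queries, word size $w$) and the remark after \pref{thm:dstocircuit} certifies non-adaptivity of circuit-derived data structures, this alignment goes through without friction; everything else is the elementary exponent bookkeeping above.
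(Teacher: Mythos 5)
Your proof is correct and follows essentially the same route as the paper: apply \pref{lem:static} (via \pref{conj:NOF}) to get \pref{cor:forcircuitbound}, then invoke the contrapositive of \pref{thm:dstocircuit} with $r = \Theta(n)$ and do the exponent arithmetic. You add two small but useful hygiene checks that the paper leaves implicit — that $(Ax)_i$ over the boolean semiring is the negation of $\DISJ_n(A_i,x)$, and that the word size and non-adaptivity of the circuit-derived data structure match the model in \pref{lem:static} — but the underlying argument is identical.
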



\begin{proof}
We use contrapositive of \pref{thm:dstocircuit}, that is data structure lower bound implies circuit lower bounds. Consider \pref{cor:forcircuitbound}. Setting $r = O(n)$, and $k = \omega( n )$. Then \pref{cor:forcircuitbound} with \pref{thm:dstocircuit} implies that for some $A$, $(\ell/r)^d \geq \Omega ( n^{\eps} / \log n)$. Since $r = O(n)$, rewriting in terms of $\ell$, we get
\begin{equation*}
\ell \geq \Omega \left( n^{1+ \frac{\eps}{d} - \frac{\log \log n}{d \log n} } \right)	=  n^{1+ \Omega \left( \frac{\eps}{d} \right) }  .
\end{equation*}
Now setting $d = 2$, we get $\ell \geq  n^{ 1 + \frac{\eps}{2} - o(1)} $. 

\end{proof}


\section{Discussion} \label{sec:openproblem}

\paragraph{Extending our techniques to fully adaptive queries}
A natural question to ask is if our technical approach can be extended to fully adaptive queries as well,   
so as to resolve the Multiphase Conjecture. 
Our results indicate that attacking this problem through the NOF Multiphase Game should 
exploit the fact that (general) data structures induce \emph{restricted} NOF protocols -- namely, that 
the query algorithm only has limited \emph{local} access to its memory $\cM(\vec{S},T)$, through 
previously probed cells (unlike Alice in the NOF game, who has general access to $\vec{S}$).  

 
Let $\Pi_D$ be a NOF protocol induced by an efficient dynamic data structure $D$ for $\SEL{\DISJ_n}{k}$. 
Recall that the main technical challenge in our proof is to use $\Pi$ to design a random variable $Z(X,Y)$ computing 
$\AND(X,Y)$, while  \emph{simultaneously} controlling $I(Z;Y) + I(Z;X)$ and $I(X;Y|Z)$.  
It is not clear to us whether one could hope for such $Z=Z(\Pi_D)$ when $D$ is a \emph{fully adaptive} data structure, 
hence we believe the following question is interesting: 
	\begin{displayquote}
	\centering
	Is it possible to design a r.v. $Z$ where $I(Z;Y), I(Z;X)$ and $I(X;Y|Z)$ are all small?
	\end{displayquote}

\section*{Acknowledgement}{We are grateful to Huacheng Yu for his valuable comments on 
an earlier draft of this paper.}

\bibliographystyle{alpha}
\bibliography{refs.bib}

\appendix

\section{Tight Bounds from \cite{littleadvice}} 
\subsection{Upper Bound} \label{sec:appendixUB}

First, we show that there is a non-trivial $o(n)$-query time semi-adaptive data structure for $\SEL{\DISJ_n}{k}$, derived from a protocol of \cite{littleadvice} given that the querier has access to $i$ and $S_i$ from the beginning.

\begin{theorem}
There exists $t_q = \tilde{O}(\sqrt{n})$ semi-adaptive data structure with $|U| \leq \tilde{O}(\sqrt{n})$.
\end{theorem}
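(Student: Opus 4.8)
The plan is to read the data structure off the $O(\sqrt{n}\log k)=\tilde{O}(\sqrt{n})$-communication NOF protocol of Chattopadhyay, Edmonds, Ellen and Pitassi \cite{littleadvice} for $\SEL{\DISJ_n}{k}$, essentially running backwards the reduction from semi-adaptive data structures to NOF protocols (cf.\ the proof of \pref{thm_dynamic_multiphase_informal}). The only features of that protocol $\Gamma^{\mathrm{CEEP}}=(U,\Pi)$ I would invoke are: (i) Charlie, who holds $(\vec S,T)$, sends an advice string $U=U(\vec S,T)$ of $\tilde{O}(\sqrt{n})$ bits to Bob; (ii) after that, Bob (holding $T,U,i$) and Alice (holding $S_i,i$) exchange a further $\tilde{O}(\sqrt{n})$ bits and output $\DISJ_n(S_i,T)$, with Alice's messages never depending on $S_{-i}$, so that $\Gamma^{\mathrm{CEEP}}$ is a \emph{restricted} protocol in the sense of \pref{sec:nof}; and (iii) the interaction has a bounded (namely two) number of rounds, so Bob's contribution is a single short message to Alice, which — in the ``$1.5$-round'' form of \cite{littleadvice} — can be taken to depend only on $(T,U)$, not on $i$.

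Given this, I would build the data structure as follows. In Phase~\Romannum{1} store $\vec S$ (verbatim, or hashed for speed) in $\cM$; this is needed only so that Phase~\Romannum{2} can compute $U$. In Phase~\Romannum{2}, on input $T$, compute the advice $U=U(\vec S,T)$ together with the single message $M=M(T,U)$ that Bob would send, and write the bundle $(U,M)$ — a total of $\tilde{O}(\sqrt{n})$ bits, i.e.\ $\tilde{O}(\sqrt{n}/w)$ cells — into $\Delta(\cM,T)$. In Phase~\Romannum{3}, on query $i$, read $S_i$ for free, probe the $\tilde{O}(\sqrt{n}/w)$ cells of $\Delta$ holding $(U,M)$, and evaluate Alice's side of $\Pi$ on $(S_i,i,M)$ to output $\DISJ_n(S_i,T)$. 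One may take $t_1=0$ (the query never touches $\cM$), so $t_q=t_2=\tilde{O}(\sqrt{n})$, and the set of Phase~\Romannum{2}-modified cells satisfies $|\Delta(\cM,T)|=|U|=\tilde{O}(\sqrt{n})$, as claimed; since the update writes exactly the cells the query reads, the data structure is semi-adaptive in the sense of \pref{def:semiadaptive}, and it recovers the lower bound of \pref{thm:mainLB} up to logarithmic factors.

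The step I expect to need care is (iii): confirming that the Alice--Bob phase of \cite{littleadvice} really does alternate only a bounded number of times, and that Bob's dependence on $T$ across that phase is confined to messages computable from $(T,U)$ \emph{without} knowledge of $i$. If Bob's messages depended on $i$, or adapted to $T$ over many rounds, the translation would break, since Phase~\Romannum{3} has no access to $T$ beyond what Phase~\Romannum{2} wrote; in that case one would instead have to argue that Bob depends on $T$ only through an $i$-independent sketch of $\tilde{O}(\sqrt{n})$ bits, store that sketch in $\Delta$, and ``replay'' Bob's side locally at query time with $i$ plugged in. The remaining work — converting the $\tilde{O}(\sqrt{n})$ bit bounds to $\tilde{O}(\sqrt{n}/w)$ word bounds, and checking that the (zero- or small-) error guarantee of $\Gamma^{\mathrm{CEEP}}$ carries over verbatim — is routine.
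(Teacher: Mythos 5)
Your high-level strategy — run the reduction from data structures to NOF protocols backwards, starting from the $\tilde{O}(\sqrt{n})$ protocol of \cite{littleadvice} — is the same one the paper uses, and your intuition about what features of that protocol matter is mostly right. But your concrete construction has a genuine gap in Phase~\Romannum{3}: the query algorithm cannot output $\DISJ_n(S_i,T)$ from $(S_i,i,U,M)$ alone, because in the 1.5-round protocol the \emph{output} is computed by Bob (from $U,T,i,\Pi_i^{A\to B}$), not by Alice, and Bob's final step genuinely uses $T$ in a way that cannot be collapsed into the message $M=U'$. Concretely, the CEEP protocol has Alice deduce from $U$ and $S_i$ a candidate set $P\subseteq S_i$ of size $\leq \sqrt{n}$; the answer is then $\DISJ_n(P,T)$, i.e.\ a membership check of the elements of $P$ against $T$. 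This check needs the entries of $T$ at the (query-dependent, $i$-dependent) positions in $P$, so it cannot be precomputed in Phase~\Romannum{2} or summarized by an $i$-independent $\tilde{O}(\sqrt{n})$-bit sketch — your proposed fallback fails for the same reason. The paper's resolution is to also write $T$ itself into $\Delta(\cM,T)$ during Phase~\Romannum{2}, and have the querier first read $U$ (non-adaptively, $|U|/w$ probes), compute $P$, and then adaptively read the $\leq\sqrt{n}$ entries of $T$ indexed by $P$; this is exactly the single layer alternation permitted by the semi-adaptive model, and gives $t_q = |U|/w + |P| = \tilde{O}(\sqrt{n})$. Your bundle $(U,M)$ is also slightly redundant — $M=U'$ is a sub-portion of $U$ — but that is cosmetic; the missing ingredient is storing $T$ and letting the query read the $P$-indexed positions of it.
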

\begin{proof}
Consider $U(\vec{S},T)$ constructed by Chattopadhyay et al. \cite{littleadvice}. They construct $U(\vec{S},T)$ with $|U| \leq \tilde{O}(\sqrt{n})$ independent of $i$ that has the following nice property:
If Alice learns $U$, Alice (just with $i$ and $S_i$) can decode that either $|S_i \cap T| > 1$, therefore $\DISJ_n(S_i,T) = 0$; or learns about potential set of elements in $P \subset (S_i \cap T)$ with $|P| \leq \sqrt{n}$. Alice can transmit this set $P$ using $\sqrt{n} \log n$ bits to Bob, then Bob can announce the answer as $\DISJ_n(S_i,T) = \DISJ_n(P,T)$.


We can translate their protocol to data structure if the querier has access to both $i$ and $S_i$ in the beginning. 

\begin{enumerate}
\item During the update phase, the data structure writes $U$ and $T$ separately. 
\item Querier ($i, S_i$) reads all cells with $U$ using $|U|/w$ queries, then either learns that $\DISJ_n(S_i,T) = 0$ or learns about the candidate set $P \subset S_i$. 
\item Querier checks $\DISJ_n(P,T)$ by accessing corresponding entries for $T$. 
\end{enumerate}

Since $|P| \leq \sqrt{n}$, the total number of queries made is $|P| + (|U| / w ) = \sqrt{n} + (|U| / w ) = \tilde{O}(\sqrt{n})$. Furthermore, the query is semi-adaptive. Querier {\em only} accesses updated cells. The total number of alternation is 1.
\end{proof}

\subsection{Lower Bound} \label{sec:appendixLB}

In this section, we showed that a modified 4-party model subsumes 1.5-round protocol. Recall that a ``1.5-round protocol" \cite{littleadvice} for $\SEL{\DISJ_n}{k}$ proceeds in the following way: 

\begin{enumerate}
\item Charlie sends message $U(\vec{S},T)$ to Bob privately.
\item Bob \emph{forwards} a message $U'(\vec{S},T) \subseteq U$ to Alice (hence this message is \emph{independent} of $i$).
\item Alice sends message $\Pi^{A \rightarrow B}_i$ to Bob, which is dependent on $U', \vec{S}$ and $i$.
\item Bob solves $\DISJ_n(S_i,T)$ from $U, T, i$ and $\Pi^{A \rightarrow B}_i$.
\end{enumerate}

Then \cite{littleadvice} (Theorem 1.2) show a 1.5-protocol for $\SEL{\DISJ_n}{k}$ with $\tilde{O}(\sqrt{n})$ overall communication. They then complement this upper bound with a lower bound when restricted to a 1.5-protocol.
In general, this model is incomparable to our 4-party model. 
However, a 1.5-round protocol can be simulated by a modified 4-party protocol where Charlie is allowed to send Megan a message $U'$ prior to Megan sending messages to Alice and Bob. Formally the 4-party protocol proceeds in following manner

\begin{figure}[h!]
    \centering
    \includegraphics[scale=0.5]{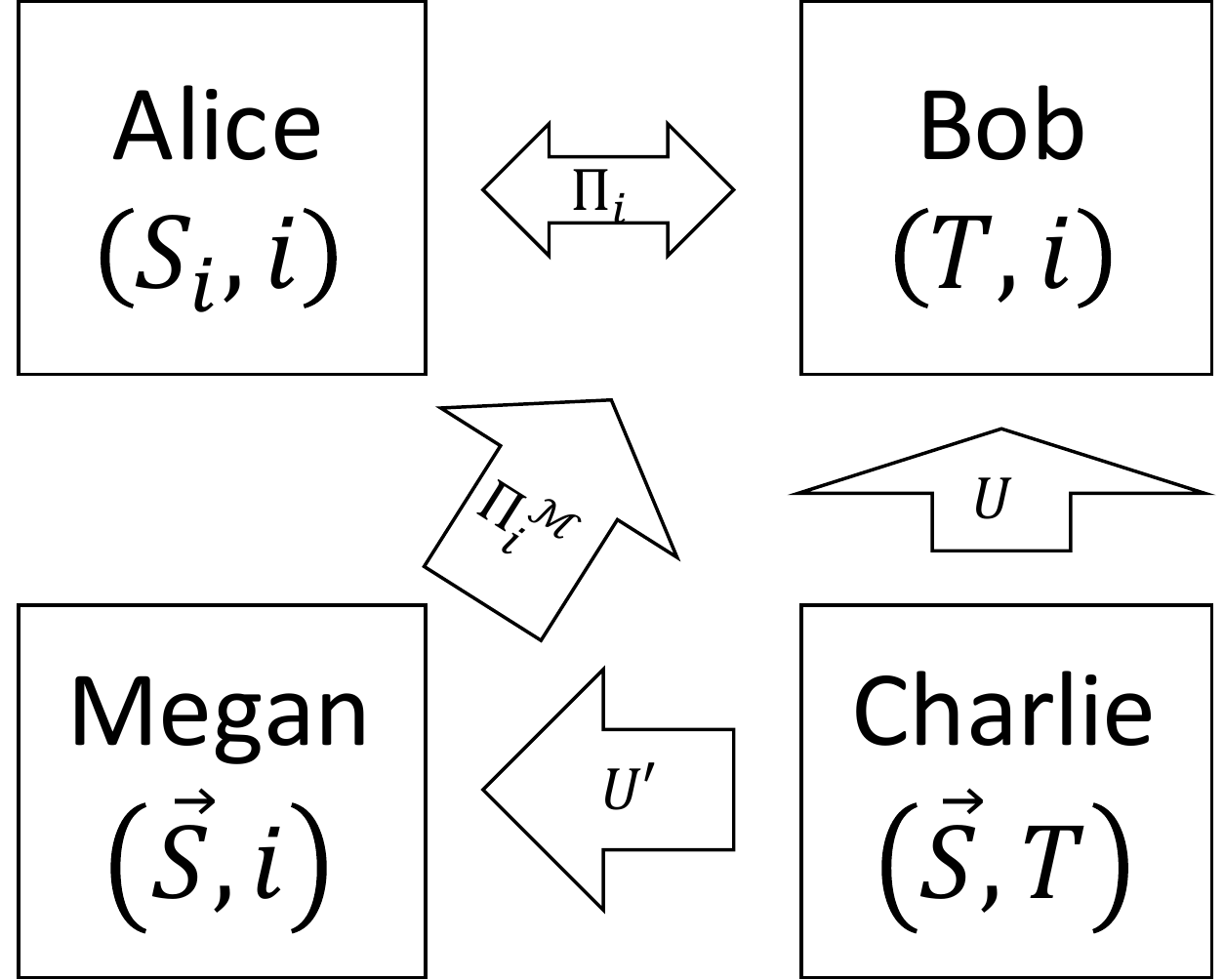}
    \caption{Modified 4-party NOF Communication}
    \label{fig:4party} 
\end{figure}

\begin{enumerate}
\item Charlie sends message $U(\vec{S},T)$ to Bob privately.
\item Charlie sends message $U'(\vec{S},T)$ to Megan.
\item Megan broadcasts $\Pi_i^\cM (\vec{S},i, U')$ to Alice and Bob
\item Alice and Bob communicates and compute $\DISJ_n(S_i,T)$.
\end{enumerate}

Now Megan's message is allowed to depend on $i$, $\vec{S}$ and $U'$. 

\begin{claim}
A 1.5-round protocol where $|U'| < C$ and $|\Pi^{A \rightarrow B}_i| < C$ can be simulated by modified 4-party NOF communication with $|\Pi_i^\cM| < 2C$.
\end{claim}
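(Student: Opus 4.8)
The goal is to show that a 1.5-round protocol, as defined in the four steps above, can be faithfully reproduced inside the modified 4-party model (Figure~\ref{fig:4party}) without blowing up the communication by more than a constant factor. The plan is to match up the players of the two models step by step: Charlie plays Charlie in both, Megan in the modified 4-party model takes over the role that Alice plays in the 1.5-round protocol (both have access to $\vec{S}$ and $i$), and Bob plays Bob. The one subtlety is that in the 1.5-round protocol, Alice's message $\Pi_i^{A\rightarrow B}$ depends on the \emph{forwarded} advice $U'$, which Alice receives from Bob; in the modified 4-party model there is no Bob-to-Megan channel, but Charlie is allowed to send $U'$ directly to Megan. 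Since $U'(\vec{S},T)$ is a function of Charlie's input $(\vec{S},T)$ (in the 1.5-round protocol it is a sub-message of $U(\vec{S},T)$ that Bob merely forwards verbatim), Charlie can compute it himself and send it to Megan directly. This is exactly why the modified model was set up with the extra Charlie-to-Megan edge.

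\medskip

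The steps I would carry out, in order: (1) Given a 1.5-round protocol $(U, U', \Pi_i^{A\rightarrow B})$ with $|U'| < C$ and $|\Pi_i^{A\rightarrow B}| < C$, define the modified 4-party protocol by: Charlie sends $U(\vec{S},T)$ privately to Bob (unchanged), and Charlie also sends $U'(\vec{S},T)$ to Megan. (2) Megan, who now has $\vec{S}$, $i$ and $U'$, computes precisely the message that Alice would have sent in the 1.5-round protocol, $\Pi_i^{A\rightarrow B}(U',\vec{S},i)$, and \emph{broadcasts} it as $\Pi_i^{\cM}$ to both Alice and Bob. (3) Bob, who now holds $U$, $T$, $i$ and the broadcast message $\Pi_i^{\cM} = \Pi_i^{A\rightarrow B}$, outputs $\DISJ_n(S_i,T)$ exactly as Bob does in step~4 of the 1.5-round protocol. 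In this simulation, Alice and Bob's subsequent 2-party exchange $\Pi^{A\leftrightarrow B}$ is empty, so the entire $\Pi_i$ consists of Megan's broadcast. (4) Bound the cost: $|\Pi_i^{\cM}| = |\Pi_i^{A\rightarrow B}| < C < 2C$, establishing the claim. (One could alternatively fold $U'$ into Megan's message if one wished Megan's transcript to literally contain everything she used, giving $|\Pi_i^{\cM}| \le |U'| + |\Pi_i^{A\rightarrow B}| < 2C$, which explains the factor of $2$ in the statement.)

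\medskip

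The main thing to verify carefully — though it is not really an obstacle so much as a bookkeeping check — is that this is a \emph{valid} protocol in the modified 4-party model, i.e., that each message depends only on the information available to the player who sends it. Charlie's messages $U$ and $U'$ depend only on $(\vec{S},T)$, which Charlie has. Megan's broadcast depends on $\vec{S}$, $i$, and $U'$, all of which Megan now possesses (the first two on her forehead, the last received from Charlie). And the final output depends only on $U$, $T$, $i$ and the broadcast, all held by Bob. The only conceptual point worth spelling out is the correctness argument: because Megan reconstructs \emph{exactly} Alice's message from the same inputs Alice would have used (here we use that $U'$ is forwarded verbatim by Bob, so Alice's view in the 1.5-round protocol is literally $(U',\vec{S},i)$, which is a sub-collection of Megan's view), the transcript seen by Bob in the simulation is distributed identically to the transcript Bob sees in the original 1.5-round protocol, and hence Bob's output is correct with the same probability. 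This completes the simulation and, combined with \pref{thm:mainLB}, recovers the $\tilde\Omega(\sqrt n)$ lower bound of~\cite{littleadvice} as a corollary, as claimed in the surrounding text.
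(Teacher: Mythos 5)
Your proof is correct and follows essentially the same route as the paper: Charlie forwards $U'$ to Megan, Megan reconstructs Alice's 1.5-round message from $(\vec{S},i,U')$ and broadcasts it (the paper has Megan also append $U'$, giving the $2C$ bound, whereas you note this is optional since Bob already holds $U\supseteq U'$), and Bob decodes the answer with no further Alice--Bob interaction.
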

\begin{proof}
	Under this model, 4-party can then simulate 1.5-protocol via Megan simulating Alice in 1.5-protocol by sending $\Pi_i^{A \rightarrow B}$ and $U'$ as her message $\Pi_i^\cM$. Bob can then decode $\DISJ_n(S_i,T)$ from Megan's message, with no communication between Alice and Bob.
\end{proof}

Now the question is whether the same lower bound for modified 4-party NOF holds as well. Below we argue that our lower bound (\pref{thm:multiphase_informal}) in fact applies to 4-party protocols with Charlie sending a short message to Megan hence also to 1.5-protocols, establishing that our lower bound subsumes \cite{littleadvice} lower bound. 

\begin{theorem}
Any modified 4-party NOF protocol $\Gamma = (U, \Pi_i)$ with $\Pi_i = (U', \Pi_i^\cM, \Pi_i^{A \leftrightarrow B})$ that solves $\SEL{\DISJ_n}{k}$ with $|U| < o(k)$ require $|\Pi_i| > \Omega( \sqrt{n} )$. 	
\end{theorem}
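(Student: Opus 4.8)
The goal is to show that the lower bound of \pref{thm:mainLB} is robust to the extra power of Charlie privately informing Megan with a short message $U'$ before Megan broadcasts. The plan is to follow the exact same three-step pipeline as for the unmodified 4-party model --- namely (i) extract a low-correlation, low-information random process $Z^{\DISJ}$ for single-copy disjointness (\pref{lem:goodq}), (ii) scale it down via the direct-sum embedding to get $Z^{\AND}$ (\pref{lem:reduction}), and (iii) derive a contradiction with the robust Cut-and-Paste bound (\pref{lem:andbound}) --- but with $U'$ absorbed into the ``Megan side'' of the protocol throughout. Concretely, I would redefine $Z^{\DISJ} := \Pi_{I_{\bl}}\, \Pi_{I_{<\bl}}^{\cM}\, U'\, \cP\, \bl$, i.e., include $U'$ as part of the conditioning transcript alongside Megan's messages. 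The key structural observation is that $U'$, like $\vec{S}$, is a function of $(\vec{S},T)$ that Megan can see but Alice's later messages cannot depend on except through the transcript --- so $U'$ plays the same role as $S_{i_{<\ell}}$ in the earlier proof.

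\textbf{Adapting the three lemmas.} For the analogue of \pref{cl:pathsampling}, I would bound $I(S_{I_{\bl}}; S_{I_{<\bl}}\Pi_{I_{<\bl}}^{\cM} U' T, \cP, \bl)$. The only new term is $U'$; since $|U'| \le |U| < o(k)$ and $U'$ is a single message independent of the index $\bl$, the chain-rule argument goes through essentially unchanged, giving a bound of $\frac{pC}{k-p} + o(k)/(\text{something}) $ --- I would need to check that adding $H(U') \le o(k)$ bits, amortized correctly over the $p$ coordinates as in \pref{cl:smallcorr}, still yields $o(C)$ or at worst $O(|U'T|/p)$; this is where a slightly more careful accounting than the original is needed, but it is the same flavor as \pref{cl:smallcorr}. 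For \pref{cl:roundelimination}, the crucial point that Alice's round-$\tau$ message is a function only of $S_{i_\ell}, \Pi_{i_\ell}^{\cM}, \Pi_{i_\ell}^{<\tau}$ still holds verbatim (Megan's broadcast now also depends on $U'$, but that is fine --- it is just more data on the ``$\vec{S}$-side''). Bob's messages are still a function of $UT$ and the transcript. So the induction establishing $I(S_{i_\ell}; T \mid \Pi_{i_\ell}, \ldots) \le I(S_{i_\ell}\Pi_{i_\ell}^{\cM}; UT \mid S_{i_{<\ell}}\Pi_{i_{<\ell}}^{\cM} U', \cP, \bl)$ carries over. Then \pref{cl:smallcorr} needs $U'$ added to the conditioning, and the telescoping over $\ell \in [p]$ gives $I(S_{i_{\leq p}}\Pi^{\cM}_{i_{\leq p}} U'; UT \mid \cP)/p \le |UU'T|/p = o(k)/p$; choosing $p = o(k)$ but $p = \omega(|UU'T|)$ keeps the correlation $o(1/p)$ as needed for the final contradiction.

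\textbf{Finishing.} With the modified \pref{lem:goodq} in hand, the direct-sum step (\pref{lem:reduction}) is literally unchanged --- it only uses the output of \pref{lem:goodq} as a black box, embedding $X,Y$ into a random coordinate $\bj \in [n]$ and dividing all three quantities by $n$; the presence of $U'$ inside $Z^{\DISJ}$ is immaterial since $U'$ is independent of the embedded coordinate given the rest. Finally, \pref{lem:andbound} applies verbatim with $\gamma = \frac{1}{1000\sqrt{n}}$, yielding the contradiction exactly as in the proof of \pref{thm:mainLB}. The only place where I expect real work is verifying that the extra $|U'| $ bits do not blow up the bound \eqref{eq:smallS}/\eqref{eq:smallx}: since $U'$ can carry up to $o(k)$ bits and $Z^{\DISJ}$ now contains it, one must argue $I(Z^{\DISJ} T; S_{I_{\bl}})$ is still $O(C + o(C))$ --- this works because $U'$ enters only through $S_{I_{<\bl}}\Pi_{I_{<\bl}}^{\cM}U'$, whose mutual information with the fresh i.i.d.\ set $S_{I_{\bl}}$ is controlled by $H(\Pi^{\cM}_{i_{<\ell}} U' \mid S_{i_{<\ell}}, T)/(k-\ell+1) \le (\ell C + |U'|)/(k-\ell+1) = o(C)$ for $p = o(k)$ and $|U'| = o(k)$. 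This is the main obstacle, and it is resolved by the same averaging trick already used in \pref{cl:pathsampling}, just tracking the $|U'|$ term. Once this is checked, the theorem follows, establishing that our lower bound subsumes the 1.5-round lower bound of \cite{littleadvice}.
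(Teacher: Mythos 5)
Your proposal is correct and runs the same three-step pipeline as the paper (extract a low-correlation, low-information $Z^{\DISJ}$, scale down via direct sum to $Z^{\AND}$, apply the robust Cut-and-Paste bound), but it takes a noticeably more laborious implementation of the first step. You augment $Z^{\DISJ}$ with $U'$ and then re-verify all three supporting claims (\pref{cl:pathsampling}, \pref{cl:smallcorr}, \pref{cl:roundelimination}) with the extra conditioning. The paper's proof instead leaves $Z^{\DISJ}$ completely unchanged and simply observes that the only place the entire proof of \pref{lem:goodq} used the ``Megan speaks first'' assumption is \pref{eq:point_of_deviation}: in the modified model one no longer has $I(S_{i_{<\ell}}\Pi^{\cM}_{i_{<\ell}};T)=0$, but it is bounded by $|U'|$ via $I(T;\vec S,\Pi^{\cM}_{i_{<\ell}},U')=I(T;U'|\vec S)\le |U'|$ (since $\Pi^{\cM}_{i_{<\ell}}$ is determined by $\vec S,U'$ and the indices). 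Since $|U'|\le C$, this just changes \pref{eq:smallT} from $C$ to $2C$, and the remaining claims and \pref{lem:reduction}, \pref{lem:andbound} carry over verbatim with the same constant-factor slack. Isolating the single step that breaks is what makes the paper's argument clean, and you would benefit from noticing it.

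One minor misconception worth flagging: you repeatedly worry that ``$U'$ can carry up to $o(k)$ bits.'' Under the theorem's definition, $U'$ is part of $\Pi_i$ (i.e., $\Pi_i=(U',\Pi_i^{\cM},\Pi_i^{A\leftrightarrow B})$), so in the contrapositive one assumes $|\Pi_i|<C=o(\sqrt n)$ and therefore $|U'|\le C=o(\sqrt n)$, not $o(k)$. Your more conservative accounting ($|UU'T|/p$ instead of $|UT|/p$, $(\ell C+|U'|)/(k-\ell+1)$ with $|U'|=o(k)$) still goes through after choosing $p$ appropriately, but it obscures that $U'$ is in fact tiny compared with $U$, and that is exactly why the paper can absorb it into a single additive $C$ term rather than touching the $p$-averaging arguments at all.
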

\begin{proofsketch}
This follows from the observation that in such 4-party protocols, \pref{lem:goodq} still holds (up to factor 2) and rest of the proof remains unchanged.
To see why, observe that since $U'$ from step 2 does not depend on the index $i$, Equation \pref{eq:point_of_deviation} in the proof of 
\pref{thm:multiphase_informal} 
can be bounded instead by the following inequality:  

\begin{align*}
& I(T; S_{i_{ < \ell}}, \Pi^{\cM}_{i_{ < \ell}}) \leq I(T; \vec{S}, \Pi^{\cM}_{i_{ < \ell}}, U') \\
& = \underbrace{I(T; \vec{S})}_{=0} + \underbrace{I(T; U'|\vec{S})}_{|U'|} + \underbrace{I(T; 	\Pi^{\cM}_{i_{ < \ell}} | \vec{S}, U')}_{=0} \leq |U'|, 
\end{align*}
where $I(T; \Pi^{\cM}_{i_{ < \ell}} | \vec{S}, U') = 0$ since $U', \vec{S}$ and the index $i$ determine $\Pi^{\cM}_{i}$.

Equation \pref{eq:point_of_deviation} is the only step where the proof used the assumption that Megan speaks first. Now in a 4-party protocol, if we assume further that $|U'| < C$, then by \pref{eq:point_of_deviation}, instead of \pref{eq:smallT} in \pref{lem:goodq}, we get the same bound up to factor 2, i.e.,   $I(Z^{DISJ}; T) < 2C$. Then rest of the proof remains unchanged, hence it shows a $C \geq \Omega( \sqrt{n} )$ lower bound for 4-party protocols. 
\end{proofsketch}

As such, the $\tilde{O}(\sqrt{n})$ 1.5-protocol of \cite{littleadvice} shows that our \pref{thm:multiphase_informal} is in fact tight up to logarithmic factors.

\end{document}